\newcommand{\E}{\ensuremath{\mathbb E}}
\newcommand{\p}{\ensuremath{\mathbb P}}
\newcommand{\Z}{\ensuremath{\mathbb Z}}
\newcommand{\cM}{\mathcal{M}}
\newcommand{\R}{\mathbb{R}}
\newcommand{\cP}{\mathcal{P}}
\newcommand{\cV}{\mathcal{V}}
\newcommand{\union}{\cup}
\newcommand{\tri}{\oplus}
\newcommand{\ubar}{\overline{u}}
\newcommand{\xbar}{\overline{x}}
\newcommand{\pleq}{\preccurlyeq}
\def\bemph{\em}
\def\e{{\rm e}}
\newcommand{\m}{\cM}
\newcommand{\mnn}{\cM_{{nn}}}
\newcommand{\mt}{\cM_{{T}}}
\newcommand{\mcol}{\cM_{{col}}}
\newcommand{\condcomment}[2]{\ifthenelse{#1}{#2}{}}
\theoremstyle{plain}
\newtheorem{theorem}{Theorem}[section]
\newtheorem{definition}{Definition}[section]
\newtheorem{lemma}[theorem]{Lemma}
\newcommand{\mtil}{\cM_{{B}}}
\newcommand{\mmon}{\cM_{{B}}}
\newcommand{\mfluct}{\cM_{{B}}}
\def\bigpar{\bigbreak\@afterindentfalse\@afterheading\ignorespaces}
\def\medpar{\medbreak\@afterindentfalse\@afterheading\ignorespaces}
\def\smallpar{\smallbreak\@afterindentfalse\@afterheading\ignorespaces}
\newlength{\saveindent}
      {\bigpar{\bf Proof.}\ 
             \setlength{\saveindent}{\parindent}
                       \ignorespaces}%
      {\stopproof\ignorespaces\bigbreak \setlength{\parindent}{\saveindent}}
\newenvironment{Proofof}[1]%
      {\bigpar{\bf Proof of #1.}\ %
             \setlength{\saveindent}{\parindent}
                       \ignorespaces}%
      {\stopproof\ignorespaces\bigbreak \setlength{\parindent}{\saveindent}}
\def\endex{\stopproof\end{example}}
\def\square{\vbox{\hrule height.2pt\hbox{\vrule width.2pt height5pt \kern5pt
                                   \vrule width.2pt} \hrule height.2pt}}
\def\stopproof{\qquad\square}
\newcommand{\comment}[1]{}
\title{Sampling Biased Monotonic Surfaces using Exponential Metrics\footnote{A preliminary version of this paper appeared in the Proceedings of the 20th ACM/SIAM Symposium on Discrete Algorithms, 76--85, 2009.}}
\author{Sam Greenberg\thanks{Department of Defense, Arlington, VA.}
\and Dana Randall\thanks{School of Computer Science, Georgia Institute of Technology,
Atlanta GA, 30332-0765. Supported in part by NSF grants CCF-1526900, CCF-1637031 and CCF-1733812.}
\and Amanda Pascoe Streib\thanks{Center for Computing Sciences, Bowie, MD 20715.}
}
\begin{document}
\date{}
\maketitle

\thispagestyle{empty}  
\begin{abstract}
Monotonic surfaces spanning finite regions of $\Z^d$ arise in many contexts, including
DNA-based self-assembly, card-shuffling and lozenge tilings.   One method that has been used to uniformly generate these surfaces 
is a Markov chain that iteratively adds or removes a single cube below the surface during a step.  
We consider a biased version of the chain, where we are more likely to add a cube than to remove it, thereby favoring surfaces that are ``higher'' or have more cubes below it.  We prove that the chain is rapidly mixing for any uniform bias in $\Z^2$ and for bias $\lambda > d$ in $\Z^d$ when $d>2$.  In $\Z^2$ we match the optimal mixing time achieved by Benjamini et al.  in the context of biased card shuffling \cite{BBHM05}, but using much simpler arguments.  The proofs use a geometric distance function and a variant of  path coupling in order to handle distances that can be exponentially large.   We also provide the first results in the case of \emph{fluctuating bias}, where the bias can vary depending on the location of the tile.   We show that the chain continues to be rapidly mixing if the biases are close to uniform, but that the chain can converge exponentially slowly in the general setting. 

\end{abstract}

\section{Introduction}\label{BRintro}
In this paper, we are concerned with designing provably efficient algorithms for sampling from a family of discrete monotonic surfaces, where we bias the distribution to favor surfaces that are ``higher.'' There is a long history of sampling from various families of monotonic surfaces because many natural combinatorial problems are known to have an associated height function that can be interpreted as a piecewise-linear surface.  
In statistical physics, for example, domino tilings of the Cartesian lattice and lozenge tilings of the triangular lattice are natural models of diatomic molecules that have associated 3-dimensional height functions. Similarly, states of the zero temperature Potts model, a popular model of ferro-magnetism, have a height function that maps 3-colorings  in $\Z^{d-1}$ to surfaces in $\Z^{d}$.
In each case, random sampling provides insight into the thermodynamic properties of these physical systems. We focus on surfaces that are unions of $d-1$ dimensional faces of the Cartesian lattice $\Z^d$ and discuss extensions to these other natural families at the end of the paper.

In two dimensions, monotonic surfaces, called \emph{staircase walks}, are paths within a finite region of the lattice $\Z^2$ that step to the right and down at every edge (see Figure~\ref{MonSurf}a).  
Markov chains for sampling staircase walks have been used to analyze card-shuffling algorithms by associating to a permutation a set of staircase walks~\cite{wilson, BBHM05}.  One simple Markov chain $\cM_U$ for sampling uniformly from the set of staircase walks, known as the ``mountain / valley chain,'' tries to invert a mountain that moves to the right and then down to a valley that goes down and then to the right, or vice versa.  This chain has also been studied in the context of  \emph{Dyck paths}, or staircase walks that start at $(0,n)$, end at $(n, 0)$ and do not cross below the line $x+y=n$.  Dyck paths are enumerated by the $n$th Catalan number, and $\cM_U$ has proven useful for sampling from these and other Catalan structures \cite{MT98}.  Wilson \cite{wilson} gave tight bounds on the convergence rate of the mountain / valley chain in the general case and in the case of Dyck paths by showing that in both cases it mixes in time $\Theta(n^3 \log n)$.

\begin{figure}[ht]
\centering
\includegraphics[scale=.42]{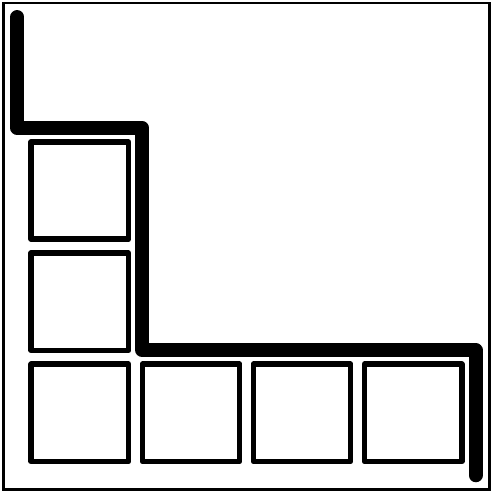}\put(-36,-12){(a)}
\hspace{1in}
\includegraphics[scale=.33]{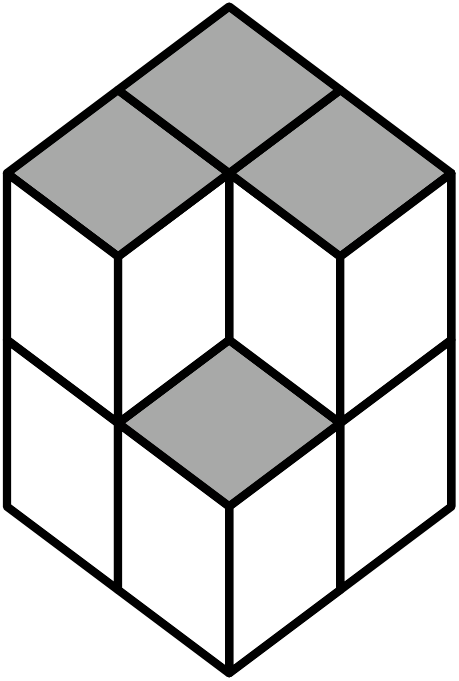}
\put(-29,-12){(b)} 
\caption{Monotonic surfaces in two and three dimensions.}
\label{MonSurf}
\end{figure}

Monotonic surfaces in ${\Z}^3$ correspond bijectively with {\it lozenge tilings} of finite regions of the triangular lattice \cite{lrs}.  A lozenge tiling is a covering of the lattice region with {\it lozenges}, or rhombuses that cover two adjacent triangular faces.  Thus, a lozenge tiling is just a perfect matching (or dimer covering) in the dual graph.  When we look at a two-dimensional picture of a lozenge tiling, our eyes automatically interpret the picture as a surface that is the upper envelope of a set of supported cubes in $\Z^3$ (see Figure~\ref{MonSurf}b).  We call this set of cubes $\sigma$ a \emph{downset}, since the set is downwardly-monotonic, and we let $M(\sigma)$ denote the monotonic surface formed by $\sigma$; clearly these surfaces and downsets are also in bijection.  

A natural Markov chain for uniformly sampling lozenge tilings tries to identify three closely packed lozenges forming a hexagon and rotates them by 180 degrees.  Equivalently, we interpret this move quite naturally using the bijection with surfaces and this move corresponds to perturbing the surface locally by adding or removing a single cube, so this is precisely $\cM_U$ in $\Z^3$.  It is easy to see that this chain connects the state space because, starting from any configuration, we can remove cubes until we reach the ``empty configuration.''  This Markov chain is known to be rapidly mixing, or quickly converging to equilibrium, so it is effective for efficiently generating samples from close to the uniform distribution \cite{lrs, rt, wilson, cmt}.

There also  has been interest in a {\it biased} version $\mmon$ of this local Markov chain, where we are 
more likely to add unit cubes than remove them.   More precisely, if $\sigma$ is formed from~$\tau$ by the addition of a single cube at
position $\bar{x}$, then  $\lambda_{\bar{x}} = P(\sigma, \tau)/P(\tau, \sigma)$ is the {\it bias at $\bar{x}$}.  If~$\lambda_{\bar{x}} > 1$ for every $\bar{x}$, then the stationary distribution favors configurations with more cubes.

Biased surfaces arise in nanoscience in the context of DNA-based self-assembly growth models (see, e.g., \cite{fs, seeman, winfree, wys}).  In this setting, roughly ``square'' shaped tiles are constructed from strands of DNA so that each side of the tiles is single-stranded. Certain pairs of  tiles are encouraged to line up and attach along edges by encoding corresponding sides with complementary sequences of base pairs. At appropriately chosen temperatures, these tiles will have a good chance of assembling according to these prescribed rules, although they also have a chance of disassociating and breaking apart.  Majumder et al.~\cite{reif+} consider a DNA self-assembly model that allows the left column and bottom row of a large square to first form, and then iteratively allows tiles to associate with the large substrate if their left and bottom neighbors are already present (see Figure~\ref{MonSurf}a). Likewise, tiles can disassociate if their upper and right neighbors are not present, although disassociation happens at a lower rate. The dynamics of this model are precisely captured by the local Markov chain $\mmon$ on 2-dimensional monotonic surfaces and the chain must be rapidly mixing if the substrate is to efficiently self-assemble, as required. The 3-dimensional analogue has also been used to study self-assembly, where now tiles are shaped like cubes (as in Figure~\ref{MonSurf}b) and complementary sequences are used to encourage specified pairs of faces to attach.  
The problem of generating biased surfaces in two-dimensions was  independently studied in the context of biased card shuffling, where we allow nearest-neighbor transpositions but favor putting each pair in order at each step \cite{BBHM05}.

Previous work has focused primarily on the case where the biases are {\it uniform}; that is, $\lambda_{\bar{x}}=\lambda$ for every $\bar{x}$, for which the stationary probability will be proportional to $\lambda^{|\sigma|}$, where $|\sigma|$ is the number of unit cubes lying below the surface $\sigma$.  
In two dimensions, the uniform bias Markov chain is equivalent to an asymmetric exclusion process, which Benjamini et al. \cite{BBHM05} studied in order to analyze a biased card shuffling algorithm that favors putting each pair of cards in the lexicographically correct order.  They give a bound of $\Theta(n^2)$ on the mixing rate of the biased chain on $h \times w$ regions of $\Z^2$ (where $h+w=n$) for any uniform bias $\lambda>1$ that is a constant.  The bounds are optimal when $h=w=n/2$.  For the three dimensional variant, Caputo et al.~\cite{cmt} recently proved that the biased chain mixes in time $\widetilde{O}(n^3)$ for any constant bias $\lambda$, where the $\widetilde{O}$ notation suppresses logarithmic factors.  
In dimensions $d>3$ almost nothing is known, in either the biased or unbiased settings. 

\subsection{Our results}\label{results}
We make progress in several aspects of the problem of sampling biased surfaces. 
In two dimensions we show that the biased chain is rapidly mixing for any uniform bias on a large family of simply-connected regions, even when the bias is arbitrarily close to one.  Our proof is significantly simpler than the arguments of Benjamini et al., while achieving the same optimal bounds on the mixing time for square regions when the bias is constant. In fact, on rectangular $h\times w$ regions of $\Z^2$, where $h\leq w$, we get improved bounds of $O(w(h+\ln w))$, which is optimal when $h=\Omega(\ln w)$.
We also show the chain is rapidly mixing on $d$-dimensional lattice regions provided the bias $\lambda \geq d^2$. Again, our bounds on the mixing time are optimal when the regions are hyper-cubes, and show the chain is rapidly mixing for a large family of  simply-connected regions.  
The key observation underlying these results is that there is an exponential metric on the state space such that the distance between pairs of configurations is always decreasing in expectation. 
We then show how to modify the path coupling theorem to handle the case when the distances are exponentially large and the expected change in distance is small during moves of the coupled chain.  Previously, Berger et al.~\cite{bkmp} also used an exponential metric in the context of Glauber dynamics on trees, appealing to the multiplicative version of the path coupling theorem originally given by Bubley and Dyer~\cite{BubDyer}.  Our version of the path coupling theorem makes explicit when we can use exponential metrics to bound convergence times.
We believe that this new theorem is of independent interest, and it has already been used, for example,  in the context of sampling lattice triangulations \cite{cmss} and rectangular dissections \cite{cmr}.

Last, we consider ``fluctuating bias,'' where the rate at which we add or remove a cube depends on its location.  This setting is quite natural for the self-assembly growth process where the tile at a particular location may have site-specific sequences along its bounding edges.  We show that the fast convergence results still hold whenever the biases are close to uniform over locations.  In fact, in two dimensions, we show that the chain mixes in time $O(n^2)$ even if the biases are not close together, as long as all of the biases are bounded away from one by a constant.  However, in the general setting we may see very different behavior.  We construct  an example where the convergence rate requires exponential time starting at any initial configuration.  In this example every move occurs with at least inverse polynomial probability and at each location it is at least as likely to add a cube as to remove.  This demonstrates that the behavior of these growth processes is quite complicated in the case of fluctuating bias.

The remainder of the paper is organized as follows. In Section~\ref{Path Coupling} we review the path coupling method and introduce the modified path coupling theorem that is more appropriate when distances are exponentially large. In Section~\ref{UnifSection}, we formalize the model and Markov chain and show how to bound the mixing time of the chain.  In Section~\ref{FluctSection}, we generalize these techniques to apply to the setting of fluctuating bias.  
Finally, in the last section we discuss other related problems, including sampling biased 3-colorings in ${\Z}^d$.

\section{The Markov chain}\label{BRmodel}
\label{tilSection}

We can now formalize our definition of monotonic surfaces and the  Markov chain that makes local updates to these surfaces. Throughout the majority of the paper we focus on surfaces arising in the context of staircase walks, lozenge tilings, and higher dimensional surfaces that bound downsets of unit boxes in $\Z^d$.  We discuss the generalizations to other families of monotone surfaces such as those arising from 3-colorings in $\Z^d$ in the conclusions.

We start by  first considering monotonic surfaces forming over simple, rectangular regions $R$ in $\Z^2$.  Later we will show that we can generalize these results to a family of simply-connected regions in $\Z^d$ that we call {\it nice}.  
The generalization is straightforward, but requires some careful notation, so for simplicity we postpone the details until Section~\ref{generalization}. 
Recall that in $\Z^2$, a  \emph{monotonic surface (or path)} in $R$ is a path starting and ending on the boundary of $R$ that only takes steps down or to the right and is composed entirely of edges with both endpoints in $R$.
Such a path is illustrated in Figure \ref{MonSurf}a when ${R}$ is a $4 \times 4$ square.  Notice that any monotonic surface can be interpreted as the upper boundary of a set of unit squares (which we call a downset), where each square in the set is supported below or to the left by other squares in the set or the boundary of $R$.   
Let $M$ be the bijection between downsets $\sigma$ and their corresponding monotonic surfaces $M(\sigma)$. 
We let the state space $\Omega_{mon}$ be the set of all downsets of $R$.


Now we can describe the Markov chain $\mmon$ on $\Omega_{mon}$.   For simplicity, 
we start by defining the {\it unbiased} chain $\cM_U$ that converges to the {\it uniform distribution} over monotonic paths 
$\Omega_{mon}$.
Start at an arbitrary downset, e.g., let $\sigma_0=R_L$, where $R_L$ is the empty downset, and repeat the following steps. If we are at a downset $\sigma_t$ at time $t$, pick a diagonal $d$ that is parallel to the vector $\ubar^*=(1,1)$ and which intersects the monotonic path at a vertex $v$.  Also, pick an integer $b \in \pm 1$ uniformly at random. If $b=+1$, add the cube above and to the right of the vertex $v$ to create $\sigma_{t+1}$, if this is a valid downset. If $b=-1$, let $\sigma_{t+1}$ be obtained from $\sigma_t$ by removing the cube below and to the left of $v$ if this is a valid downset.
In all other cases, keep $\sigma_t$ unchanged so that $\sigma_{t+1} = \sigma_t$.  

\begin{lemma}\label{conlemma} For any rectangular region $R$, the Markov chain $\cM_U$  connects the state space
$\Omega_{mon}$. 
\end{lemma}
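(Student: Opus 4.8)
The plan is to show connectivity by exhibiting, for every downset $\sigma \in \Omega_{mon}$, a sequence of valid moves of $\cM_U$ that transforms $\sigma$ into the empty downset $R_L$; since the moves of $\cM_U$ are clearly reversible (adding a cube above-right of a vertex $v$ is undone by removing the cube below-left of the corresponding vertex), this gives a path in the transition graph between any two states by routing both through $R_L$. So it suffices to argue that from any nonempty $\sigma$ there is a valid move that removes a cube.

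First I would set up the key structural fact: a nonempty downset always contains a \emph{removable} cube, i.e.\ a unit box $c \subseteq \sigma$ whose upper and right neighbors are not in $\sigma$ (equivalently, $c$ is a local maximum corner of the surface $M(\sigma)$, sitting below-and-left of some vertex $v$ on the path where the path makes a down-then-right turn). To see this, take the cube $c \in \sigma$ that is lexicographically largest, say maximizing the $y$-coordinate and then the $x$-coordinate of its upper-right corner; then the box directly above $c$ and the box directly to the right of $c$ both have larger lex rank, so neither lies in $\sigma$. Removing $c$ leaves a set that is still downwardly monotone: any box below or to the left of a remaining box was already forced to be present and $c$ was not below or left of any remaining box (else that box would have larger lex rank). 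Hence $\sigma \setminus \{c\}$ is a valid downset, and the move of $\cM_U$ that picks the diagonal through the vertex $v$ at the upper-right corner of $c$ with $b=-1$ realizes this transition with positive probability.

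Next I would assemble the argument: starting from $\sigma$, repeatedly remove a removable cube. Each such step strictly decreases $|\sigma|$, the number of unit boxes below the surface, which is a nonnegative integer, so after $|\sigma|$ steps we reach the empty downset $R_L$. This shows $R_L$ is reachable from every state; by reversibility of each move, every state is reachable from $R_L$, and therefore the transition graph of $\cM_U$ is connected. (One should also note every state communicates in both directions along this path because each individual add/remove move and its inverse are both moves of the chain, so the underlying graph is in fact connected as an undirected graph, which is what ``connects the state space'' means here.)

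The only mildly delicate point — the ``main obstacle,'' such as it is — is verifying that removing the lex-largest cube keeps the set downwardly monotone and corresponds to an \emph{allowed} chain move (that the exposed vertex $v$ really is an intersection of some diagonal parallel to $(1,1)$ with the path, and that the cube removed is the one ``below and to the left of $v$''). This is a routine case check on the local shape of the path at a down–right turn, so I would state it cleanly and not belabor it. Everything else is immediate from the definition of the chain and induction on $|\sigma|$.
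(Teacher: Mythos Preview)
Your proposal is correct and follows essentially the same approach as the paper: exhibit a removable cube in any nonempty downset, remove cubes one at a time to reach $R_L$, and use reversibility to connect any two states through $R_L$. The only cosmetic difference is that the paper selects the cube maximizing $\sum_i x_i$ rather than a lexicographically maximal one, but either choice works for the same reason.
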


\begin{proof}
Let $\sigma$ be any downset and let $x^{max}$ be any cube
in $\sigma$ such that $\sum_i x_i^{max}$ is maximized, if it exists.  We can always remove $x^{max}$
and move to $\sigma' = \sigma \setminus x^{max}$ without violating the downset condition. Thus,
from any valid downset $\sigma$ we can always remove points and get to the ``lowest'' downset~$R_L$. Also, such a sequence of steps can be reversed to move from $R_L$ to any other downset~$\rho$.
\end{proof}

\noindent Since we have shown that the moves of $\cM_U$ connect the state space and all valid moves have the same transition
probabilities, we can conclude from detailed balance that the chain converges to the uniform distribution over
downsets $\Omega_{mon}$.

We now define the {\it biased} Markov chain $\mmon$ by using Metropolis-Hastings transition probabilities so that we converge to the desired distribution on biased surfaces. This new chain connects the state space by the same argument as in Lemma~\ref{conlemma}.

\vspace{.2in}
\noindent  {\bf The Biased Markov chain $\mmon$} 

\vspace{.1in}
\noindent {\tt Starting at any $\sigma_0$, iterate  the following:} 

\begin{itemize}
\item {\tt Choose a diagonal $d$ and a bit $b\in\{\pm 1\}$ as described above.}

\item {\tt If $b=+1$, add the cube above and to the right of the vertex $v$ to create $\sigma_{t+1}$, if this is a valid downset.}

\item {\tt  If $b=-1$, the with prob. $\frac{1}{\lambda_x}$ let $\sigma_{t+1}$ be obtained from $\sigma_t$ by removing the cube $x$ below and to the left of $v$ if this is a valid downset. }

\item {\tt Otherwise let $\sigma_{t+1}=\sigma_t$.}

\end{itemize}

The biased Markov chain $\mmon$ converges to the correct distribution on $\Omega_{mon}$
by the detailed balance condition.
Moreover, notice that for any given diagonal $d$ there is at least one choice of $b$ that proposes a move (adding
or removing) that does not result in a valid downset. Therefore $\p[\sigma_{t+1}=\sigma_t]\geq 1/2$, so $\mmon$ is a lazy chain.

\section{Path coupling with exponential metrics}\label{Path Coupling}
Path coupling is a standard technique used to bound mixing times, and although a naive application of it is not sufficient here, we will see that with some new ideas, we can make it work.  One of the innovations behind our proofs is to introduce a new metric, and in some cases this metric requires a modified Path Coupling theorem.  We present the background and our new theorem here.

A {\it coupling} of a chain $\cM$ is a Markov process on $\Omega \times \Omega$ such that the marginals each agree with~$\cM$ and, once the two coordinates coalesce, they move in unison thereafter. The Coupling Lemma bounds the total variation distance by the probability that the processes have coalesced (see, for example,~\cite{ald}):
\begin{theorem}\label{couplinglemma2} For any coupling, we have
$d_{tv}(P^t(x,\cdot),\pi) \leq P(X_t\neq Y_t).$
\end{theorem}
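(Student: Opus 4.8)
The statement is the classical Coupling Lemma, so my plan is to give the short standard argument, making the measure-theoretic bookkeeping explicit. First I would fix the roles of the two coordinates: by the definition of a coupling the first marginal $(X_t)_{t\ge 0}$ evolves according to $\cM$ started from $X_0=x$, so $X_t$ has law $P^t(x,\cdot)$; I take the second coordinate to start from $Y_0\sim\pi$, so that, since $\pi$ is stationary for $\cM$, $Y_t$ has law $\pi$ for every $t\ge 0$. (If one instead couples with a deterministic start $Y_0=y$, the identical computation bounds $d_{tv}(P^t(x,\cdot),P^t(y,\cdot))$, and integrating over $y\sim\pi$ and using the triangle inequality recovers the displayed bound.)

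Next I would estimate $\mu(A)-\nu(A)$ for $\mu=P^t(x,\cdot)$, $\nu=\pi$, and an arbitrary event $A\subseteq\Omega$, since $d_{tv}(\mu,\nu)=\sup_{A}\absfit{\mu(A)-\nu(A)}$. Splitting according to whether the two coordinates agree at time $t$,
\[
P^t(x,A)-\pi(A)=\p(X_t\in A)-\p(Y_t\in A)=\p\bigl(X_t\in A,\ X_t\ne Y_t\bigr)-\p\bigl(Y_t\in A,\ X_t\ne Y_t\bigr),
\]
because on $\{X_t=Y_t\}$ the indicators $\mathbf 1\{X_t\in A\}$ and $\mathbf 1\{Y_t\in A\}$ coincide and their contributions cancel. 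Discarding the nonnegative subtrahend gives $P^t(x,A)-\pi(A)\le\p(X_t\in A,\ X_t\ne Y_t)\le\p(X_t\ne Y_t)$; applying the same inequality with $A$ replaced by $\Omega\setminus A$ (equivalently, swapping the two coordinates) bounds $\pi(A)-P^t(x,A)$ by the same quantity. Hence $\absfit{P^t(x,A)-\pi(A)}\le\p(X_t\ne Y_t)$ for every $A$, and taking the supremum over $A$ completes the argument.

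There is no genuine obstacle here; the only points requiring care are (i) making sure it is the second coordinate, not the first, that is started from the stationary distribution, and (ii) noting that the cancellation on $\{X_t=Y_t\}$ needs only that $X_t$ and $Y_t$ agree at the single time $t$ — the ``move in unison after coalescing'' clause in the definition of a coupling is not actually needed for this particular inequality (it is what makes $\p(X_t\ne Y_t)$ nonincreasing in $t$, which is what gets used downstream). Since the paper cites \cite{ald} for this, I would expect the authors to state the lemma without proof or with at most a one-sentence sketch.
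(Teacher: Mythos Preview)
Your argument is correct and is exactly the standard proof of the Coupling Lemma. As you anticipated, the paper does not prove this statement at all: it is stated with a parenthetical ``see, for example,~\cite{ald}'' and no further justification, so there is no in-paper proof to compare against.
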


\begin{definition}
For initial states $x,y$ let $$
   T^{x,y}=\min\{t:X_t=Y_t\mid X_0=x,\,Y_0=y\}, $$
and define the
\emph{coupling time} to be $T=\max_{x,y}\E [T^{x,y}]$.
\end{definition}

\noindent The following lemma bounds the mixing time in terms of the coupling time of any coupling (see, for example, ~\cite{ald}):
\begin{theorem}\label{couplinglemma}
For any coupling with coupling time $T$, the mixing time satisfies $\tau(\epsilon)\leq \lceil T e \ln \epsilon^{-1} \rceil.$
\end{theorem}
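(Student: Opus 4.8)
The plan is to combine the Coupling Lemma (Theorem~\ref{couplinglemma2}), which bounds $d_{tv}(P^t(x,\cdot),\pi)$ by $\p(X_t\ne Y_t)$, with a geometric tail bound on the coupling time obtained via a restart argument. Markov's inequality alone only gives $\p(T^{x,y}>t)\le\E[T^{x,y}]/t\le T/t$, which decays far too slowly to produce an $O(T\ln\epsilon^{-1})$ mixing time, so the first task is to upgrade this polynomial decay to exponential decay.

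The mechanism is submultiplicativity of the worst-case non-coalescence probability $\beta(t):=\max_{x,y}\p(X_t\ne Y_t\mid X_0=x,\,Y_0=y)$. Since $(X_t,Y_t)$ is itself a Markov chain on $\Omega\times\Omega$ and coalesced pairs stay coalesced, conditioning on the coupled state at an intermediate integer time $s$ and then taking the worst case over that state yields $\beta(s+t)\le\beta(s)\,\beta(t)$ for all nonnegative integers $s,t$, hence $\beta(kt_0)\le\beta(t_0)^{k}$ for every integer block length $t_0\ge1$ and every integer $k\ge1$. Applying Markov's inequality to a single block of length $t_0=\lceil eT\rceil$ gives $\beta(t_0)\le T/t_0\le1/e$, and therefore $\beta(t)\le\beta(t_0)^{\lfloor t/t_0\rfloor}\le e^{-\lfloor t/t_0\rfloor}$ for all $t$. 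Plugging this into the Coupling Lemma, and using that $t\mapsto d_{tv}(P^t(x,\cdot),\pi)$ is non-increasing (one application of the transition kernel can only contract total variation distance), we get $d_{tv}(P^t(x,\cdot),\pi)\le\epsilon$ as soon as $\lfloor t/t_0\rfloor\ge\ln\epsilon^{-1}$, i.e. once $t$ exceeds (up to rounding) $eT\ln\epsilon^{-1}$; a careful choice of the integer block length then yields the stated $\tau(\epsilon)\le\lceil Te\ln\epsilon^{-1}\rceil$.

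There is no real conceptual difficulty here; this is the standard reduction from a coupling-time bound to a mixing-time bound, recorded because Sections~\ref{UnifSection} and~\ref{FluctSection} invoke it (and the exponential-metric variant developed in the remainder of this section) repeatedly. The only place meriting a little care is the integrality of the block length and the number of blocks in the restart argument, since these must be integers for the Markov-property splitting to be valid; one checks that rounding them costs at most the ceiling already appearing in the statement rather than an extra multiplicative factor.
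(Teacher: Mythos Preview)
The paper does not supply its own proof of Theorem~\ref{couplinglemma}; it is quoted as a standard fact with a reference to Aldous~\cite{ald}. Your argument is precisely the classical one found there and in most textbook treatments: submultiplicativity of $\beta(t)=\max_{x,y}\p(X_t\neq Y_t)$ for a Markovian coupling, combined with Markov's inequality on a single block of length about $eT$, converts the polynomial tail into a geometric one and yields a mixing time of order $T\ln\epsilon^{-1}$.

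One small caveat: the restart argument as you describe it naturally delivers the product $\lceil eT\rceil\cdot\lceil\ln\epsilon^{-1}\rceil$, not the single ceiling $\lceil eT\ln\epsilon^{-1}\rceil$ appearing in the statement, and no ``careful choice of integer block length'' closes this gap in general (take $eT$ and $\ln\epsilon^{-1}$ each slightly larger than~$1$). This is a cosmetic imprecision in how the bound is quoted rather than a defect in your reasoning, and it is immaterial for the applications in Sections~\ref{UnifSection} and~\ref{FluctSection}.
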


The goal, then, is to define a good coupling and show that the coupling time is polynomially bounded.
Path Coupling is a convenient way of establishing this property by only considering a subset 
of the joint state space $\Omega \times \Omega.$

\begin{theorem}{\bf (Dyer and Greenhill \cite{DG97})}\label{PathCoupThm}
Let $\varphi$ be an integer valued metric defined on $\Omega \times\Omega$ which takes values in $\{0,\dots,B\}$. Let $U$ be a subset of $\Omega\times\Omega$ such that for all $(x_t,y_t)\in\Omega\times\Omega$ there exists a path $x_t=z_0,z_1,\dots,z_r=y_t$ between $x_t$ and $y_t$ such that $(z_i,z_{i+1})\in U$ for $0\leq i < r$ and $\sum_{i=0}^{r-1} \varphi(z_i,z_{i+1}) = \varphi(x_t,y_t).$
Let ${\cal M}$ be a Markov chain on $\Omega$ with transition matrix $P$. Consider any random function $f: \Omega \rightarrow \Omega$ such that $\p[f(x)=y]=P(x,y)$ for all $x, y \in \Omega$, and define a coupling of the Markov chain by $(x_t,y_t) \rightarrow (x_{t+1},y_{t+1})=(f(x_t),f(y_t))$.  Suppose there exists $\beta\leq 1$ such that
 $$\E[\varphi(x_{t+1},y_{t+1})] \leq \beta \varphi(x_t,y_t),$$
for all $(x_t,y_t)\in U$.
\begin{enumerate}
\item
If $\beta < 1$, then the mixing time satisfies
$$ \tau(\epsilon) \leq \frac{\ln (B \epsilon^{-1})}{1-\beta}.$$
\item 
If $\beta=1$ (i.e., $\E[\Delta \varphi(x_t,y_t)] \leq 0,$ for all $x_t,
y_t \in U),$
let $\alpha >0$ satisfy
$\hbox{Pr}[\varphi(x_{t+1},y_{t+1}) \neq \varphi(x_t,y_t)]$ 
$\geq \alpha$ for all $t$ such
that $x_t \neq y_t$.
The mixing time of ${\cal M}$ then satisfies
$$\tau(\epsilon)\leq \Bigl\lceil \frac{\e B^2}{\alpha}\Bigr\rceil\lceil\ln
\epsilon^{-1}\rceil.
$$
\end{enumerate}
\end{theorem}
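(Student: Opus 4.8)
The plan is to run the coupling $(x_t,y_t)\mapsto(f(x_t),f(y_t))$ on all of $\Omega\times\Omega$ (this is a valid coupling, since $f$ has the right marginal on every state and fixes the diagonal) and to combine a one-step drift estimate with the coupling lemmas, Theorems~\ref{couplinglemma2} and~\ref{couplinglemma}. The only non-routine ingredient is a concave potential used in the $\beta=1$ case. First I would observe that the path hypothesis promotes the drift bound from pairs in $U$ to \emph{all} pairs: for arbitrary $(x_t,y_t)$, take the path $x_t=z_0,\dots,z_r=y_t$ with $(z_i,z_{i+1})\in U$ and $\sum_i\varphi(z_i,z_{i+1})=\varphi(x_t,y_t)$, apply $f$ simultaneously to every $z_i$, and use the triangle inequality together with linearity of expectation:
\[
\E[\varphi(x_{t+1},y_{t+1})]\le\sum_{i=0}^{r-1}\E[\varphi(f(z_i),f(z_{i+1}))]\le\beta\sum_{i=0}^{r-1}\varphi(z_i,z_{i+1})=\beta\,\varphi(x_t,y_t).
\]

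For part~1 ($\beta<1$), iterating this and using $\varphi(X_0,Y_0)\le B$ gives $\E[\varphi(X_t,Y_t)]\le B\beta^t$; since $\varphi$ is integer valued, Markov's inequality yields $\p[X_t\ne Y_t]=\p[\varphi(X_t,Y_t)\ge 1]\le B\beta^t$, which is at most $\epsilon$ once $t\ge\ln(B\epsilon^{-1})/\ln(\beta^{-1})$. Because $\ln(\beta^{-1})\ge 1-\beta$ for $\beta\in(0,1)$, Theorem~\ref{couplinglemma2} then gives $\tau(\epsilon)\le\ln(B\epsilon^{-1})/(1-\beta)$.

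For part~2 ($\beta=1$), set $\varphi_t=\varphi(X_t,Y_t)$ and $\Delta=\varphi_{t+1}-\varphi_t$; the estimate above makes $\varphi_t$ a supermartingale valued in $\{0,\dots,B\}$, and the coupling time equals $\E[\tau]$ for $\tau=\min\{t:\varphi_t=0\}$. The key step is the concave quadratic $g(k)=k(2B-k)$ on $\{0,\dots,B\}$. Since $g$ has constant second difference $-2$, the identity $g(k+\delta)=g(k)+(2B-2k)\delta-\delta^2$ holds exactly, so whenever $\varphi_t>0$, combining $2B-2\varphi_t\ge 0$ with $\E[\Delta\mid\mathcal F_t]\le 0$ and $\E[\Delta^2\mid\mathcal F_t]\ge\p[\Delta\ne 0\mid\mathcal F_t]\ge\alpha$ (since $\Delta$ is integer valued and the part~2 hypothesis bounds the probability of a change from any non-coalesced state) yields
\[
\E[g(\varphi_{t+1})\mid\mathcal F_t]\le g(\varphi_t)-\alpha.
\]
Hence $g(\varphi_{t\wedge\tau})+\alpha(t\wedge\tau)$ is a nonnegative supermartingale started at $g(\varphi_0)\le g(B)=B^2$; taking expectations, letting $t\to\infty$ by monotone convergence, and discarding the nonnegative $g$ term gives $\E[\tau]\le B^2/\alpha$. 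So the coupling time satisfies $T\le B^2/\alpha$, and Theorem~\ref{couplinglemma} yields $\tau(\epsilon)\le\lceil\e B^2\alpha^{-1}\ln\epsilon^{-1}\rceil\le\lceil\e B^2/\alpha\rceil\lceil\ln\epsilon^{-1}\rceil$.

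I expect part~2 to be the main obstacle. The naive attempts fail: tracking $\varphi_t$ only controls the drift, not the time scale, while tracking $\varphi_t^2$ goes the wrong way, since $\E[\varphi_{t+1}^2\mid\mathcal F_t]-\varphi_t^2=2\varphi_t\E[\Delta\mid\mathcal F_t]+\E[\Delta^2\mid\mathcal F_t]$ and the variance term has the wrong sign. The resolution is to pass the drift bound and the variance lower bound through a \emph{concave} function of fixed curvature: $g(k)=k(2B-k)$ is tuned so that its curvature term exactly absorbs $\E[\Delta^2\mid\mathcal F_t]\ge\alpha$ while its nonnegative slope kills the drift term, producing a Lyapunov inequality that decreases by $\alpha$ at every step while $\varphi_t>0$. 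Everything else — the triangle-inequality extension, the Markov step in part~1, and the supermartingale bookkeeping — is routine.
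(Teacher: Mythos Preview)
Your proof is correct. The paper itself does not prove this theorem (it is quoted from Dyer--Greenhill), so there is no line-by-line comparison to make; but your Part~2 argument is precisely the technique the paper packages as Lemma~\ref{MartingaleLemma}. Your concave potential $g(k)=k(2B-k)$ and the paper's process $Z(t)=(D-\phi(t))^2-Qt$ are the same object up to sign and an additive constant: since $g(k)=B^2-(B-k)^2$, your supermartingale $g(\varphi_{t\wedge\tau})+\alpha(t\wedge\tau)$ is exactly $B^2$ minus the paper's submartingale $(B-\varphi_t)^2-\alpha t$, and both yield $\E[\tau]\le B^2/\alpha$ by optional stopping. Part~1 and the path-extension step are likewise identical to what the paper does in the proof of Theorem~\ref{GeomThm}.
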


To understand why it is difficult to use coupling to prove that $\mmon$ is rapidly mixing, 
we first examine the straightforward coupling of $(\sigma_t,\rho_t)$ in the uniform bias case.  The natural coupling
simply chooses the same diagonal $d$ and bit $b$ to generate both $\sigma_{t+1}$ and $\rho_{t+1}$. We first consider a natural distance metric on $\Omega_{mon}\times\Omega_{mon}$ called the Hamming distance, where $h(\sigma_t,\rho_t)=|\sigma_t\tri\rho_t|$ (and $\tri$ is the symmetric difference). However, with this coupling and metric, we face difficulty with even the simplest of pairs $(\sigma_t,\rho_t)$.

\condcomment{\boolean{includefigs}}{ 
\begin{figure}[ht]
\centering
\includegraphics[scale=.5]{2d2.pdf}\hspace{.5in}\includegraphics[scale=.5]{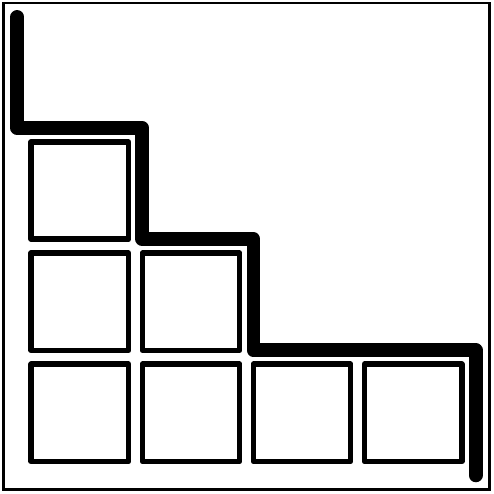}
\caption[Path Coupling of Monotonic Surfaces in Two Dimensions]{A pair of downsets $\sigma_t$ (left) and $\rho_t$ (right) where $\rho_t=\sigma_t\union\{(2,2)\}$ .}
\label{2dCoupProbFig}
\end{figure}
}

Examine the pair of downsets in Figure \ref{2dCoupProbFig}. They differ on a single point, so $h(\sigma_t,\rho_t)=1$. In order to use the coupling theorem above, the expected distance $\E[h(\sigma_{t+1},\rho_{t+1})]$ must be at most $h(\sigma_t,\rho_t)$. For this pair of downsets, there are two moves that decrease that distance; if $\mmon$ chooses the diagonal $d_0=\{(0,0)+t\ubar^*: t\geq 0\}$ and either $b=+1$ or $b=-1$, then $(\sigma_{t+1},\rho_{t+1})$ is $(\rho_t,\rho_t)$ or $(\sigma_t,\sigma_t)$, respectively. In either case, the distance between $\sigma_{t+1}$ and $\rho_{t+1}$ decreases by $1$.
There are also two moves that \emph{increase} the distance. If $\mmon$ chooses $d=\{(1,0)+t\ubar^*:t\geq 0\}$ and $b=+1$ or $d=\{(0,1)+t\ubar^*:t\geq 0\}$ and $b=+1$, then $\rho_{t+1}$ gains a new point ($(3,2)$ or $(2,3)$, respectively), but $\sigma_{t+1}$ remains unchanged; no addition to $\sigma_t$ of a vector along that diagonal leaves a valid downset. With either of these choices, the distance between $\sigma_{t+1}$ and $\rho_{t+1}$ increases by $1$.
If $\lambda=1$, this is sufficient for coupling; the expected change in distance is $0$. Unfortunately, for any $\lambda>1$, the two bad moves happen with probability $1$, whereas the two good moves happen with probability $1$ and $1/\lambda$, respectively.  Therefore the expected distance between the pair $(\sigma_t,\rho_t)$ \emph{increases} after one step.
In higher dimensions, the situation becomes even worse. For the pair of 3 dimensional downsets in Figure \ref{3dCoupProbFig}, there are \emph{three} moves which increase the Hamming distance and 
only two moves which decrease the distance. 
 Of course, the three moves that increase the distance succeed with probability $1$, but one of the two moves which decreases the distance only succeeds with probability ${1}/{\lambda}$.

 \begin{figure}[ht]
\centering
\includegraphics[scale=.2]{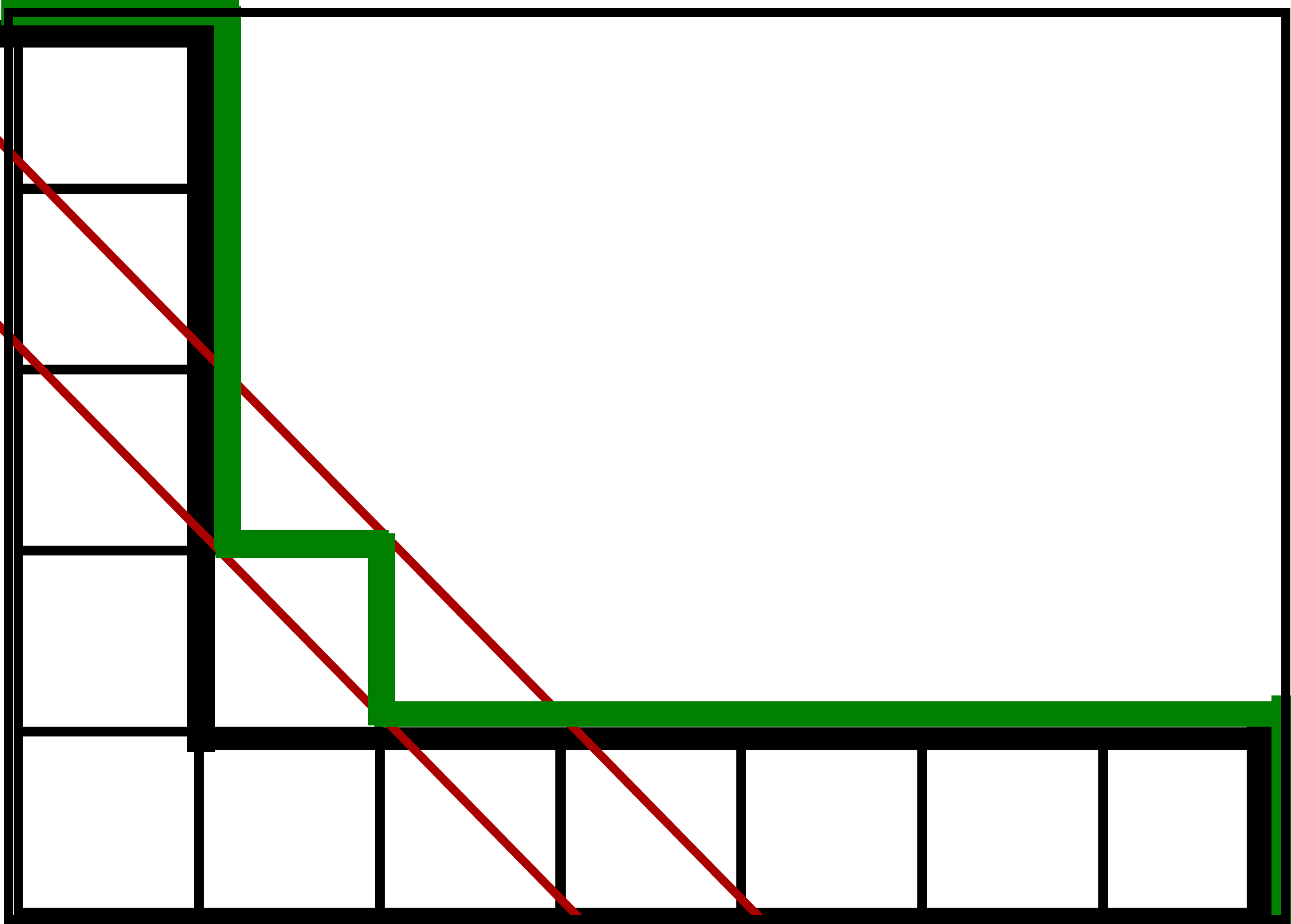}
\put(-130,52){$\mu^{k}$}
\put(-137,64){$\mu^{k-1}$}
\put(-64,-20){(a)}
\hspace{.8in}
\includegraphics[scale=.2]{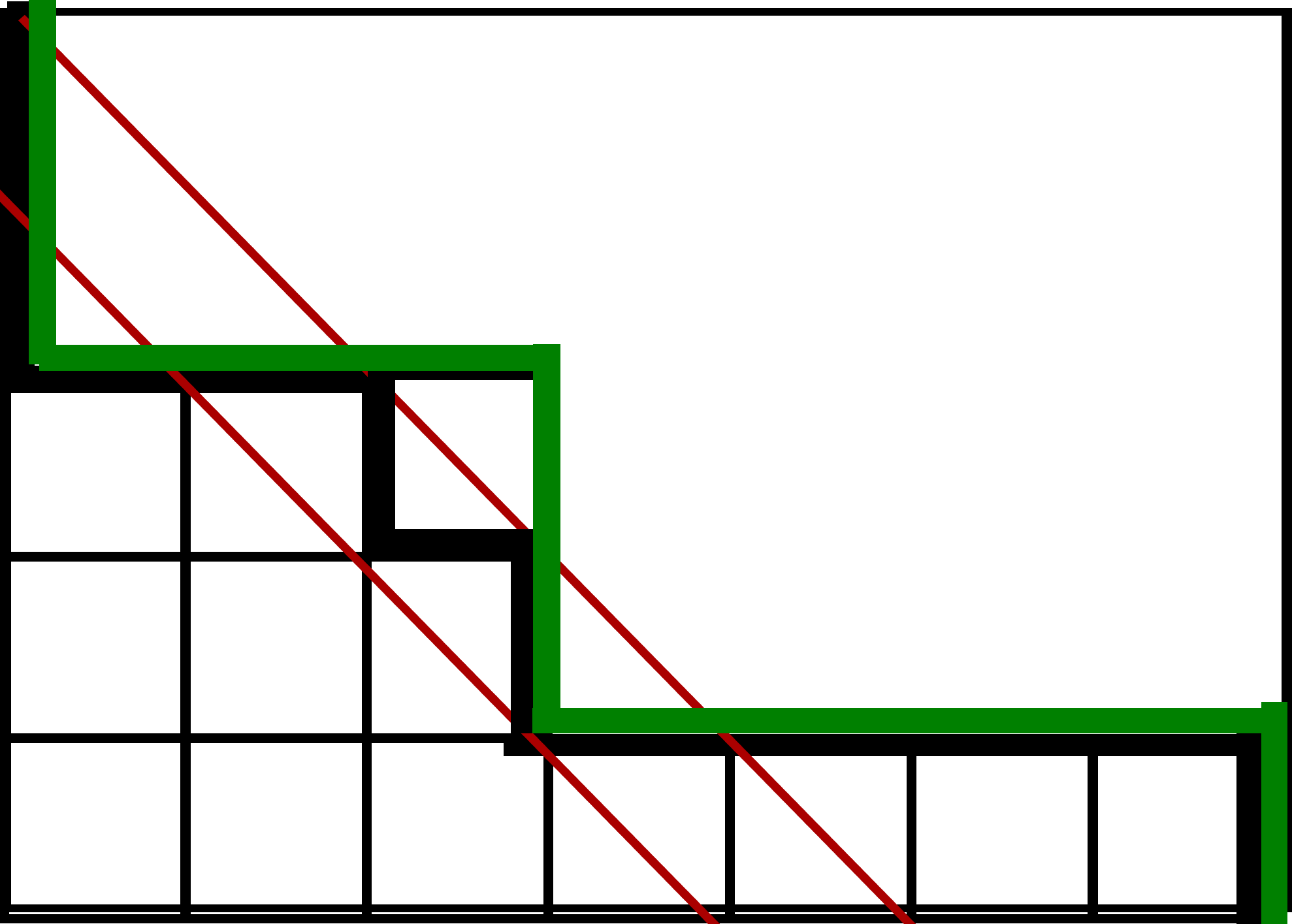}
\put(-137,64){$\mu^{k-1}$}
\put(-137,76){$\mu^{k-2}$}
\put(-64,-20){(b)}
\caption[Path coupling with an exponential distance metric]{ Path coupling with an exponential distance metric.}
\label{distmetric23a}
\end{figure}

One promising remedy is to alter the distance metric. The bad cases described above involve two downsets that differ on some point $x$, where the two moves which decrease the distance involve removing $x$ from $\sigma_t\tri\rho_t$, while the moves that increase the distance involve adding $x+\ubar_i$ to $\sigma_t\tri\rho_t$ for some $i$ (where $\ubar_1=(1,0)$ and $\ubar_2=(0,1)$).  Since the bad moves happen with greater probability than the good moves, we consider a distance metric that counts the distance between two sets that differ on $x$ as greater than the distance between two sets that differ on $x+\ubar_i$.  Specifically, we give a different weight to each Northwest-Southeast diagonal, with the weights smaller along higher diagonals (as in Figure~\ref{distmetric23a}a).  This allows us to make the change in distance nonpositive in the above cases.  Of course we must ensure that the difference in weight is not too great.  This is because the opposite situation might happen as well, where the two bad moves involve removing $x-\ubar_i$ for some~$i$, (as in Figure~\ref{distmetric23a}b); although this situation was not a problem for the Hamming distance metric, if we assign too much weight to those bad moves, the change in distance might be positive in this case.
We find the following distance metric suffices.  First, let $R$ be the $h\times w$ rectangle in $\Z^2$ and define $\mu=\sqrt{\lambda}\geq 1$.  Then for two downsets $\sigma,\rho$ in $R$, let
\begin{equation}\label{distmetric}
\phi(\sigma,\rho)=\mu^{w+h}\sum_{x\in\sigma\oplus\rho}\mu^{- \|x\|_1,}
\end{equation}
where $\|\cdot\|_1$ is the $L_1$ norm.  Notice that all elements on each Northwest-Southeast diagonal have the same $L_1$ norm, and so this metric assigns a weight of $\mu^k$ for the $k$th diagonal from the top right, as in Figure~\ref{distmetric}.  Notice also that this definition ensures that the distance between any two downsets is either $0$ or at least $1$.  We present the proof that this metric is decreasing in
expectation at every step in Section~\ref{FixedExpProof}.

\begin{figure}[ht]
\centering
\includegraphics[scale=.26]{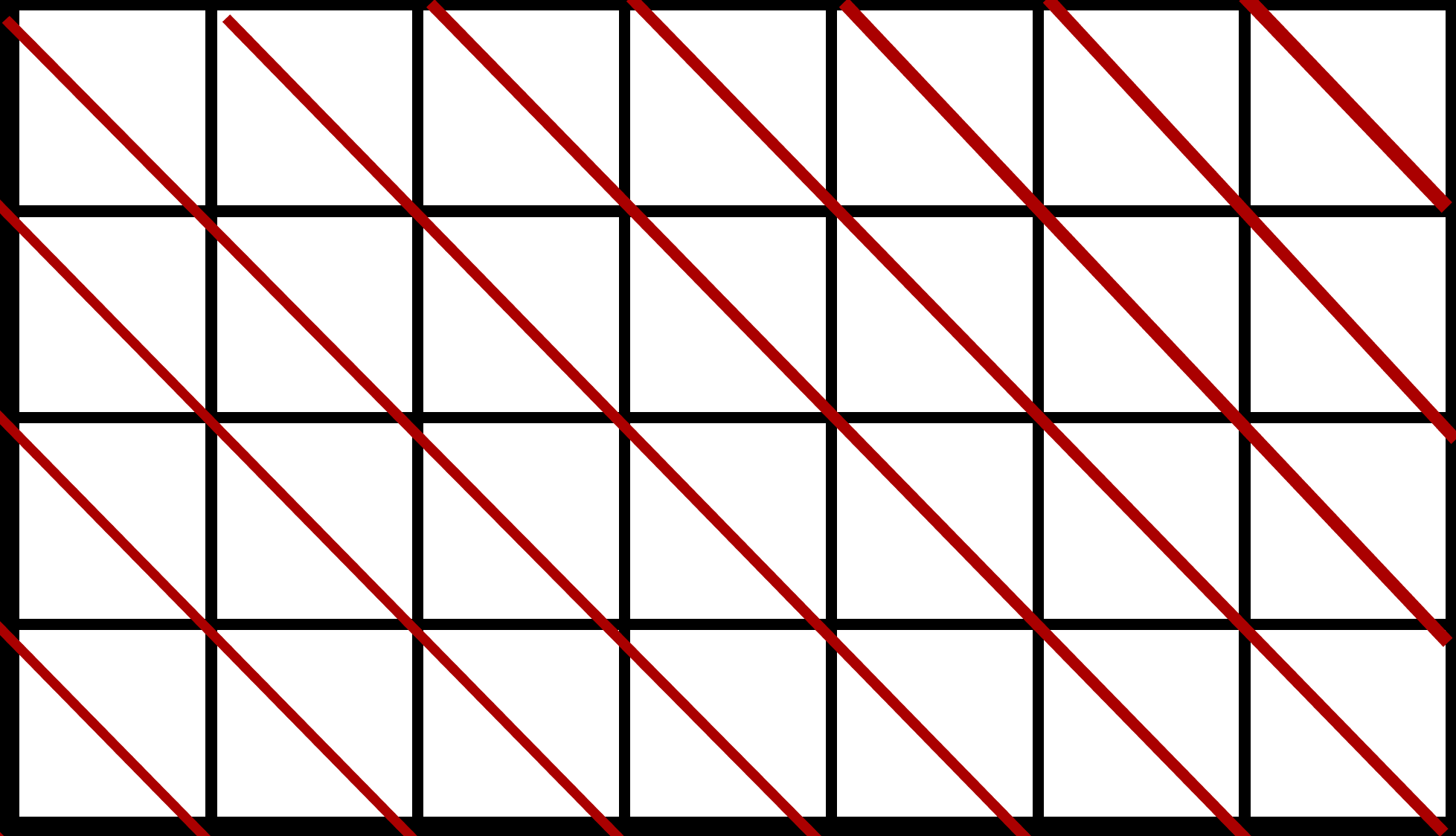}
\put(-24,88){$\mu^0=1$}
\put(-46,88){$\mu$}
\put(-68,88){$\mu^2$}
\put(-110,88){$\cdots$}
\put(-158,88){$\mu^w$}
\put(-158,52){$\vdots$}
\put(-172,17){$\mu^{w+h}$}
\put(-80,-10){$w$}
\put(5,40){$h$}
\caption[Exponential distance metric]{ Exponential distance metric.}
\label{distmetric}
\end{figure}

Unfortunately, this definition of the distance metric presents new problems. First, the distances might now take on non-integer values, while the Path Coupling Theorem requires integer valued metrics. In fact, if this restriction is merely removed, then the theorem is no longer true as the distances might get smaller and smaller without coalescence occurring in a polynomial number of steps. However, it is enough to add the additional condition that no pairs of configurations can have a distance within the open interval $(0,1)$. 
The second, more serious concern is that the maximum distance between two configurations can be exponentially large in $n$. If the distance only changes by a small (polynomial) amount in 
each step, then we cannot expect the distance to be zero in only a polynomial number of steps.
For example, for small $\lambda$ we can find configurations
$x_t$ and $y_t$ so that 
$\E[\varphi(x_{t+1},y_{t+1})] \leq (1-2^{-n}) \varphi(x_t,y_t),$ so 
the expected change is too small to apply the first part of Theorem~\ref{PathCoupThm}. Moreover,
the maximum distance $B$ is very large, so we cannot get a good bound on the mixing time
using the second part of Theorem~\ref{PathCoupThm} either.

The following modification of the Path Coupling Theorem allows us to handle cases when
the distances can be exponentially large and the expected change in distance is small (or even zero).
We show that it suffices to prove that the expected change in the absolute value of the distance
is proportional to the current distance, and with this condition the mixing time is polynomially
bounded. We apply this new theorem to the biased Markov chain $\mmon$ in Section~\ref{FixedExpProof}.

\begin{theorem}
\label{GeomThm}
Let $\phi$ be a metric defined on $\Omega \times \Omega$ which takes finitely many values in $\{0\}\cup[1, B]$. Let $U$ be a subset of $\Omega\times\Omega$ such that for all $(X_t,Y_t)\in \Omega\times\Omega$ there exists a path $X_t=Z_0,Z_1,\ldots,Z_r=Y_t$ such that $(Z_i,Z_{i+1})\in U$ for $0\leq i<r$ and $\sum_{i=0}^{r-1} \phi(Z_i,Z_{i+1})=\phi(X_t,Y_t).$

Let $\cM$ be a lazy Markov chain on $\Omega$ and let $(X_t,Y_t)$ be a coupling of $\cM$, with $\phi_t=\phi(X_t,Y_t)$. Suppose there exists $\beta\leq 1$ such that, for all $(X_t,Y_t)\in U$,
$$\E[\phi_{t+1}]\leq \beta \phi_t.$$
\begin{enumerate}

\item
If $\beta<1$, then the mixing time satisfies
$$\tau(\varepsilon)\leq \frac{\ln(B\varepsilon^{-1})}{1-\beta}.$$

\item
If there exists $\kappa,\eta\in(0,1)$ such that $\p\left[|\phi_{t+1}-\phi_t|\geq \eta \phi_t\right]\geq \kappa$
for all $t$ provided that $X_t\neq Y_t$, then
$$\tau(\varepsilon)=O\left(\frac{ \ln^2 B\ln\varepsilon^{-1}}{ \ln^2(1+\eta) \kappa}\right).$$
\end{enumerate}
\end{theorem}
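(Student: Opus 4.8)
\textbf{Proof proposal for Theorem~\ref{GeomThm}.}

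The plan is to reduce part~(2) to part~(1) by passing to a logarithmic rescaling of the metric, applied to a sufficiently long fixed block of steps of the coupled chain. The intuition is that $\phi_t$ can be exponentially large, but $\log \phi_t$ is only polynomially large, and the hypothesis in part~(2) says that with probability at least $\kappa$ the quantity $|\phi_{t+1}-\phi_t|$ is a constant fraction of $\phi_t$, i.e.\ $\log\phi_{t+1}$ differs from $\log\phi_t$ by at least $\log(1+\eta)$. Combined with the supermartingale property $\E[\phi_{t+1}]\leq\phi_t$ (the $\beta=1$ case of the hypothesis), this should force $\log\phi_t$ to execute something like a biased-against-increase random walk with steps bounded away from zero a constant fraction of the time, hence to hit $0$ (coalescence, since $\phi$ takes no values in $(0,1)$) in a polynomial number of steps.

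First I would make the supermartingale step precise: since $\E[\phi_{t+1}\mid \phi_t]\leq \phi_t$, Jensen's inequality (concavity of $\log$) does \emph{not} immediately give $\E[\log\phi_{t+1}]\leq\log\phi_t$ in the right direction, so instead I would argue directly. Conditioning on the event $A_t=\{|\phi_{t+1}-\phi_t|\geq\eta\phi_t\}$ of probability $\geq\kappa$, I would split into the sub-case $\phi_{t+1}\leq(1-\eta)\phi_t$ and the sub-case $\phi_{t+1}\geq(1+\eta)\phi_t$; the key point is to show the ``down'' sub-case carries enough weight. Here the supermartingale property is essential: if increases by a factor $\geq 1+\eta$ happened too often relative to decreases, then $\E[\phi_{t+1}]$ would exceed $\phi_t$. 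Quantitatively, writing $p=\p[\phi_{t+1}\geq(1+\eta)\phi_t]$ and $q=\p[\phi_{t+1}\leq(1-\eta)\phi_t]$ with $p+q\geq\kappa$, from $\E[\phi_{t+1}]\leq\phi_t$ together with $\phi_{t+1}\geq 0$ always, one gets roughly $\eta p \leq \eta q + (\text{contribution of the complement of }A_t)$, and on the complement of $A_t$ we have $\phi_{t+1}\le (1+\eta)\phi_t$ but also $\phi_{t+1}\ge(1-\eta)\phi_t$, so that contribution is controlled. Being careful, this yields a bound of the form $q \geq p - c$ for a small constant, which is not quite enough by itself; the cleaner route is to track the ratio $\phi_{t+1}/\phi_t$ and observe that on $A_t$, $\log(\phi_{t+1}/\phi_t)$ is either $\le -\log\frac{1}{1-\eta}\le -\log(1+\eta')$ or $\ge \log(1+\eta)$, while off $A_t$ it lies in $[\log(1-\eta),\log(1+\eta)]$, and then combine with $\E[\phi_{t+1}/\phi_t]\le 1$.

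Next, define $T=\lceil c\,\ln^2 B /(\ln^2(1+\eta)\kappa)\rceil$ for a suitable constant $c$, and analyze the coupled chain over $T$ consecutive steps. I would set up a stopping-time / martingale concentration argument (Azuma--Hoeffding, or a Doob decomposition of $\log\phi_t$ with bounded increments of size $O(\log(1+\eta)\cdot\text{poly})$, noting a single step of the lazy chain changes $\phi$ by at most a bounded factor, which needs to be part of the hypotheses or follows from $B$ being the max) to show that $\p[\phi_T \neq 0 \mid \phi_0\ne 0] \leq 1/2$, or more precisely is bounded away from $1$ by a constant. The mechanism is: $\log\phi_t$ is dominated by a random walk that on a $\kappa$-fraction of steps makes a jump of absolute value $\geq\log(1+\eta)$, is a supermartingale, and must remain in $[0,\log B]$ until it hits $0$; such a walk exits the interval through $0$ within $O(\ln^2 B/(\ln^2(1+\eta)\kappa))$ steps with constant probability. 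Then I would apply the metric-path (triangle inequality / $U$-decomposition) hypothesis exactly as in ordinary path coupling to extend the one-pair bound to all pairs, conclude the coupling time of the $T$-step chain is $O(1)$ rounds by a standard geometric-trials argument, and invoke Theorem~\ref{couplinglemma} to turn the coupling-time bound $O(T)$ into the stated mixing-time bound $\tau(\varepsilon)=O\!\left(\frac{\ln^2 B \ln\varepsilon^{-1}}{\ln^2(1+\eta)\kappa}\right)$.

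The main obstacle I anticipate is the first step: extracting from the mere supermartingale inequality $\E[\phi_{t+1}]\le\phi_t$ (which concerns $\phi$ itself, and is weak because $\phi$ is huge) a genuine \emph{negative drift of $\log\phi_t$} — or at least enough control that the walk hits $0$ rather than wandering up to $B$. The danger case flagged in the paper's own discussion — a pair with $\E[\phi_{t+1}]\le(1-2^{-n})\phi_t$, i.e.\ $\beta$ exponentially close to $1$ — shows we cannot rely on drift and must genuinely use the ``constant probability of a multiplicatively large jump'' hypothesis together with the fact that $\phi$ can never sit in $(0,1)$, so that a single well-timed downward jump from a small value lands exactly at $0$. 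Making the concentration/hitting-time estimate for $\log\phi_t$ rigorous while its increments are only \emph{conditionally} bounded (and only conditionally large with probability $\kappa$) is the technical heart; I expect to handle it via an optional-stopping argument on a carefully chosen supermartingale built from $\log\phi_t$, e.g.\ of the form $\log\phi_t + (\text{affine function of }t)$ stopped at the coalescence time and at the first exit above $\log B$, which cannot happen since $B$ is the maximum.
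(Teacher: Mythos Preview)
Your overall plan --- take logarithms, show $\log\phi_t$ is a supermartingale bounded above by $\log B$, use the jump hypothesis to force a lower bound on the variance of the increments, and then bound the hitting time of the bottom --- is exactly the paper's approach. But several of your intermediate steps are either wrong or unnecessarily complicated, and your final supermartingale is the wrong shape.

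First, your claim that Jensen ``does not give $\E[\log\phi_{t+1}]\le\log\phi_t$ in the right direction'' is backwards. Concavity of $\log$ gives $\E[\log\phi_{t+1}\mid\phi_t]\le\log\E[\phi_{t+1}\mid\phi_t]\le\log\phi_t$ directly; this is exactly the right direction. The paper does precisely this via the weighted AM--GM inequality. The only obstruction is the event $\phi_{t+1}=0$, where $\log$ is $-\infty$; the paper handles this by \emph{truncating}, setting $\psi_t=\ln\phi_t$ when $\phi_t>0$ and $\psi_t=-2\ln 2$ when $\phi_t=0$, and then uses \emph{laziness} (so $\p[\phi_{t+1}=0]\le 1/2$) to check that the truncated process is still a supermartingale. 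Your entire detour through splitting $A_t$ into up/down sub-cases, tracking $p$ and $q$, and worrying about extracting ``genuine negative drift'' is unnecessary: the supermartingale property for $\psi_t$ comes for free from Jensen plus laziness, and you never get (nor need) strictly negative drift.

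Second, and more seriously, the supermartingale you propose for optional stopping --- $\log\phi_t$ plus an affine function of $t$ --- cannot work in the $\beta=1$ case, precisely because the drift of $\psi_t$ is only $\le 0$, not bounded away from zero. The correct object is \emph{quadratic}: the paper applies optional stopping to $Z(t)=(D-\psi_t)^2-Qt$ with $D=\ln B$ and $Q=\kappa\ln^2(1+\eta)$, which is a \emph{sub}martingale because the nonpositive drift of $\psi_t$ and the lower bound $\E[(\psi_{t+1}-\psi_t)^2]\ge Q$ combine to give $\E[Z(t+1)-Z(t)]\ge 0$. Optional stopping then gives $\E[T]\le O(D^2/Q)=O\bigl(\ln^2 B/(\kappa\ln^2(1+\eta))\bigr)$ directly. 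This is the standard ``variance forces exit from a bounded interval'' argument (the paper's Lemma~\ref{MartingaleLemma}), and it replaces your entire block-of-$T$-steps / Azuma--Hoeffding machinery; note that Azuma is an \emph{upper}-tail concentration tool and is not what you want when the useful information is a variance \emph{lower} bound. Finally, once you have the hitting-time bound for a single pair in $U$, the metric-path hypothesis extends it to all pairs exactly as in ordinary path coupling, and Theorem~\ref{couplinglemma} converts the expected coupling time into the mixing-time bound --- no blocking or geometric-trials layer is needed.
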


There are two important differences between Theorem~\ref{PathCoupThm} and 
Theorem \ref{GeomThm}. The first is that Theorem \ref{GeomThm} allows for non-integer metrics (provided that for all $X,Y\in\Omega$, $\phi(X,Y)<1$ implies $\phi(X,Y)=0$). This is a minor restructuring of the proof of Theorem \ref{PathCoupThm} \cite{lrs}, and follows exactly from their method.
The second is that $\beta$ may equal $1$ while $B$ is exponentially large; this is the case in which both parts of Theorem \ref{PathCoupThm} were unable to prove rapid mixing. This can be shown again with a slight modification of the original proof, essentially replacing the original distance $\phi(X_t,Y_t)$ with $\ln(\phi(X_t,Y_t))$. There are some technical details concerning the expectation and variance of the logarithm, but the novelty of Theorem \ref{GeomThm} is more in the statement of the result than a new method of proof.

Note that including this case of $\beta=1$ and exponential $B$ requires a strong bound on the variance of $\phi_t$. Without this bound on variance, Theorem \ref{GeomThm} is not true; if $\phi_0=2^n$ and $\phi_{t+1}=\phi_t-1$ for all $t\geq 1$, then clearly it will take time exponential in $n$ for $\phi_t=0$.

In order to prove Theorem \ref{GeomThm} to handle exponential metrics, we define a new variable $\psi$, which is essentially $\ln(\phi)$. However, if we hope to prove rapid mixing by looking at $\ln(\phi)$, we need to bound the time to reach $\ln(0)=-\infty$, and the expected time could be unbounded. In particular, in order to prove rapid mixing, we need that the sequence $\{\psi_t\}$ has bounded differences. The technical fix that we make relies on the assumption that $\phi_t\notin (0,1)$, so we need only bound the time until we reach a negative value for $\ln(\phi_t)$. Hence we define
$$\psi_t = \begin{cases} \ln(\phi_t) & \text{ if }\phi_t > 0 \\ -2\ln 2 & \text{ if }\phi_t=0\end{cases}.$$
This means that $\psi_t\in[-2\ln 2,\ln B]$. The particular value at zero is chosen so that if the expected distance $\phi_t$ is non-decreasing, then the expected value of $\psi_t$ is non-decreasing, and that if the variance of $\phi_t$ is at least a linear factor, then the variance of $\psi_t$ is at least a constant.

The following Martingale Lemma follows the proof of Lemma 6 in \cite{lrs}.

\begin{lemma}
\label{MartingaleLemma}
Given any bounded function $\phi(t),$ with $d\leq \phi(t)\leq D$ for some $d,D\in \R$ and for all $t\geq 0$, and a stopping value $q$, let $T=\min\{t:\phi(t)=q\}$.  If, for all $t\geq 0$, we have $\E[\phi(t+1)-\phi(t)]\leq 0$ and $\E[(\phi(t+1)-\phi(t))^2]\geq Q$ for some $Q$, then
$$\E[T]\leq \frac{2 D^2+ q^2 -2qD}{Q}.$$
\end{lemma}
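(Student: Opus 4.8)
The plan is to apply the optional stopping theorem to a suitable supermartingale built from $\phi(t)$. Since $\E[\phi(t+1)-\phi(t)]\leq 0$, the sequence $\phi(t)$ is itself a supermartingale, but that alone only controls $\E[\phi(T)]$ and does not bound $\E[T]$; to extract a bound on the expected hitting time we need to use the lower bound $Q$ on the conditional (or unconditional) second moment of the increments. The standard device is to consider $Y_t = \phi(t)^2 - \phi(t)\cdot(\text{something linear}) + Q\cdot t$, or more transparently to track $\phi(t)^2$: we have
$$\E[\phi(t+1)^2 - \phi(t)^2] = \E[(\phi(t+1)-\phi(t))^2] + 2\phi(t)\,\E[\phi(t+1)-\phi(t)] \geq Q + 2\phi(t)\,\E[\phi(t+1)-\phi(t)].$$
The second term is the obstruction: $\E[\phi(t+1)-\phi(t)]\leq 0$ pushes it the wrong way. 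We control it crudely by noting $\phi(t)\leq D$, so $2\phi(t)\,\E[\phi(t+1)-\phi(t)] \geq 2D\,\E[\phi(t+1)-\phi(t)]$ (since the expectation is non-positive and we are multiplying by something at most $D$, replacing $\phi(t)$ by $D$ only decreases the product). Hence
$$\E[\phi(t+1)^2 - 2D\phi(t+1)] - \E[\phi(t)^2 - 2D\phi(t)] \geq Q,$$
so the process $W_t = \phi(t)^2 - 2D\phi(t) - Qt$ is a submartingale; equivalently $W_t$ drifts upward by at least $Q$ per step until time $T$.

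First I would make this rigorous: define $W_t$ as above, verify $\E[W_{t+1}\mid \mathcal F_t]\geq W_t$ using the inequality just derived, and check that $W_t$ is bounded on $\{t\leq T\}$ (indeed $\phi(t)\in[d,D]$ forces $\phi(t)^2-2D\phi(t)$ into a bounded range, and we only run until $T$). Then I would apply the optional stopping theorem to the bounded-increment submartingale $W_{t\wedge T}$ — the increments are bounded because $\phi$ changes by a bounded amount each step given $d\leq\phi\leq D$, and $\E[T]<\infty$ needs to be established first (a routine argument since from any state there is uniformly positive probability of a positive-magnitude increment by the $Q>0$ hypothesis, which caps the escape from $[d,D]$). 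Optional stopping gives $\E[W_T]\geq W_0$, i.e.
$$\E[\phi(T)^2 - 2D\phi(T)] - Q\,\E[T] \geq \phi(0)^2 - 2D\phi(0).$$
Rearranging, $Q\,\E[T] \leq \E[\phi(T)^2 - 2D\phi(T)] - \phi(0)^2 + 2D\phi(0)$. At the stopping time $\phi(T)=q$, so $\E[\phi(T)^2-2D\phi(T)] = q^2 - 2qD$, and $-\phi(0)^2+2D\phi(0) \leq 2D^2$ since $-\phi(0)^2 + 2D\phi(0) = D^2 - (D-\phi(0))^2 \leq D^2 \leq 2D^2$ (one can be slightly more careful, but $2D^2$ is the clean bound that matches the claimed inequality). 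This yields $\E[T] \leq (2D^2 + q^2 - 2qD)/Q$, as desired.

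The main obstacle I anticipate is the justification of optional stopping, specifically establishing $\E[T]<\infty$ a priori and verifying the integrability/uniform-integrability hypotheses rather than the algebra, which is mechanical. Since $\phi(t)$ is confined to $[d,D]$ and each step has conditional second moment of increments at least $Q>0$, the increments cannot be identically zero, which gives a uniform lower bound on the per-step probability of moving by at least some fixed amount; a standard argument then shows $T$ has exponential tails, hence finite expectation, and the bounded increments of $W_{t\wedge T}$ let us pass the limit through the expectation. I would also need to be slightly careful that the hypothesis is only about the unconditional expectation $\E[\phi(t+1)-\phi(t)]\leq 0$ and $\E[(\phi(t+1)-\phi(t))^2]\geq Q$ rather than conditional versions; the cleanest route is to sum the one-step inequalities over $t=0,\ldots,T-1$ directly, using Wald-type/Fubini manipulations with the bounded stopping time $T\wedge N$ and then letting $N\to\infty$, which sidesteps the need for a conditional (sub)martingale property and relies only on the stated unconditional moment bounds together with $\E[T]<\infty$.
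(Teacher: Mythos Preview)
Your proposal is correct and is essentially the paper's own argument: the paper defines $Z(t)=(D-\phi(t))^2-Qt$, which differs from your $W_t=\phi(t)^2-2D\phi(t)-Qt$ only by the additive constant $D^2$, verifies it is a submartingale via exactly the same expansion and the same use of $\phi(t)\leq D$ together with $\E[\phi(t+1)-\phi(t)]\leq 0$, and then applies optional stopping. Your discussion of the integrability hypotheses for optional stopping is in fact more careful than the paper, which simply asserts bounded differences and invokes the theorem.
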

\begin{proof}
Define the process $Z(t):=(D - \phi(t))^2 - Qt$. Examining the expected difference between $Z(t)$ and $Z(t+1)$, we have
\begin{align*}
\E[Z(t+1)-Z(t)]&= \E[(D- \phi(t+1))^2 - (D -\phi(t))^2] - Q\\
&= \E[-2D(\phi(t+1)-\phi(t)) + \phi(t+1)^2 - \phi(t)^2] - Q\\
&= 2(\phi(t)-D) \E[\phi(t+1)-\phi(t)] +  \E[(\phi(t+1)-\phi(t))^2]  - Q \geq 0.
\end{align*}
Also, since the differences $Z(t+1)-Z(t)$ are bounded, so $\{Z(t)\}$ is a submartingale.  $T$ is a stopping time for $Z(t)$, so we may apply the Optional Stopping Theorem for submartingales to deduce that
\begin{align*}
\E[T]&\leq \frac{1}{Q}\left[ \phi(0)(2D-\phi(0)) +q^2-2qD\right]\\
&\leq \frac{2D^2 + q^2 -2qD}{Q}.
\end{align*}
\end{proof}

Now we may prove the exponential metric theorem, Theorem~\ref{GeomThm}.
\begin{proof}
\noindent {\bf Part (1)} \ \ This case follows directly from the proof of Theorem~\ref{PathCoupThm}, while allowing non-integer valued metrics. Since $\E[\phi_{t+1}]\leq \beta \phi_t$ for all $t$, it follows that
\begin{eqnarray}\label{phibound}
\E[\phi(X_t,Y_t)]=\E[\phi_{t}]\leq \beta^t \phi_0=\beta^t \phi(X_0,Y_0)\leq \beta^t B.
\end{eqnarray}
Since $\phi_t$ is nonnegative, takes values in $\{0\}\cup [1,B]$ and is equal to zero whenever $X_t=Y_t$, we have
\begin{eqnarray*}
\E[\phi_t] &=& \int_1^B x\text{Pr}(\phi_t=x) dx\\
&\geq & \int_1^B \text{Pr}(\phi_t=x) dx\\
&= & \text{Pr}(\phi_t\geq 1)\\
&= & \text{Pr}(X_t\neq Y_t).
\end{eqnarray*}
Then since by Equation~\ref{phibound}, $\E[\phi_t]\leq \epsilon$ whenever $t\geq \log(B\epsilon^{-1})/\log(\beta^{-1})$,
 the Coupling Lemma~\ref{couplinglemma2} implies
$$\tau(\epsilon)\leq \ln(B\epsilon^{-1})/\ln(\beta^{-1}).$$
Since $\ln(\beta^{-1})>1-\beta$, Part (1) follows.

\vspace{.2in}

\noindent {\bf Part (2)}\ \ For part (2), we will show that $\psi_t$ satisfies the conditions of Lemma~\ref{MartingaleLemma}, with $q=-2\ln 2$, $D=\ln B$ and $Q=\ln^2(1+\eta)\kappa$.  Note that this proves the theorem, since Theorem~\ref{couplinglemma} implies that 
$$\tau(\epsilon) \leq  \lceil \E[T] e \ln \epsilon^{-1} \rceil \leq \left\lceil \frac{2 D^2+ q^2 -2qD}{Q}  e \ln \epsilon^{-1} \right\rceil = O\left( \frac{\ln^2 B\ln\epsilon^{-1}}{\ln^2(1+\eta)\kappa}\right).$$

First we show that since $\E[\phi_{t+1}-\phi_t]\leq 0,$ then also $\E[\psi_{t+1}-\psi_t] \leq 0.$  We may assume $\phi_t\neq 0$.
Given the value of $\phi_t$, let $\{r_0,r_1,r_2, \dots, r_N\}$ be the possible values for $\phi_{t+1}$, each occurring with probability $\{\zeta_0, \zeta_1,\zeta_2, \dots, \zeta_N \}$. That is, $\p[\phi_{t+1}=r_i | \phi_t]=\zeta_i$, with $\sum_{i=0}^N \zeta_i = 1$. Assume $r_0=0$.
As our chain is lazy, $\p[\phi_{t+1}=\phi_t]\geq 1/2$. Therefore $\zeta_0\leq 1/2$. Now,
\begin{align*}
\E[\psi_{t+1}|\psi_t] &= \zeta_0 (-2\ln2) + \sum_{i=1}^N \zeta_i \ln_2(r_i)\\
               &=-2\ln2\zeta_0+ \ln\left(\prod_{i=1}^N r_i^{\zeta_i}\right)\\
               &\leq -2\ln2\zeta_0 + \ln\left(\frac{\sum_{i=1}^N \zeta_i r_i}{1-\zeta_0}\right)\\
               &= \ln\left(\E[\phi_{t+1}|\phi_t]\right) -2\ln2\zeta_0 - \ln(1-\zeta_0)\\
               &\leq \ln\left(\E[\phi_{t+1}|\phi_t]\right)\\
               &\leq \ln \phi_t \ = \ \psi_t,
\end{align*}
where the first inequality is by the Arithmetic-Geometric Mean Inequality, and the second follows from the fact that ${\ln(1-\zeta_0)}/{\zeta_0}\geq -2\ln 2$ for $\zeta_0\in (0,\frac{1}{2})$.

Next we prove that if there exist constants $\kappa, \eta\in(0,1)$ such that $\p[|\phi_{t+1}-\phi_t|\geq \eta\phi_t]\geq \kappa$ for $\phi_t\not=0$, then
$$\E[|\psi_{t+1}-\psi_t|]\geq \ln(1+\eta)\kappa + \ln 2\p[\phi_{t+1}=0].$$
Let $\zeta_0=\p[\phi_{t+1}=0]$.  Then
\begin{align*}
\kappa &\leq \p[|\phi_{t+1}-\phi_t|\geq \eta \phi_t]\\
       &=1\cdot \zeta_0 + \p[|\phi_{t+1}-\phi_t|\geq \eta \phi_t|\phi_{t+1}\neq 0] (1-\zeta_0).
\end{align*}
Now because, by definition, $\psi_{t+1}=\ln(\phi_{t+1})$ when $\phi_{t+1}\neq 0$, we have
\begin{align*}
\p[|\phi_{t+1}&-\phi_t|\geq \eta \phi_t|\phi_{t+1}\neq 0]\\
&= \p\left[\frac{\phi_{t+1}}{\phi_t}-1 \geq \eta|\phi_{t+1}\neq 0\right]+ \p\left[\frac{\phi_{t+1}}{\phi_t}-1 \leq- \eta|\phi_{t+1}\neq 0\right]\\
&= \p\left[\psi_{t+1}-\psi_t \geq \ln(1+\eta )|\phi_{t+1}\neq 0\right]+\p\left[\psi_{t+1}-\psi_t \leq \ln(1-\eta)|\phi_{t+1}\neq 0\right]\\
&\leq \p\left[|\psi_{t+1}-\psi_t |\geq \ln(1+\eta)|\phi_{t+1}\neq 0\right].
\end{align*}
Since $\phi_t\geq 1$, we have $\psi_t\geq 0$, so $|-2\ln2-\psi_t|\geq 2\ln 2$. Note that $m:=\ln^2(1+\eta)< \ln^2(2)$, since $\eta<1$. This yields
\begin{align*}
\E[(\psi_{t+1}-\psi_t)^2]&=(-2\ln 2-\psi_t)^2\zeta_0+\sum_{\ell\in\Omega, \ell\neq 0} \left(\ln(\ell)-\ln \phi_t\right)^2 \p[\phi_{t+1}=\ell]\\
&\geq (2\ln 2)^2 \zeta_0+ m \p\left[|\psi_{t+1}-\psi_t |\geq m|\phi_{t+1}\neq 0\right] (1-\zeta_0)\\
&\geq (2\ln 2)^2 \zeta_0 + m \left( \frac{ \kappa -\zeta_0}{1-\zeta_0}\right) (1-\zeta_0)\\
&= \left((2\ln 2)^2-m\right)\zeta_0 + m\kappa\\
&> m \kappa + 3 \zeta_0 \ln^2 2.
\end{align*}
Hence we have $Q=\ln^2(1+\eta)\kappa < \E[(\psi_{t+1}-\psi_t)^2],$
as desired.  

\end{proof}

\section{Fast mixing of the uniform bias Markov chain}\label{UnifSection}
We start by looking at the biased Markov chain $\mmon$ when the biases are uniform.  We use the exponential metric introduced in Section~\ref{Path Coupling} to show that $\mmon$ is rapidly mixing whenever $\lambda\geq d^2$ for arbitrary dimension $d$, and for all $\lambda>1$ when $d=2$.
In Section~\ref{ceiling}, we  present a simple hitting time argument that proves the biased chain converges in polynomial time, as long as the minimum bias is at least $d$ and the region is a $d-$dimensional hypercube.
We conjecture that the chain is rapidly mixing for all values of a uniform bias $\lambda > 1$ in all dimensions $d \geq 2$, but do not yet have a proof for small values of $\lambda$ in dimensions higher than 2.

\subsection{Exponential metric for the uniform bias chain}\label{FixedExpProof}
First, we use our exponential distance metric to analyze $\mmon$.
We show that in two dimensions, the biased chain is rapidly mixing for any uniform bias, even when the bias is arbitrarily close to one.  Our proof is significantly simpler than the arguments of Benjamini et al., while achieving the same optimal bounds on the mixing time for square regions when the bias is constant. In fact, on rectangular $h\times w$ regions of $\Z^2$, where $h\leq w$, we get improved bounds of $O(w(h+\ln w))$, which is optimal when $h=\Omega(\ln w)$.  Specifically, we prove the following theorem for rectangular regions:
\begin{theorem}\label{2DSimple}
Let $R$ be a rectangular $h\times w$ region in $\Z^2$ with uniform bias $\lambda\geq 1$.   Suppose without loss of generality that $h\leq w$.  Let $\chi=\sqrt{\lambda}-1$.  Then 
\begin{enumerate}
\item If $\chi> 0$, then the mixing time of $\mmon$ satisfies $\tau(\varepsilon)=O\left(\chi^{-2}w (h + \ln w) \ln \varepsilon^{-1}\right).$
\item If $\chi\geq 0$, then $\tau(\varepsilon)=O \left( w^3(h + \ln w)^2\ln\varepsilon^{-1}\right).$
\end{enumerate}
\end{theorem}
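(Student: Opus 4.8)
The plan is to apply Theorem~\ref{GeomThm} to the chain $\mmon$ with the exponential metric $\phi$ defined in Equation~\ref{distmetric}, using the natural coupling that selects the same diagonal $d$ and bit $b$ for both copies. The first task is to set up the path-coupling structure: take $U$ to be the set of pairs $(\sigma,\rho)$ with $\sigma\oplus\rho$ a single point $x$. Any two downsets can be connected by a monotone path through $U$ whose $\phi$-increments telescope to $\phi(\sigma,\rho)$, since $\phi$ is a sum over the symmetric difference of per-point weights $\mu^{w+h-\|x\|_1}$, so the path-decomposition hypothesis holds. The diameter $B$ is at most $\mu^{w+h}\cdot hw$, so $\ln B = O((w+h)\ln\mu + \ln(hw)) = O(w\ln\lambda + \ln w)$ when $h\le w$; more usefully, for the bounds claimed we will want $\ln B = O(w(h+\ln w))$-type control, which holds since $\mu^{w+h}\le e^{(w+h)\chi}$... actually $\ln\mu = \ln(1+\chi)\le\chi$, giving $\ln B \le (w+h)\chi + \ln(hw)$.

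\textbf{The core computation} is the one-step drift: fix $(\sigma_t,\rho_t)\in U$ differing at the single point $x$, and show $\E[\phi_{t+1}]\le\phi_t$ (the $\beta=1$ case), with strict contraction $\E[\phi_{t+1}]\le\beta\phi_t$ for some $\beta<1$ when $\chi>0$. The coupling changes $\phi$ only when the chosen diagonal $d$ is one of the few passing near $x$. The relevant diagonals are: the one through the corner at $x$ itself (both $b=+1$ and $b=-1$ there let one copy catch up, \emph{removing} $x$ from the symmetric difference, a good move of weight $\mu^{w+h-\|x\|_1}$); the two diagonals one step ``up-right'' from $x$ (at $x+\ubar_1$ and $x+\ubar_2$), where $b=+1$ adds a new discrepancy point of weight $\mu^{w+h-\|x\|_1-1}$ — bad moves, but succeeding with probability $1$; and the two diagonals one step ``down-left'' (at $x-\ubar_1$, $x-\ubar_2$), where $b=-1$ can create a discrepancy of weight $\mu^{w+h-\|x\|_1+1}$ — worse-weighted bad moves, but these succeed only with probability $1/\lambda$. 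Writing $w_x = \mu^{w+h-\|x\|_1}$ and letting $p$ denote the probability of selecting any one specified diagonal, the expected change $\E[\Delta\phi]$ is a sum of terms of the form $p\cdot(\text{weight change})\cdot(\text{success prob})$. The up-right bad moves contribute roughly $+2p\cdot w_x/\mu$; the down-left bad moves contribute at most $+2p\cdot w_x\mu\cdot(1/\lambda) = +2p\cdot w_x/\mu$ as well; and the good move at $x$ contributes $-p\cdot w_x\cdot(1 + 1/\lambda)$. Summing, $\E[\Delta\phi]\le p\,w_x\!\left(4/\mu - 1 - 1/\lambda\right) = p\,w_x\!\left(4/\mu - 1 - 1/\mu^2\right)$; for $d=2$ the choice $\mu=\sqrt\lambda$ must be checked to make this $\le 0$ — and indeed one verifies $4/\mu - 1 - 1/\mu^2 \le 0$ is \emph{not} automatic, so the bookkeeping must be sharper: not all six ``bad'' diagonals are simultaneously available (boundary effects and the downset constraint at $x\pm\ubar_i$ limit which additions/removals are valid), and the correct accounting pairs each specific bad move against a fraction of the good move. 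The honest inequality to aim for is $\E[\Delta\phi]\le -c\,p\,w_x\,\chi$ for a constant $c$ when $\chi>0$, and $\le 0$ when $\chi=0$; since $w_x\ge 1$ and $p = \Theta(1/(w+h))$... wait, $p=\Theta(1/n)$ where $n$ counts available diagonals, i.e.\ $p=\Theta(1/(w+h))$, and $\phi_t = w_x$ on $U$, this gives $\E[\phi_{t+1}] \le (1 - c\chi/(w+h))\phi_t$, i.e.\ $\beta = 1 - \Theta(\chi/(w+h))$.

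\textbf{Finishing.} For part (1), $\chi>0$: Theorem~\ref{GeomThm}(1) gives $\tau(\varepsilon)\le \ln(B\varepsilon^{-1})/(1-\beta) = O\!\left((w+h)\chi^{-1}\cdot\ln(B\varepsilon^{-1})\right)$; plugging $\ln B = O((w+h)\chi + \ln(hw))$ and simplifying with $h\le w$ yields $\tau(\varepsilon) = O\!\left(\chi^{-2}w(h+\ln w)\ln\varepsilon^{-1}\right)$ after absorbing the $(w+h)^2\chi$ term. For part (2), $\chi\ge 0$: we have only $\beta=1$, so we invoke Theorem~\ref{GeomThm}(2), which needs a variance lower bound of the form $\p[|\phi_{t+1}-\phi_t|\ge\eta\phi_t]\ge\kappa$. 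On $U$ with $\phi_t=w_x$, the good move at diagonal through $x$ (with $b=+1$, say) changes $\phi$ by exactly $w_x$ — so $|\phi_{t+1}-\phi_t|\ge\phi_t$, i.e.\ $\eta$ can be taken close to $1$ — and it occurs with probability $p\cdot 1 = \Theta(1/(w+h)) = \Theta(1/w)$; hence $\kappa = \Theta(1/w)$ and $\eta = \Theta(1)$ (bounded below independent of $\lambda$ since even at $\lambda=1$, $\mu=1$, the move changes $\phi$ by $1$). Then Theorem~\ref{GeomThm}(2) gives $\tau(\varepsilon) = O\!\left(\ln^2 B\cdot\ln\varepsilon^{-1}/(\ln^2(1+\eta)\kappa)\right) = O\!\left((w(h+\ln w))^2\cdot w\cdot\ln\varepsilon^{-1}\right) = O\!\left(w^3(h+\ln w)^2\ln\varepsilon^{-1}\right)$, using $\ln B = O(h+\ln w)$ when $\chi=0$ (so $\mu=1$ and $B = hw$, giving $\ln B = O(\ln(hw)) = O(\ln w)$ — in fact then the bound is even $O(w^3\ln^2 w\,\ln\varepsilon^{-1})$, which is subsumed). \textbf{The main obstacle} is the drift computation: verifying that the weighting $\mu=\sqrt\lambda$ makes $\E[\Delta\phi]\le 0$ requires carefully enumerating, for a discrepancy at $x$, exactly which of the neighboring diagonals yield valid (downset-preserving) add/remove moves in each copy — the generic count of "2 bad up-right, 2 bad down-left, 1 good" must be refined because availability depends on the local configuration around $x$, and one must check the worst case over all such local configurations (including those at the boundary of $R$). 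The down-left bad moves, weighted by $\mu$ but suppressed by $1/\lambda = 1/\mu^2$, are what force the choice $\mu = \sqrt\lambda$ rather than a larger exponent, and getting their contribution to exactly cancel against the good move is the delicate point.
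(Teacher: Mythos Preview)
Your overall structure is right---same metric, same coupling, same $U$, same invocation of Theorem~\ref{GeomThm}---but the core computation has a real gap that you yourself flag and do not close.

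\textbf{The missing lemma.} You count up to four bad moves (two ``up-right'' additions $x+\ubar_i$ and two ``down-left'' removals $x-\ubar_j$), obtain
\[
\frac{4}{\mu}-1-\frac{1}{\mu^2},
\]
observe this is not $\le 0$, and then speculate that ``not all \ldots\ are simultaneously available.'' That speculation is exactly the content of the paper's Lemma~\ref{BadBoundLemma}, and it is the key step you are missing: for $i\ne j$, the bad move ``add $x+\ubar_i$'' and the bad move ``remove $x-\ubar_j$'' are \emph{mutually exclusive}. The first requires $x+\ubar_i-\ubar_j\in\rho_t$ (so that $x+\ubar_i$ is supported in $\sigma_t=\rho_t\cup\{x\}$), while the second requires $x-\ubar_j+\ubar_i\notin\rho_t$ (so that $x-\ubar_j$ is removable from $\rho_t$). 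These are the same cube, so both cannot hold. Consequently in dimension $d=2$ there are at most $d=2$ bad moves, not $2d=4$. This is not a boundary effect and not a delicate pairing argument; it is a clean combinatorial incompatibility.

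\textbf{The contraction rate.} With only two bad moves, the drift becomes
\[
\E[\phi_{t+1}-\phi_t]\ \le\ \frac{\phi_t}{2\alpha}\Bigl(\frac{2}{\sqrt\lambda}-1-\frac{1}{\lambda}\Bigr)
\ =\ -\,\frac{\phi_t}{2\alpha}\Bigl(1-\frac{1}{\sqrt\lambda}\Bigr)^{2},
\]
which is order $\chi^2/\alpha$, not $\chi/\alpha$ as you wrote. Your ``finishing'' paragraph posits $1-\beta=\Theta(\chi/(w+h))$ and then arrives at $O(\chi^{-2}\cdots)$ by a hand-wave (``after absorbing the $(w+h)^2\chi$ term''); that arithmetic does not work. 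With the correct rate $1-\beta=\Theta(\chi^2/\alpha)$ and $\ln B=O(h\ln\lambda+\ln(hw))$, Theorem~\ref{GeomThm}(1) gives the stated bound directly.

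\textbf{A smaller point on part (2).} The variance hypothesis in Theorem~\ref{GeomThm}(2) must hold for the coupled pair $(X_t,Y_t)$ at \emph{every} time, not only for pairs in $U$. On $U$ your argument gives $\eta$ of order $1$, but for a general pair the symmetric difference can have many cubes, and a single move changes $\phi$ by only one term of the sum. The paper handles this by choosing the move that adds the cube $v^*\in\sigma\oplus\rho$ of maximal $\|v\|_1$; this term is the smallest summand, and since the total has at most $n$ (or $\alpha$) summands all at least as large, the change is at least $\phi_t/n$. That gives $\eta=\Theta(1/n)$ and $\kappa=\Theta(1/\alpha)$, from which the claimed bound follows.
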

\noindent If $\lambda>1$ is a constant and $h=\Theta(w)$, then part 1 of this theorem applies and gives a bound of $O(n)$, where $n=hw$ is the area of the region.  On the other hand, for $\lambda$ very close to 1, part 2 provides a polynomial bound on the mixing time.  We prove similar bounds for all \emph{nice} regions, to be defined in Section~\ref{generalization}; essentially, these regions are simply-connected and have no holes.

In higher dimensions, we show the chain is rapidly mixing on $d$-dimensional lattice regions provided the bias $\lambda \geq d^2$.  Again, our bounds on the mixing time are optimal when the regions are hyper-cubes, and we show the chain is rapidly mixing for a large family of  simply-connected regions.  For hypercubes we show:
\begin{theorem}\label{HigherDSimple}
Let $R$ be the d-dimensional $h\times h\times\cdots\times h$ hypercube with volume $h^d=n$ and uniform bias $\lambda\geq d^2$.   Then 
the mixing time of $\mmon$ satisfies $\tau(\varepsilon)=O\left(n\ln \varepsilon^{-1}\right).$
\end{theorem}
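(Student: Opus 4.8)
The plan is to apply part~(1) of Theorem~\ref{GeomThm} (we need only its tolerance of a non-integer metric, not the $\beta=1$ case) to the identity coupling of the lazy chain $\mmon$ under the $d$-dimensional version of the metric~(\ref{distmetric}). Let $\ubar_1,\dots,\ubar_d$ be the coordinate directions, $\ubar^*=\ubar_1+\dots+\ubar_d$, and $\mu=\sqrt\lambda$, and set
$$\phi(\sigma,\rho)=\mu^{dh}\sum_{x\in\sigma\oplus\rho}\mu^{-\|x\|_1}.$$
Since $\mu\ge 1$ and every cube of the hypercube has $\|x\|_1\le d(h-1)$, each term is at least $\mu^d\ge 1$, so $\phi$ takes values in $\{0\}\cup[1,B]$; and $\sum_x\mu^{-\|x\|_1}\le\bigl(\mu/(\mu-1)\bigr)^d\le 2^d$ because $\mu\ge d\ge 2$, so $\ln B\le dh\ln\mu+d\ln 2=O(h)$ with $d$ and $\lambda$ fixed. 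For $U$ take the pairs of downsets differing in exactly one cube; as in Lemma~\ref{conlemma} one connects any $\sigma,\rho$ through $U$ by deleting the cubes of $\sigma\setminus\rho$ in order of decreasing coordinate sum and then inserting those of $\rho\setminus\sigma$ in order of increasing coordinate sum, and $\phi$ is additive along such a path.

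The crux is bounding $\E[\Delta\phi]$ on a pair $(\sigma,\rho)\in U$ with $\sigma\oplus\rho=\{x\}$, say $x\in\rho$, writing $W=\mu^{dh-\|x\|_1}$ for the weight of $x$. Because $\sigma$ and $\rho$ agree away from $x$, a coupled step can alter the symmetric difference only when it picks one of the $2^d-1$ diagonals through a vertex of the cube $x$, and a short geometric analysis treats each. On the diagonal through $x$ itself, $b=+1$ makes $\sigma$ acquire $x$ while $\rho$ is stuck, and $b=-1$ makes $\rho$ shed $x$ (with probability $1/\lambda$, as $x$ is maximal in $\rho$) while $\sigma$ is stuck; either move coalesces the pair, contributing $-\tfrac{W}{2N}(1+\tfrac1\lambda)$ to $\E[\Delta\phi]$, where $N$ is the number of diagonals meeting the surface. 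On the diagonal through $x+\ubar_i$, the bit $b=+1$ lets $\rho$ gain $x+\ubar_i$ while $\sigma$ is stuck (its support $x$ missing), and this is possible exactly when $x+\ubar_i-\ubar_k\in\sigma$ for all $k\ne i$; writing $S_A$ for the set of such $i$, these moves add $W/\mu$ each. On the diagonal through $x-\ubar_i$, the bit $b=-1$ lets $\sigma$ lose $x-\ubar_i$ (with probability $1/\lambda$) while $\rho$ is stuck ($x$ sits above $x-\ubar_i$), possible exactly when $x-\ubar_i+\ubar_k\notin\sigma$ for all $k\ne i$; writing $S_B$ for the set of such $i$, these add $W\mu$ each, i.e.\ $W/\mu$ after the factor $1/\lambda$. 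Every other diagonal through a vertex of $x$ (these arise only for $d\ge 4$) leaves the symmetric difference unchanged. The key combinatorial fact is $|S_A|+|S_B|\le d$: if $i\in S_A$ and $j\in S_B$ with $i\ne j$, then $x+\ubar_i-\ubar_j$ is in $\sigma$ by the $S_A$ condition (at index $j$) yet not in $\sigma$ by the $S_B$ condition (at index $i$), a contradiction; so $S_A$ and $S_B$ cannot both contain two distinct indices, whence $|S_A|+|S_B|\le\max(|S_A|,|S_B|,2)\le d$. Since the chain picks each diagonal with probability $1/N$ and each bit with probability $1/2$, collecting the contributions gives
$$\E[\Delta\phi]\ \le\ \frac{W}{2N}\left[-\Bigl(1+\tfrac1\lambda\Bigr)+\frac{|S_A|+|S_B|}{\mu}\right]\ \le\ \frac{W}{2N}\left[\frac d\mu-1-\tfrac1\lambda\right]\ \le\ -\frac{W}{2N\lambda},$$
using $\mu=\sqrt\lambda\ge d$ at the last step. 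Hence $\E[\phi_{t+1}]\le\beta\phi_t$ on $U$ with $\beta=1-\tfrac{1}{2N\lambda}<1$.

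It then remains to substitute into Theorem~\ref{GeomThm}(1), which gives $\tau(\varepsilon)\le\ln(B\varepsilon^{-1})/(1-\beta)=2N\lambda\ln(B\varepsilon^{-1})$, and to evaluate the parameters for the hypercube. Because $\sigma$ is downward closed, a diagonal meets the surface in exactly one vertex, so $N$ equals the number of diagonals intersecting the box, which is $\Theta(h^{d-1})=\Theta(n^{1-1/d})$; together with $\ln B=O(h)=O(n^{1/d})$ this yields $\tau(\varepsilon)=O\bigl(n^{1-1/d}(n^{1/d}+\ln\varepsilon^{-1})\bigr)=O(n\ln\varepsilon^{-1})$, as claimed. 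I expect the only real difficulty to be the geometric casework of the second paragraph — verifying in arbitrary dimension precisely which coupled moves alter a one-cube symmetric difference, and establishing $|S_A|+|S_B|\le d$; granting that, the mixing-time bound is a direct substitution.
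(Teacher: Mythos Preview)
Your proposal is correct and follows essentially the same approach as the paper: Theorem~\ref{HigherDSimple} is obtained there as the hypercube specialization of Theorem~\ref{FixedTilThm}, which uses the identical exponential metric, the identical diagonal coupling, and the same combinatorial fact (Lemma~\ref{BadBoundLemma}) that at most $d$ coupled moves can enlarge a one-cube symmetric difference---your $|S_A|+|S_B|\le d$ argument is exactly that lemma. Your final contraction estimate $d/\mu-1-1/\lambda\le -1/\lambda$ for $\mu\ge d$ is a cleaner specialization to the regime $\lambda\ge d^2$ than the paper's quadratic factorization, and your geometric-series bound on $B$ is slightly sharper than the paper's $n\lambda^{\gamma/2}$, but neither changes the shape of the argument.
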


For clarity of explanation, we will first handle the simple case when $R$ is a rectangular region in $\Z^2$.  These ideas generalize easily to more complex regions and higher dimensions, but require some extra terminology, which we define in Section~\ref{generalization}.

\subsubsection{Rectangular regions in two dimensions}

We now show that in two dimensions the distance metric given in Equation~\ref{distmetric} suffices to bound the mixing time of $\mmon$, proving Theorem~\ref{2DSimple}.  The argument uses a coupling of $(\sigma_t,\rho_t)$ that simply supplies the same diagonal $d$ and bit $b$ to both $\sigma_t$ and $\rho_t$. We let $U$ be the set of pairs of downsets that differ on a single cube. However, instead of the Hamming distance, we use the distance metric given in equation~\ref{distmetric}:
$$\phi(\sigma,\rho)=\mu^{w+h}\sum_{x\in\sigma\oplus\rho}\mu^{- \|x\|_1,}$$
where $\mu=\sqrt{\lambda}\geq 1$.
We will show that this distance metric satisfies non-negative contraction in $\phi_t$, which is one of the requirements for Theorem \ref{GeomThm}. However, before we can prove that the distances decrease on average, we examine the moves which can \emph{increase} the distance.
 
 \begin{figure}[ht]
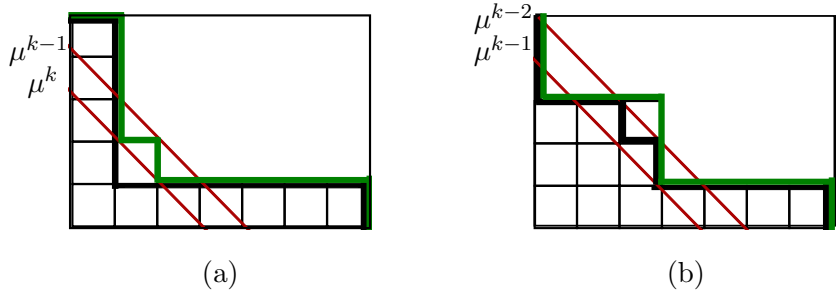

\centering
\includegraphics[scale=.2]{distmetric2.pdf}
\put(-130,52){$\mu^{k}$}
\put(-137,64){$\mu^{k-1}$}
\put(-64,-20){(a)}
\hspace{.8in}
\includegraphics[scale=.2]{distmetric3.pdf}
\put(-137,64){$\mu^{k-1}$}
\put(-137,76){$\mu^{k-2}$}
\put(-64,-20){(b)}
\caption[Path coupling with an exponential distance metric]{ Path coupling with an exponential distance metric.  Two cases where $\sigma_t=\rho_t\union\{x\}$.}
\label{distmetric23}
\end{figure}

For a pair $(\sigma_t,\rho_t)\in U$, there are two different ways the distance can increase in $(\sigma_{t+1},\rho_{t+1})$. If $\sigma_t=\rho_t\union\{x\}$, we can increase the distance by attempting to add a cube $v$ that succeeds in $\sigma_t$ but fails in $\rho_t$, as in Figure~\ref{distmetric23}a. This occurs when $v=x+\ubar_i$ for some $i$, so $v$ is ``supported" in $\sigma_t$ but not $\rho_t$.  Notice that the distance metric $\phi$ gives these bad moves weight that is $\mu$ times smaller than the weight of the two good moves (adding or removing $x$), counteracting their higher probability. The other way to increase the distance between $\sigma_t$ and $\rho_t$ is to remove a $v$ that succeeds in $\rho_t$ but not in $\sigma_t$. This occurs when $v=x-\ubar_i$ for some $i$, as the move creates a valid downset in $\rho_t$ but not in $\sigma_t$, as in Figure~\ref{distmetric23}b. In this case, the distance metric $\phi$ gives these bad moves weight that is $\mu$ times larger than the weight of the two good moves, but for small enough $\mu$, they will still not outweigh the good moves, since the bad moves are less likely to occur than the good moves in this case.
We may now prove Theorem \ref{2DSimple}.

\begin{proof}
We will show that the distance metric $\phi$ defined above satisfies the conditions of Path Coupling Theorem for Exponential metrics, Theorem \ref{GeomThm}.  First we want to show that the expected change in $\phi$ is negative.  There are at most $2$ choices of $(d,b)$ that can increase $\phi_t$.  We claim that each of these has an expected increase of at most $\phi_t\lambda^{-1/2}$.  To see this, consider a move of the form $v=x+\ubar_i$ for some $i$.  Then the increase in distance is $(\sqrt{\lambda})^{w+h - \|v\|_1}={\phi_t}{\lambda}^{-1/2}$.  If the move is of the form $v=x-\ubar_i$ for some $i$, then the increase in distance is ${\lambda}^{-\|v\|_1/2}=\phi_t \sqrt{\lambda}$, but the chance of choosing an appropriate $p$ is ${1}/{\lambda}$. Therefore the expected increase is at most ${\phi_t}{\lambda}^{-1/2}$.

There are also two choices of $(d,b)$ that \emph{decrease} $\phi_t$; corresponding to adding $x$ and removing $x$. These each decrease $\phi_t$ by $\phi_t$, and succeed with probability $1$ and ${1}/{\lambda}$, respectively. Let $\alpha=h+w$; this is the number of choices of diagonals $d$.  The expected change in distance satisfies
\begin{align*}
\E[\phi_{t+1}-\phi_t] &\leq \frac{1}{2\alpha} \left(d\cdot\frac{\phi_t}{\sqrt{\lambda}}-\left(1+\frac{1}{\lambda} \right)\phi_t\right)\\
	&= \frac{\phi_t}{2\alpha} \left(\frac{d}{\sqrt{\lambda}}-1-\frac{1}{\lambda} \right)\\
	&\leq - \frac{\phi_t\chi^2}{2\alpha}.
\end{align*}

Next,, we check the other conditions of Theorem~\ref{GeomThm}.
For arbitrary $\sigma,\rho\in\Omega_{mon}$, if $x\in\sigma\oplus\rho$ for some $x$, then $\phi(\sigma,\rho)\geq \sqrt{\lambda}^{h+w-\|x\|_1}\geq 1$. Therefore if $\phi(\sigma,\rho)<1$, $\phi(\sigma,\rho)=0$.  Let $U$ be the set of pairs of downsets that differ on a single vector. For arbitrary $\sigma,\rho\in\Omega_{mon}$, we can connect $\sigma$ to $\rho$ by simply adding or removing the vectors in $\sigma\oplus\rho$ one by one, and $\phi(\sigma,\rho)$ is the sum of the distances.  Since the volume of ${R}$ is $n$, there are at most $n$ possible cubes in $\sigma\oplus\rho$, so $\phi(\sigma,\rho)\leq n\lambda^{h/2}$ for all~$\sigma,\rho$.  

We consider two cases.  If $\chi>0$, then $\E[\phi_{t+1}]\leq \beta \phi_t$, where $\beta=1-{\chi^2}/{\alpha}$.  Thus, by Theorem \ref{GeomThm}, we have $\tau(\epsilon)=O(\chi^{-2}w (h + \ln w)\ln \epsilon^{-1})$.  
On the other hand, if $\lambda - 1> 0,$ but less than any constant, then we use the second part of Theorem ~\ref{GeomThm}.  For any pair of $\sigma,\rho$, $\mmon$ can always add a vector $v^*$ in their difference that maximizes $\|v\|_1$.  This would change $\phi_t$ by at least $\phi_t/{\alpha}$.  The appropriate $v^*$ is chosen with probability at least ${1}/{\alpha}$ and the appropriate $b$ is chosen with probability ${1}/{2}$ (and every $p$ succeeds when adding).  Therefore there is a ${1}/{(2\alpha)}$ chance of changing~$\phi_t$ by ${\phi_t}/{(2\alpha)}$, i.e. $P\left(|\phi_{t+1}-\phi_t|\geq {\phi_t}/{(2\alpha)}\right)\geq {1}/{(2\alpha)}.$
Hence in this case
$$\tau(\epsilon)=O\left( \frac{\ln^2(B)}{\ln^2(1+\frac{1}{\alpha})\frac{1}{2\alpha}}  \ln \epsilon^{-1}\right)=O( \alpha^3(h +\ln w)^2 \ln \epsilon^{-1}) = O( w^3(h +\ln w)^2 \ln \epsilon^{-1}).$$

\end{proof}

\subsection{More complex regions and higher dimensions}\label{generalization}
The results of Section~\ref{FixedExpProof} extend to higher dimensions and more general regions.  We need to be a bit more formal, so we begin with several definitions.

Given a simply-connected region $\widehat{R}$ in $\R^d$ that is a union of unit cubes on the integer lattice, we associate to it a point set $R=\widehat{R}\cap \Z^d$.  In this case we say that $R$ is simply-connected.  
In three dimensions, a monotonic surface in $R$ is the union of two-dimensional faces such that any cross-section along an axis-aligned plane is a two-dimensional monotonic surface. Such a surface is illustrated in Figure \ref{MonSurf}b when ${R}$ is a $2\times 2\times 2$ region.  In general, given a hypercubic region ${R} \subset \Z^d$ composed of unit hyper-cubes,
a $d$-dimensional monotonic surface is a set of $(d-1)$-dimensional faces, such that any cross-section along an axis-aligned $(d-1)$-dimensional hyper-plane is a $(d-1)$-dimensional monotonic surface.

We will restrict our focus to a family of simply-connected regions that have favorable properties for the purposes of sampling monotonic surfaces.  In order to define this family, we need a few preliminaries.

\begin{definition}
Let $\ubar^* = (1,1,...,1) \in \Z^d$. For $v\in \Z^d$, we define $\widehat{r}(v) =\{v+k\ubar^*: k\in \R\},$
and we define the {\bemph ray} $r(v)$ to be the set
$$r(v)=\widehat{r}(v)\bigcap \Z^d.$$
\end{definition}

\begin{definition}
A $d$-dimensional simply-connected region $\widehat{R} \subset \R^d$ is {\bemph nice} if, for all $v \in \hat{R}$, $\widehat{R}\bigcap \hat{r}(v)$ is connected.  We call its associated point set $R$ a \emph{nice} region.
\end{definition}
\noindent A nice region $\widehat{R}$ has no holes, and in particular all monotonic surfaces in $\widehat{R}$ are the upper boundary of a subset of cubes in $\widehat{R}$. Note that all hyper-rectangular regions are nice.

Let $\ubar_i$ be the unit vector in the $i$th direction.  Given a nice region $R$, we let $R_L = \{v \in R {\hbox{\ such that \ }} v-\ubar^* \notin R\}$ be the lower envelope of the region.

\begin{definition}
Let $R \subset \Z^d$ be a nice region.
A {\bemph downset} is a subset $\sigma \subseteq R,$ with $R_L \subseteq \sigma,$
such that for any $i$, if $v\in\sigma$ and $v-\ubar_i\in R$, then $v-\ubar_i\in\sigma$.
\end{definition}

\noindent For a nice region $R$, we define the state space $\Omega_{mon}$ to be the set of all downsets of $R$.
We will represent a downset by its \emph{upper boundary}:

\begin{definition} Let $R$ be any nice region and let $\sigma$ be any downset of $R$. We say
the {\bemph upper boundary} of $\sigma$ is $\partial(\sigma) = \{v \in \sigma {\hbox{\ such that \ }} v+\ubar^*
\notin \sigma \}$.
\end{definition}
\noindent The upper boundary is a monotonic surface in bijection with
the downset that defines it.  It is important to notice that for any downset $\sigma$ and point $v \notin \sigma$, if $\sigma \cup v$ is a valid downset, then $|\partial(\sigma)| = |\partial(\sigma \cup v)|$. This is because 
$\partial(\sigma \cup \{v\}) = \partial(\sigma) \cup \{v\} \setminus \{v-\ubar^*\}.$
It follows that for any nice region $R$, the size of the boundary of a valid downset is fixed.
This observation motivates the following two definitions that will be convenient when we state the mixing time of our Markov chain.

\begin{definition}
The {\bemph span} of a nice region $R$ is $\alpha = |\partial(\sigma)|$, for
any downset $\sigma$ of $R$.
\end{definition}

\begin{definition}
Let $R$ be any nice region. The {\bemph stretch} of $R$ is $\gamma=\max_{v\in R} |R\cap r(v)|.$
\end{definition}
\noindent Thus, the stretch is the maximal distance between two points in $R$ in the $\ubar^*$ direction.
Suppose, for example, that $R$ is an $h  \times \dots \times h$ region in $\Z^d$.  Then the
span is $\alpha = d h^{d-1}$ and the stretch is~$\gamma=h$.

The Markov chain operates as follows.  
Starting at an arbitrary downset, e.g., let $\sigma_0=R_L$, where $R_L$ is the empty downset, and repeat
the following steps. If we are at a downset $\sigma_t$ at time $t$, pick a point $v \in \partial(\sigma)$ and an integer $b \in \pm 1$ uniformly at random. If $b=+1$, let $\sigma_{t+1} = \sigma_t \cup (v+\ubar^*)$ if this is a valid downset. If $b=-1$, let $\sigma_{t+1} = \sigma_t \setminus \{v\}$ if this is a valid downset, however in the biased case we will do this move with the appropriate Metroplis-Hastings probabilities.   In all other cases, keep $\sigma_t$ unchanged so that $\sigma_{t+1} = \sigma_t$. 

\vspace{.2in}
\noindent  {\bf The Biased Markov chain $\mmon$} 

\vspace{.1in}
\noindent {\tt Starting at any $\sigma_0$, iterate  the following:} 

\begin{itemize}
\item {\tt Choose $(v,b,p)$ uniformly at random from $\partial(\sigma_t) \times \{+1, -1\} \times (0,1)$.}

\item {\tt If $b=+1$, let $\sigma_{t+1}=\sigma \cup \{v+\ubar^*\}$ if it is a valid downset.}

\item {\tt If $b=-1$ and $p \leq {\lambda_v}^{-1}$, let $\sigma_{t+1} = \sigma \setminus \{v\}$ if it is
a valid downset.}

\item {\tt Otherwise let $\sigma_{t+1}=\sigma_t$.}

\end{itemize}

Observe that the chain connects the state space.  To see this, let $\sigma$ be any downset and let $v^{max}$ be any point
in $\sigma$ such that $\sum_i v_i^{max}$ is maximized.  We can always remove $v^{max}$
and move to $\sigma' = \sigma \setminus v^{max}$ without violating the downset condition. Thus,
from any valid downset $\sigma$ we can always remove points and get to the ``lowest'' downset~$R_L$. Also, such a sequence of steps can be reversed to move from $R_L$ to any other downset $\rho$.
Moreover, the biased Markov chain $\mmon$ converges to the correct distribution on $\Omega_{mon}$
by the detailed balance condition.  Further, we again can see that the chain must be lazy since for each choice of $v$ one of the choices
of $b$ will keep the configuration unchanged.

\condcomment{\boolean{includefigs}}{ 
\begin{figure}[ht]
\centering
\includegraphics[scale=.5]{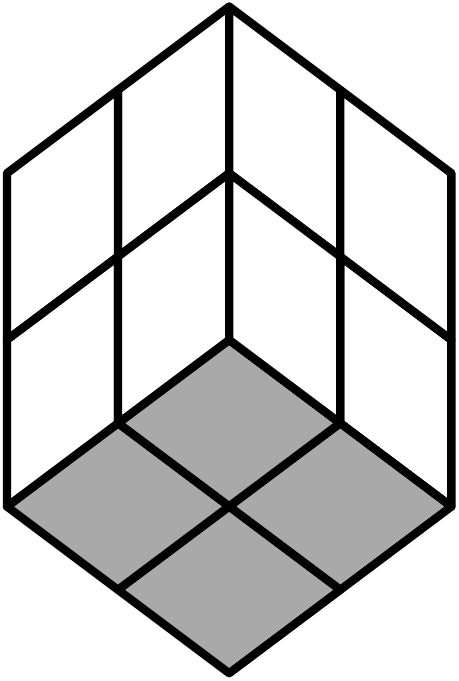}\hspace{.5in}\includegraphics[scale=.5]{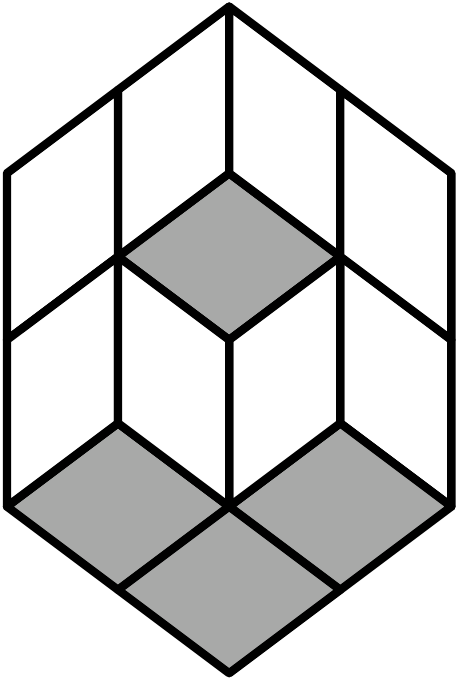}
\caption[Path Coupling of Monotonic Surfaces in Three Dimensions]{A pair of downsets $\sigma_t$ (left) and $\rho_t$ (right) where $\rho_t=\sigma_t\union\{(0,0,0)\}$.  These downsets differ on $x=(0,0,0)$, where $\mmon$ can increase $\phi_t$ by adding $x+\ubar_i,$ for any $i$.}
\label{3dCoupProbFig}
\end{figure}
}

The following bounds the mixing rate of the chain $\mmon$ when the bias is large enough.

\begin{theorem}
\label{FixedTilThm}
Let $R$ be any nice $d$-dimensional region with volume $n$, span $\alpha$, stretch $\gamma$, and uniform bias $\lambda$.  Let $\widehat{\lambda}=(d+\sqrt{d^2-4})/{2}$.  Define $\chi =  \widehat{\lambda}-\lambda^{-1/2}$.
\begin{enumerate}
\item If $\chi> 0$, then the mixing time of $\mmon$ satisfies $\tau(\varepsilon)=O\left(\chi^{-2}\alpha (\gamma\ln \lambda + \ln n) \ln \varepsilon^{-1}\right).$
\item If $\chi\geq 0$, then $\tau(\varepsilon)=O( n^2\alpha(\gamma\ln\lambda +\ln n)^2 \ln \epsilon^{-1})
.$
\end{enumerate}
In particular, if $\lambda\geq d^2$ then $\chi\geq 1/d^3$, so $\tau(\varepsilon)=O\left(d^3\alpha (\gamma\ln \lambda + \ln n) \ln \varepsilon^{-1}\right).$  
\end{theorem}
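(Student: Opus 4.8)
The plan is to reproduce, in general dimension $d$ and for any nice region, the scheme used for Theorem~\ref{2DSimple}, with the span $\alpha$ and stretch $\gamma$ taking the roles of $h+w$ and $h$. I would use the coupling that hands both copies the same $\ubar^*$-ray (``diagonal''), the same bit $b$, and the same threshold $p$, together with the exponential metric obtained by setting $\mu=\sqrt\lambda$, fixing a constant $C\geq\max_{v\in R}\|v\|_1$, and putting $\phi(\sigma,\rho)=\sum_{x\in\sigma\oplus\rho}\mu^{C-\|x\|_1}$. Each summand is at least $1$, so $\phi\in\{0\}\cup[1,\infty)$; and since $\sigma\oplus\rho$ has at most $n$ cubes and meets every $\ubar^*$-ray in an interval of at most $\gamma$ cubes, $B:=\max\phi\leq n\lambda^{\gamma/2}$, hence $\ln B=O(\ln n+\gamma\ln\lambda)$. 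With $U$ the set of pairs of downsets differing in a single cube, any $\sigma,\rho$ are joined by a $U$-path on which $\phi$ is additive (delete the cubes of $\sigma\setminus\rho$ topmost-first, then insert those of $\rho\setminus\sigma$ lowest-first; minimality of the cube acted on keeps every intermediate set a valid downset), so the path hypothesis of Theorem~\ref{GeomThm} holds.

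The crux is the one-step bound on $U$. Say $\rho_t=\sigma_t\cup\{x\}$. By $\partial(\sigma\cup\{v\})=\partial(\sigma)\cup\{v\}\setminus\{v-\ubar^*\}$ the two upper boundaries differ on exactly the ray through $x$; because $x\pm\ubar_i$ and $x\pm\ubar^*$ must lie where they do relative to the two downsets, choosing that ray makes the $b=+1$ move coalesce and the $b=-1$ move coalesce with probability $\lambda^{-1}$, a combined ``good'' decrease of $(1+\lambda^{-1})\phi_t$. The only moves that can raise $\phi_t$ lie on the rays neighboring that of $x$: adding some $x+\ubar_i$ (succeeds with probability $1$, weight $\mu^{-1}\phi_t$) or removing some $x-\ubar_i$ (succeeds with probability $\lambda^{-1}$, weight $\mu\,\phi_t$), each of expected increase at most $\phi_t\lambda^{-1/2}$. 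As in two dimensions, ``add $x+\ubar_i$'' and ``remove $x-\ubar_j$'' with $i\neq j$ impose contradictory support conditions, so at most $d$ of these bad moves are simultaneously active. Averaging over the $2\alpha$ equally likely (ray, bit) pairs,
$$\E[\phi_{t+1}-\phi_t]\ \leq\ \frac{\phi_t}{2\alpha}\Bigl(\frac{d}{\sqrt\lambda}-1-\frac1\lambda\Bigr).$$
With $u=\lambda^{-1/2}$, the identities $\widehat\lambda+\widehat\lambda^{-1}=d$ and $\widehat\lambda\,\widehat\lambda^{-1}=1$ factor the bracket as $-(\widehat\lambda-u)(\widehat\lambda^{-1}-u)=-\chi(\widehat\lambda^{-1}-u)$; since $\widehat\lambda\leq 1$ we have $\widehat\lambda^{-1}-u\geq\widehat\lambda-u=\chi\geq 0$, so the bracket is $\leq-\chi^2$ and $\E[\phi_{t+1}]\leq(1-\chi^2/2\alpha)\phi_t$.

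Part~1 ($\chi>0$) is then Theorem~\ref{GeomThm}(1) with $\beta=1-\chi^2/2\alpha$, giving $\tau(\varepsilon)\leq\ln(B\varepsilon^{-1})/(1-\beta)=O(\chi^{-2}\alpha(\gamma\ln\lambda+\ln n)\ln\varepsilon^{-1})$. Part~2 ($\chi\geq 0$, so possibly $\beta=1$) additionally needs the variance hypothesis of Theorem~\ref{GeomThm}(2), for every $\sigma\neq\rho$: take $v^\ast\in\sigma\oplus\rho$ of minimal $\|\cdot\|_1$, hence of maximal weight; minimality forces each $v^\ast-\ubar_i$ and $v^\ast-\ubar^*$ lying in $R$ into both downsets, so $v^\ast$ is always addable to whichever downset lacks it, the coupled move doing so is chosen with probability $\Omega(1/\alpha)$, and it changes $\phi_t$ by $\Omega(\phi_t/n)$ because $\phi_t\leq n\cdot\mathrm{weight}(v^\ast)$; Theorem~\ref{GeomThm}(2) then gives $\tau(\varepsilon)=O(n^2\alpha(\gamma\ln\lambda+\ln n)^2\ln\varepsilon^{-1})$. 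Finally, if $\lambda\geq d^2$ then $u\leq 1/d$, a short estimate gives $\chi\geq 1/d^3$ and $\widehat\lambda^{-1}-u\geq\tfrac12$, so the bracket is $\leq-\chi(\widehat\lambda^{-1}-u)\leq-\chi/2\leq-1/(2d^3)$, whence $1-\beta\geq 1/(4\alpha d^3)$ and Part~1 yields $\tau(\varepsilon)=O(d^3\alpha(\gamma\ln\lambda+\ln n)\ln\varepsilon^{-1})$.

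The step I expect to be the main obstacle is the one-step bound: one must check, uniformly over $d$ and all nice regions, that only the $O(d)$ neighboring-ray moves can increase $\phi_t$, that at most $d$ of them are active at once, and that the exponential weights $\mu^{-\|x\|_1}$ are tuned so that the probability-$1$ bad additions and the probability-$\lambda^{-1}$ bad removals are each damped down to expected increase $\phi_t\lambda^{-1/2}$ --- this is precisely what forces $\mu=\sqrt\lambda$ and produces the quadratic $u^2-du+1$ whose roots are $\widehat\lambda^{\pm1}$. The supporting facts --- the $U$-path, the bound $B\leq n\lambda^{\gamma/2}$, and addability of the heaviest disagreement cube in Part~2 --- should be routine once one uses that $\ubar^*$-rays meet nice regions in intervals.
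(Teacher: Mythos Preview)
Your approach is essentially identical to the paper's: the same coupling handing both copies the same $(v,b,p)$, the same exponential metric $\phi(\sigma,\rho)=\sum_{x\in\sigma\oplus\rho}(\sqrt\lambda)^{C-\|x\|_1}$, the same combinatorial lemma bounding the number of distance-increasing moves by $d$ (this is exactly Lemma~\ref{BadBoundLemma} in the paper, proved by the same ``add $x+\ubar_i$ and remove $x-\ubar_j$ are incompatible for $i\neq j$'' contradiction you sketch), the same one-step computation yielding the bracket $d/\sqrt\lambda-1-1/\lambda$, and the same appeal to the two parts of Theorem~\ref{GeomThm}. Your Part~2 choice of $v^*$ \emph{minimizing} $\|\cdot\|_1$ (hence of maximal weight and, by minimality, always addable to the downset lacking it) is the correct one and in fact cleaner than the paper's phrasing.

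One slip to flag: you justify $-\chi(\widehat\lambda^{-1}-u)\leq-\chi^2$ by writing ``since $\widehat\lambda\leq1$'', but $\widehat\lambda=(d+\sqrt{d^2-4})/2\geq1$ for all $d\geq2$, with equality only at $d=2$; likewise your later estimate $\widehat\lambda^{-1}-u\geq\tfrac12$ fails for $d\geq3$ since $\widehat\lambda^{-1}=2/(d+\sqrt{d^2-4})\approx1/d$. The paper asserts the same final inequality $\text{bracket}\leq-\chi^2$ at the same point without spelling out a justification, so you are not deviating from its argument here---but the stated reason is incorrect and should be repaired.
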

\noindent  Note that for all nice regions, $\alpha, \gamma < n$, so the mixing time of $\mmon$ is always polynomially bounded for the given biases.  
When $R$ is an $h_1 \times h_2 \times \dots \times h_d$ hyper-rectangular region, we get optimal bounds as long as each of the $h_i$ are fairly close.  Let $h_{min}=\min_i h_i$.  The span of $R$ is $\alpha=O({n}/{h_{min}})$ and the stretch is $\gamma=O( h_{min})$.  Hence we get $\tau(\varepsilon)=O( n + \frac{n\ln n}{h_{min}} \ln \varepsilon^{-1}),$ which is optimal as long as $h_{min}=\Omega(\ln n).$  In particular, we get optimal bounds when the region is a hypercube with $\alpha = O(h^{d-1})$ and $\gamma = O(h).$

Next we define the exponential metric we need to prove Theorem \ref{FixedTilThm}.  Let $x_0$ be any vector in $R$ with maximal $L_1$ norm.  Then define
$$\phi(\sigma,\rho)=\sum_{x\in\sigma\oplus\rho}(\sqrt{\lambda})^{\|x_0\|_1-\|x\|_1}.$$
  We return to the coupling of $(\sigma_t,\rho_t)$ that simply supplies the same $(v^*, b, p)$ to both $\sigma_t$ and $\rho_t$. We let $U$ be the set of downsets that differ on a single cube. However, instead of the Hamming distance, we use the distance metric
$$\phi(\sigma,\rho)=\sum_{x\in\sigma\oplus\rho}(\sqrt{\lambda})^{\|x_0\|_1-\|x\|_1},$$
where $x_0$ is any vector in $R$ with maximal $L_1$ norm.  
We will show that this distance metric satisfies non-negative contraction in $\phi_t$, which is one of the requirements for Theorem \ref{GeomThm}. However, before we can prove that the distances decrease on average, we examine the moves which can \emph{increase} the distance.

For a pair $(\sigma_t,\rho_t)\in U$, there are two different ways the distance can increase in $(\sigma_{t+1},\rho_{t+1})$. If $\sigma_t=\rho_t\union\{x\}$, we can increase the distance by attempting to add a $v$ that succeeds in $\sigma_t$ but fails in $\rho_t$. This occurs when $v=x+\ubar_i$ for some $i$, so $v$ is ``supported" in $\sigma_t$ but not $\rho_t$. The other way to increase the distance between $\sigma_t$ and $\rho_t$ is to remove a $v$ that succeeds in $\rho_t$ but not in $\sigma_t$. This occurs when $v=x-\ubar_i$ for some $i$, as the move creates a valid downset in $\rho$ but not in $\sigma$. The following lemma bounds the number of such increases in distance.

\begin{lemma}
\label{BadBoundLemma}
For $\sigma_t=\rho_t\union\{x\}$, there are at most $d$ choices of $(v^*,b)$ such that $\phi_t$ increases.
\end{lemma}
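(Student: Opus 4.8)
The plan is to carefully enumerate the moves $(v^*, b)$ of the coupled chain that can strictly increase $\phi_t$ when $\sigma_t = \rho_t \cup \{x\}$, and to show there are at most $d$ of them. Recall that a move increases $\phi_t$ only if it changes the symmetric difference $\sigma_t \oplus \rho_t = \{x\}$, and since the two configurations agree everywhere except at $x$, the only way the coupled step produces different outcomes in the two chains is if the proposed add or remove is legal in one chain but not the other. First I would split into the two cases identified in the text. Case $b = +1$: we attempt to add $v^* + \ubar^*$ where $v^* \in \partial(\sigma_t) \cap \partial(\rho_t)$ or possibly in only one boundary. The addition can succeed in $\sigma_t$ but fail in $\rho_t$ exactly when the newly added cube is supported in $\sigma_t$ by $x$ but not otherwise; that is, when $v^* + \ubar^* - \ubar_i = x$ for some $i$, equivalently $v^* + \ubar^* = x + \ubar_i$. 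There are exactly $d$ such target cubes $x + \ubar_i$, one per coordinate direction, but I need to check that each corresponds to at most one legal proposing vertex $v^*$ and that these are the only add-moves that change the difference. Case $b = -1$: we attempt to remove $v^*$ with probability $1/\lambda$; the removal can succeed in $\rho_t$ but fail in $\sigma_t$ exactly when removing $v^*$ is blocked in $\sigma_t$ only by the presence of $x$, i.e. when $v^* = x - \ubar_i$ for some $i$. Again there are $d$ candidate vertices.

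The key observation that brings the count down from $2d$ to $d$ is that these two families of bad moves cannot both be nonempty, or more precisely that their union has size at most $d$. I would argue this geometrically: a vertex of the form $x + \ubar_i$ being a legal addition target in $\sigma_t$ forces $x$ to lie on the upper boundary facing direction $i$ (so $x + \ubar_i \notin \sigma_t$ but its other supports are present), whereas a vertex $x - \ubar_i$ being a legal removal in $\rho_t$ forces $x - \ubar_i$ to be a ``removable corner'' of $\rho_t$, which constrains what sits above it. The cleanest route is probably to observe that for each direction $i$, at most one of ``adding $x + \ubar_i$ is a bad move'' and ``removing $x - \ubar_i$ is a bad move'' can occur — because the first requires $x + \ubar_i \notin \sigma_t$ (on the surface in direction $i$) while, combined with the downset structure, the second would require a configuration around direction $i$ that is incompatible. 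Pairing the two types of bad move by the coordinate $i$ then yields at most $d$ bad moves total.

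I expect the main obstacle to be exactly this pairing argument: making rigorous why, direction by direction, the "add above" bad move and the "remove below" bad move are mutually exclusive (or at worst jointly contribute one unit per direction). This requires unwinding the definition of a valid downset — the condition that $v \in \sigma$ and $v - \ubar_i \in R$ imply $v - \ubar_i \in \sigma$ — together with the fact that $\sigma_t$ and $\rho_t$ differ only at $x$, so $x - \ubar^* \in \sigma_t$ (as $\sigma_t$ is a downset containing $x$) and $x + \ubar^* \notin \sigma_t$ (else $x$ could not have just been added to form $\sigma_t$ as a legal difference, or more carefully, else the add move from $\rho_t$ would not have been legal). I would also need to handle the boundary bookkeeping: the proposing vertex $v^*$ must actually lie in $\partial(\sigma_t)$ or $\partial(\rho_t)$ for the move to be selectable, and I should confirm that each of the $d$ counted target cubes arises from a genuinely selectable $v^*$, so the bound is tight. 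Once the direction-by-direction exclusivity is established, the conclusion $\le d$ follows immediately by summing over the $d$ coordinates.
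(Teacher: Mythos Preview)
Your enumeration of the two families of distance-increasing moves is correct, and you rightly identify that the crux is an exclusivity argument that cuts the naive count from $2d$ down to $d$. The gap is that you have the \emph{wrong pairing}. You propose that for each fixed direction $i$, the bad add-move at $x+\ubar_i$ and the bad remove-move at $x-\ubar_i$ are mutually exclusive. This is false, and the paper even exhibits such a configuration (Figure~\ref{PathCoupBothFig}): both can be bad simultaneously in the \emph{same} direction. Concretely, ``add $x+\ubar_i$ succeeds in $\sigma_t$'' forces $x+\ubar_i-\ubar_k\in\sigma_t$ for all $k$, while ``remove $x-\ubar_i$ succeeds in $\rho_t$'' forces $x-\ubar_i+\ubar_k\notin\rho_t$ for all $k$. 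For $k=i$ these say $x\in\sigma_t$ and $x\notin\rho_t$, both of which hold by hypothesis; for $k\neq i$ the two constraints concern different lattice points and do not conflict. So your same-direction exclusivity fails, and your hedge ``at worst jointly contribute one unit per direction'' does not hold either: they contribute two.

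The exclusion that actually works is \emph{cross}-directional: if the add-move at $x+\ubar_i$ is bad, then for every $j\neq i$ the remove-move at $x-\ubar_j$ is \emph{not} bad. The first gives $x+\ubar_i-\ubar_j\in\sigma_t$; the second would give $x-\ubar_j+\ubar_i\notin\rho_t$. These are the same point, and since $i\neq j$ it is not $x$, contradicting $\sigma_t\oplus\rho_t=\{x\}$. From this one deduces the three possibilities in the paper's proof: all bad moves are adds ($\le d$ of them), all are removes ($\le d$), or both types occur but then only in a single common direction ($2\le d$). Your per-direction pairing cannot recover this trichotomy; the argument must compare an add direction against a \emph{different} remove direction.
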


\begin{proof}

We prove the Lemma by claiming that for dimensions $i\not= j$, if $\mmon$ can increase the distance by choosing $v=x+\ubar_i$, then it cannot increase the distance by choosing $v=x-\ubar_j$.
This follows from a proof by contradiction: If $\mmon$ can increase the distance with $x+\ubar_i$, then it is because $\rho_{t+1}=\rho_t\union\{x+\ubar_i\}$ is a valid downset. That means $x+\ubar_i-\ubar_j\in \rho_t$. On the other hand, if $\mmon$ can increase the distance with $x-\ubar_j$, it is because $\sigma_{t+1}=\sigma_t\backslash\{x-\ubar_j\}$ is a valid downset, which is only true if $x-\ubar_j+\ubar_i\not\in \sigma_t$. But this contradicts the fact $\sigma_t\oplus \rho_t=\{x\}$, justifying our claim.

This implies that, to increase the distance, $\mmon$ may add vectors of the form $x+\ubar_i$ for various dimensions $i$ as in Figure \ref{3dCoupProbFig}, or it may remove vectors of the form $x-\ubar_i$ for various dimensions $i$,
 or it may add $x+\ubar_i$ and remove $x-\ubar_i$ in a single dimension $i$, as in Figure \ref{PathCoupBothFig}. In each of these cases, there are at most $d$ choices of $v$ that increase the distance.
\end{proof}

\condcomment{\boolean{includefigs}}{ 
\begin{figure}[ht]
\centering
\includegraphics[scale=.5]{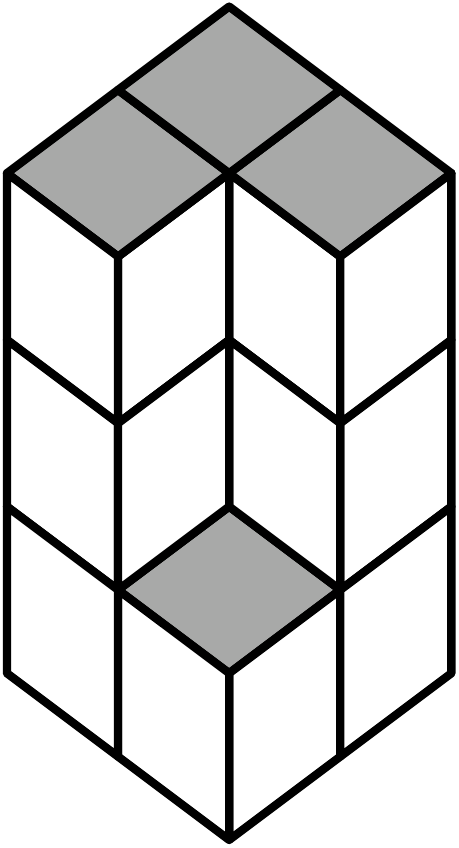}
\hspace{.5in}
\includegraphics[scale=.5]{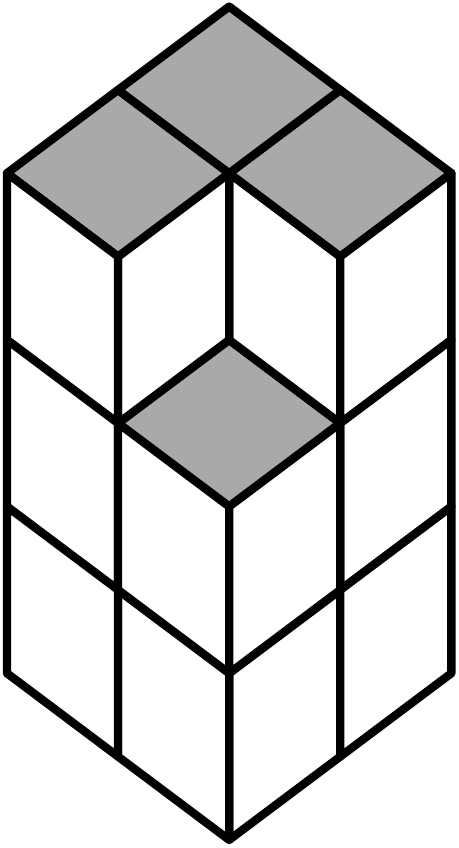}
\caption{Downsets that differ on $x$, where $\mmon$ increases $\phi_t$ by adding the vector above $x$ or removing the vector below $x$.}
\label{PathCoupBothFig}
\end{figure}
}
We may now prove Theorem \ref{FixedTilThm}.

\vskip.2in
\noindent {\it Proof of Theorem \ref{FixedTilThm}}. \ 
We will show that the distance metric $\phi$ defined above satisfies the conditions of Path Coupling Theorem for Exponential metrics, Theorem \ref{GeomThm}.  First we want to show that the expected change in $\phi$ is negative.  By Lemma~\ref{BadBoundLemma}, there are at most $d$ choices of $(v^*,b)$ that can increase $\phi_t$.  We claim that each of these has an expected increase of at most $\phi_t\lambda^{-1/2}$.  To see this, consider a move of the form $v=x+\ubar_i$ for some $i$.  Then the increase in distance is $(\sqrt{\lambda})^{\|x_0\|_1 - \|v\|_1}={\phi_t}{\lambda}^{-1/2}$.  If the move is of the form $v=x-\ubar_i$ for some $i$, then the increase in distance is ${\lambda}^{-\|v\|_1/2}=\phi_t \sqrt{\lambda}$, but the chance of choosing an appropriate $p$ is ${1}/{\lambda}$. Therefore the expected increase is at most ${\phi_t}{\lambda}^{-1/2}$.

There are also two choices of $(v^*,b)$ that \emph{decrease} $\phi_t$; corresponding to adding~$x$ and removing~$x$. These each decrease $\phi_t$ by $\phi_t$, and succeed with probability $1$ and ${1}/{\lambda}$, respectively. Therefore the expected change in distance satisfies
\begin{align*}
\E[\phi_{t+1}-\phi_t] &\leq \frac{1}{2\alpha} \left(d\cdot\frac{\phi_t}{\sqrt{\lambda}}-\left(1+\frac{1}{\lambda} \right)\phi_t\right)\\
	&= \frac{\phi_t}{2\alpha} \left(\frac{d}{\sqrt{\lambda}}-1-\frac{1}{\lambda} \right)\\
	&= -\frac{\phi_t}{2\alpha} \left(\frac{1}{\sqrt{\lambda}} - \frac{d+\sqrt{d^2-4}}{2}\right) \left(\frac{1}{\sqrt{\lambda}} - \frac{d-\sqrt{d^2-4}}{2}\right)\\
	&\leq - \frac{\phi_t\chi^2}{2\alpha}.
\end{align*}

Next we check the other conditions of Theorem~\ref{GeomThm}.
For arbitrary $\sigma,\rho\in\Omega_{mon}$, if $x\in\sigma\oplus\rho$ for some $x$, then $\phi(\sigma,\rho)\geq \sqrt{\lambda}^{\|x_0\|_1-\|x\|_1}\geq 1$. Therefore if $\phi(\sigma,\rho)<1$, $\phi(\sigma,\rho)=0$.  Let $U$ be the set of pairs of downsets that differ on a single vector. For arbitrary $\sigma,\rho\in\Omega_{mon}$, we can connect $\sigma$ to $\rho$ by simply adding or removing the vectors in $\sigma\oplus\rho$ one by one, and $\phi(\sigma,\rho)$ is the sum of the distances.  Since the volume of $\widehat{R}$ is $n$, there are at most $n$ possible vectors in $\sigma\oplus\rho$, so $\phi(\sigma,\rho)\leq n\lambda^{\gamma/2}$ for all~$\sigma,\rho$.  

We consider two cases.  If $\chi>0$, then $\E[\phi_{t+1}]\leq \beta \phi_t$, where $\beta=1-\chi^2/(2\alpha)$.  Thus, by Theorem \ref{GeomThm}, we have $\tau(\epsilon)=O(\chi^{-2}\alpha (\gamma\ln\lambda + \ln n)\ln \epsilon^{-1})$.  
On the other hand, if $\chi \geq 0,$ but less than any constant, then we use the second part of Theorem ~\ref{GeomThm}.  For any pair of $\sigma,\rho$, $\mmon$ can always add a vector $v^*$ in their difference that maximizes $\|v\|_1$.  This would change $\phi_t$ by $(\sqrt{\lambda})^{\alpha-\|v^*\|_1}$.  On the other hand, $\phi_t\leq n(\sqrt{\lambda})^{\alpha-\|v^*\|_1}$ so the change in $\phi_t$ is at least $\phi_t/n$.  
  The appropriate $v^*$ is chosen with probability at least ${1}/{\alpha}$ and the appropriate $b$ is chosen with probability ${1}/{2}$ (and every $p$ succeeds when adding).  Therefore there is a ${1}/{(2\alpha)}$ chance of changing $\phi_t$ by ${\phi_t}/{(2n)}$, i.e. $P\left(|\phi_{t+1}-\phi_t|\geq {\phi_t}/{n}\right)\geq {1}/{(2\alpha)}.$
Hence in this case
$$\tau(\epsilon)=O\left( \frac{\ln^2(B)}{\ln^2(1+\frac{1}{n})\frac{1}{2\alpha}}  \ln \epsilon^{-1}\right)=O( n^2\alpha(\gamma\ln\lambda +\ln n)^2 \ln \epsilon^{-1}).$$
\hfill \square
\vskip.2in

\subsection{Hitting time to the maximal tiling}\label{ceiling}

We now introduce a second technique that allows us to get improved bounds for the mixing
rate of the uniform bias Markov chain $\mtil$ whenever $\lambda\geq d$ and the region is an $h\times h\times \cdots h$ hypercube of volume $n=h^d$.  Specifically, we prove:

\begin{theorem}\label{NewCeilThm}
Let $R\subset \Z^d$ be the $h\times h\times\cdots\times h$ hypercube of volume $n=h^d$ and bias $\lambda\geq d$.  Then the mixing time of $\mtil$ satisfies $\tau(\varepsilon)=O(h^{2d-1}\ln\varepsilon^{-1}).$  In general, this is $o(n^2\ln\varepsilon^{-1}),$ or in 2 dimensions, $O(n\sqrt{n}\ln\varepsilon^{-1}).$
\end{theorem}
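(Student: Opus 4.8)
The plan is to bound the mixing time by a hitting-time argument: show that from any starting configuration, the biased chain $\mtil$ reaches the maximal downset $\sigma_{\max}$ (the full hypercube, i.e.\ the ``highest'' surface) in expected polynomial time, and then invoke a coupling in which, once both coordinates have hit $\sigma_{\max}$, they have coalesced. Concretely, I would define $T_{\max}=\min\{t:\sigma_t=\sigma_{\max}\}$ and first prove $\E[T_{\max}]=O(h^{2d-1})$ uniformly over the start state; combined with the standard reduction (Theorem~\ref{couplinglemma}, applied to the trivial coupling that runs two copies with identical randomness until both reach $\sigma_{\max}$, which is an absorbing state they must eventually agree on), this yields $\tau(\varepsilon)=O(h^{2d-1}\ln\varepsilon^{-1})$.

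To control $\E[T_{\max}]$ I would track the potential $\Phi_t = n - |\sigma_t| = $ (number of unit cubes \emph{not} below the surface), which lies in $\{0,1,\dots,n-h^{d-1}\}$ or so, and is $0$ exactly at $\sigma_{\max}$. The key point is the drift: at a configuration $\sigma\neq\sigma_{\max}$, let $a(\sigma)$ be the number of points $v\in\partial(\sigma)$ for which $v+\ubar^*$ is a legal addition, and $r(\sigma)$ the number of $v\in\partial(\sigma)$ whose removal is legal. An addition succeeds with probability $1$ once proposed; a removal succeeds with probability $1/\lambda$. Each proposal picks $(v,b)$ uniformly from $\partial(\sigma)\times\{\pm1\}$, so in one step $\E[\Phi_{t+1}-\Phi_t\mid\sigma_t=\sigma] = \frac{1}{2\alpha}\bigl(-a(\sigma) + r(\sigma)/\lambda\bigr)$ with $\alpha=h^{d-1}$. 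The combinatorial heart of the argument is a comparison $r(\sigma)\le (d-1)\,a(\sigma)+$ something, or more cleanly a bound showing $a(\sigma) \ge r(\sigma)/(d-1)$ always holds for nice regions (each ``removable'' corner is ``blamed'' on an addable site via a local injection of the type used in Lemma~\ref{BadBoundLemma}); when $\lambda\ge d$ this forces $-a(\sigma)+r(\sigma)/\lambda \le -a(\sigma)+a(\sigma)(d-1)/d = -a(\sigma)/d < 0$ whenever $a(\sigma)\ge 1$, i.e.\ whenever $\sigma\neq\sigma_{\max}$. So $\Phi_t$ is a supermartingale with strictly negative drift off the absorbing set, and the per-step drift is at least $\frac{1}{2\alpha d}$ in absolute value. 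A standard hitting-time estimate for integer-valued supermartingales with $O(1)$ steps and drift bounded below by $\delta$ gives $\E[T_{\max}] = O(\Phi_0/\delta) = O(n \cdot \alpha d) = O(h^d\cdot h^{d-1}) = O(h^{2d-1})$; the $\ln$-free form (as opposed to the $B^2$ form of Theorem~\ref{PathCoupThm}) is what buys the improvement over the $n^2$ bounds of Theorem~\ref{FixedTilThm}. Reading off the special cases, $h^{2d-1}=o(h^{2d})=o(n^2)$ in general and $h^{3}=n\sqrt n$ when $d=2$.

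I expect the main obstacle to be the combinatorial drift inequality, i.e.\ pinning down the right relation between the number of legal additions $a(\sigma)$ and the number of legal removals $r(\sigma)$ on an arbitrary nice region (the hypercube is the cleanest case, but even there one must be careful at the boundary faces $R_L$, where removals are forbidden, which only helps). The argument should be a local charging scheme: to each removable top-corner $v$ of $\partial(\sigma)$ associate, via the coordinate directions, an addable site, showing no addable site is charged more than $d-1$ times — essentially the contrapositive bookkeeping of Lemma~\ref{BadBoundLemma} run on a single surface rather than on a pair. The remaining ingredient, the supermartingale hitting-time bound with constant-size increments, is routine (apply the Optional Stopping Theorem to $\Phi_t + \delta t$, or cite the relevant standard lemma), and the passage from hitting time to mixing time is immediate from Theorem~\ref{couplinglemma} since $\sigma_{\max}$ is reachable and absorbing for the coupled pair. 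One should also double-check the laziness/time-scaling constants so that the $1/(2\alpha)$ normalization is handled consistently with the rest of the paper.
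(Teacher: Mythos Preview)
Your overall plan matches the paper's, but the key combinatorial step is off by one, and that $+1$ is exactly what the argument hinges on. The correct peak/valley inequality (Lemma~\ref{LozCeilingLemma}) is $|\cP(\sigma)|\le (d-1)\,|\cV(\sigma)|+1$, not $r(\sigma)\le (d-1)\,a(\sigma)$; it is attained, for instance, by the hypercube missing only its top corner cell, which has one valley and $d$ peaks. Feeding the correct bound into your drift computation with $\lambda=d$ gives
\[
-a(\sigma)+\frac{r(\sigma)}{\lambda}\ \le\ -a(\sigma)+\frac{(d-1)a(\sigma)+1}{d}\ =\ \frac{1-a(\sigma)}{d},
\]
which vanishes whenever $a(\sigma)=1$. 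Thus the plain Hamming distance $\Phi_t=H(F,\sigma_t)$ is only a supermartingale, not a strict one, and the $O(\Phi_0/\delta)$ hitting-time bound you invoke does not apply. The paper repairs this by perturbing the potential to $\phi(\sigma)=H(F,\sigma)+I_{S_2}(\sigma)/(2d)$, where $S_2=\{\sigma:|\cP(\sigma)|=d\,|\cV(\sigma)|\}$ consists precisely of the single-valley configurations, and then argues in two cases: on $S_1=\Omega\setminus S_2$ the Hamming drift is at most $-1/(2\alpha d)$ while the chance of entering $S_2$ is at most $3/(2\alpha)$; on $S_2$ the indicator term drops with probability at least $1/(2\alpha)$. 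Either way the perturbed drift is at most $-1/(4\alpha d)$, which gives $\E[T]=O(\alpha d n)=O(h^{2d-1})$.

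A second, smaller gap is the reduction from hitting time to coupling time. The maximal downset $\sigma_{\max}=F$ is \emph{not} an absorbing state of $\mtil$ (cubes can still be removed from it with probability $1/\lambda$), so ``run two copies with identical randomness until each hits $\sigma_{\max}$'' does not by itself force coalescence, since the two copies need not hit $F$ simultaneously. The correct reduction (Theorem~\ref{CeilingThm}) uses monotonicity: couple a third copy $F_t$ started at the minimal downset $F'$ with both $X_t$ and $Y_t$ via the same randomness; since $F_t\pleq X_t$ and $F_t\pleq Y_t$ for all $t$, the first time $F_t$ reaches $F$ forces $X_t=Y_t=F$, and it is the hitting time of this bottom copy that one bounds.
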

\noindent 
The proof relies on the monotonicity of $\mmon$ with respect to the trivial coupling.  In other words, if $(X_t, Y_t)$ are coupled and $X_t \subseteq Y_t$, then after one step of the coupling, $X_{t+1}\subseteq Y_{t+1}$.  This implies that the coupling time is bounded by the time to hit the full cube $F$ starting from the empty cube~$F'$, and we can show that this will happen quickly because the distance to $F$ is always non-increasing in expectation (see Fig.~\ref{CeilFig}b).

\condcomment{\boolean{includefigs}}{ 
\begin{figure}[ht]
\centering
\includegraphics[scale=.07]{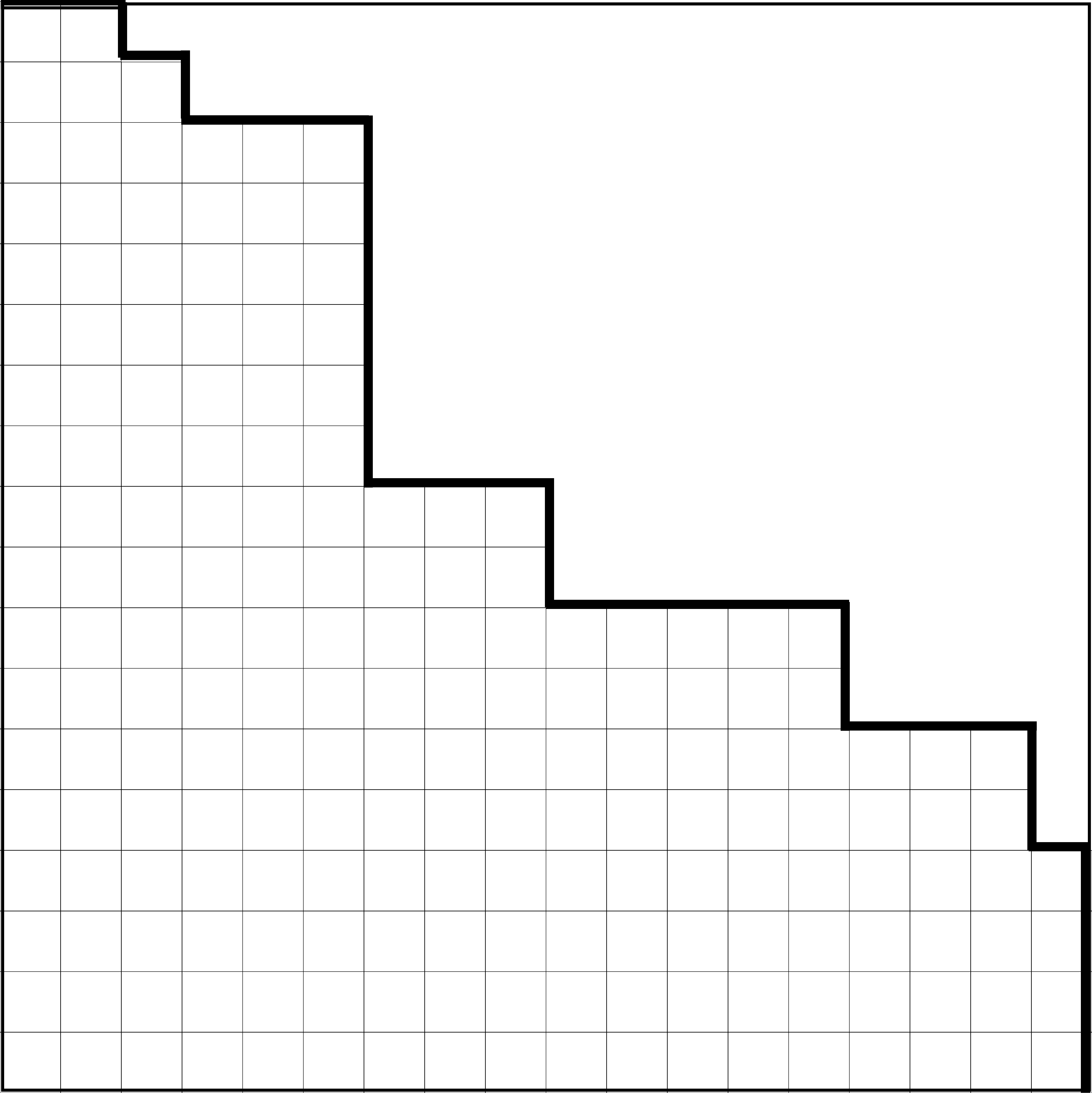}
\put(-60,60){$X$}
\put(-114,8.1){$F'$}
\put(-20,105){$F$}
\hspace{1in}
\includegraphics[scale=.07]{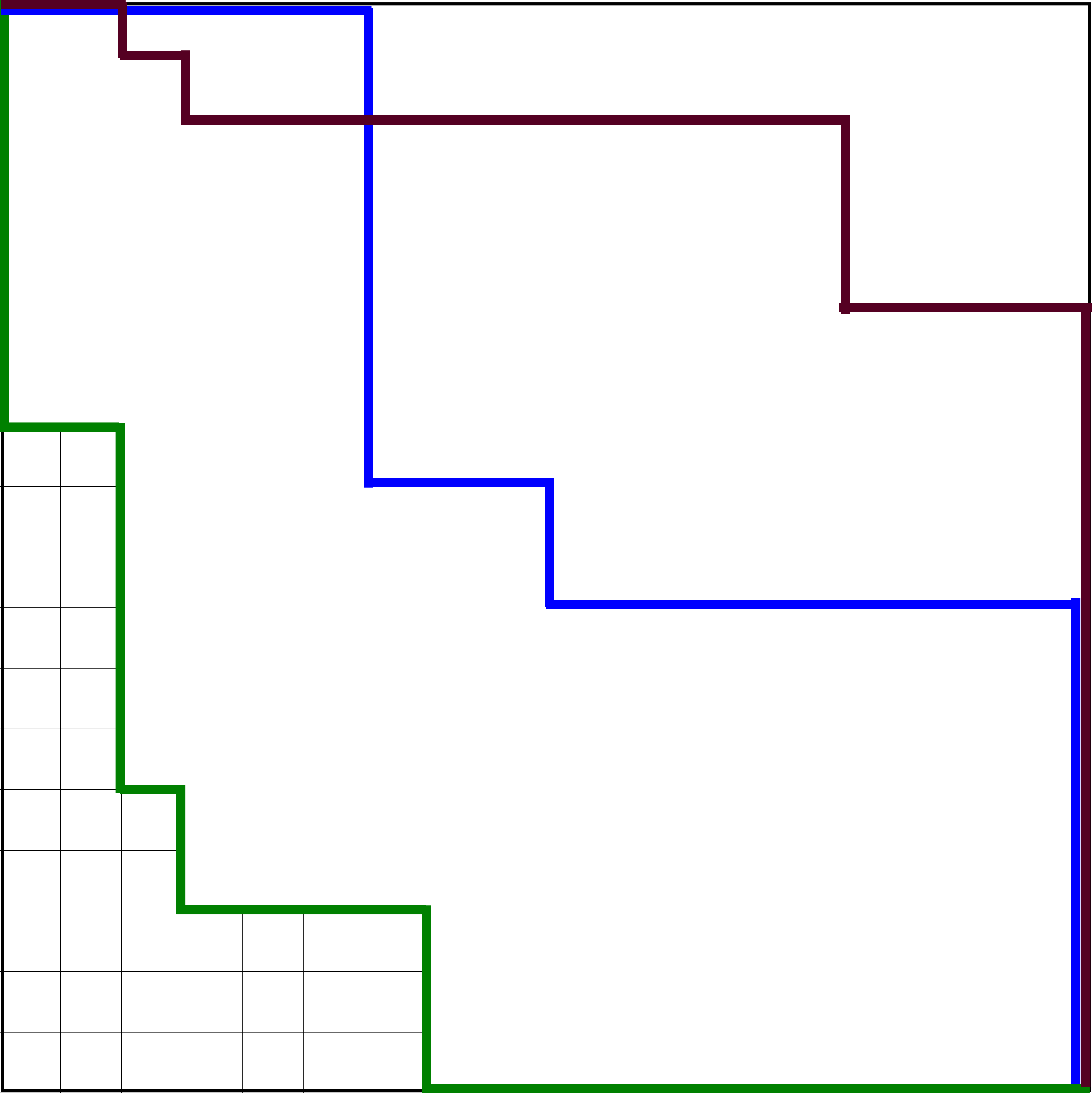}
\put(-80,60){$Y_t$}
\put(-38,78){$X_t$}
\put(-75,4.1){$F_t$}
\caption[Coupling with the maximal tiling]{ (a)  A staircase walk with $5$ peaks and $4$ valleys.  (b) $F_t\subseteq X_t$ and $F_t\subseteq Y_t$ for all $t $}
\label{CeilFig}
\end{figure}
}

We begin by proving that a monotone Markov chain with nonnegative drift will mix rapidly.  Call a Markov chain $\cM$ on $\Omega$ \emph{monotone} if $\Omega$ is a distributive lattice with partial order $\pleq$ and there exists a coupling $(X_t,Y_t)$ such that if $X_t\pleq Y_t$, then $X_{t+1}\pleq Y_{t+1}$.
For $X_t\in\Omega$, let $\p_\prec(X_t)=\p[X_{t+1}\not=X_t \text{~and~} X_{t+1}\pleq X_t]$ and 
let $\p_\succ(X_t)=\p[X_{t+1}\not=X_t \text{~and~} X_t\pleq X_{t+1}]$.

\begin{theorem}
\label{CeilingThm}
Given a monotone Markov chain $\cM$ on $\Omega$ with maximal element $F$ and minimal element $F'$, let $h(X,Y)$ be the Hamming distance between $X$ and $Y$ and let $H=\max_{X,Y\in \Omega} \{h(X,Y)\}$. Assume that for all $X\not=F\in\Omega$,
$$\p_\succ(X)-\p_\prec(X) \geq \kappa\geq 0.$$
\begin{enumerate}
\item If $\kappa>0$ then the mixing time is $\tau(\epsilon)\leq\lceil \frac{eH\ln H}{\kappa}\rceil \lceil \ln(\epsilon^{-1})\rceil.$
\item If $\kappa=0$ and for all $t\geq 0$, $\p[X'\neq X]\geq Q$ then the mixing time satisfies $\tau(\epsilon) \leq \lceil \frac{2eH^2}{Q}\rceil \lceil \ln(\epsilon^{-1})\rceil.$
\end{enumerate}
\end{theorem}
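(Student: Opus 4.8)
\textbf{Proof proposal for Theorem~\ref{CeilingThm}.}

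The plan is to reduce the mixing time of the monotone chain to a hitting-time question about the distance from the minimal element $F'$ to the maximal element $F$, and then to analyze that hitting time with the tools already developed. First I would invoke monotonicity: if we run the grand coupling from $(X_0,Y_0)=(F',F)$, then since $F'\pleq Z$ for every $Z$ and $Z\pleq F$ for every $Z$, at every time $t$ we have $X_t\pleq Y_t$, and moreover $X_t$ (resp.\ $Y_t$) is a lower (resp.\ upper) bound for the coupled copy started at any third state. Hence the coupling time $T$ is dominated by $T^{F',F}$, the time for the copy started at $F'$ to meet the copy started at $F$. Since the chain is monotone and both extreme copies are sandwiched, it suffices to bound the expected time for $h(X_t,Y_t)$ to hit $0$, where $(X_t,Y_t)$ is the coupling started from $(F',F)$.

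Next I would set up a one-dimensional process $\phi(t)=h(X_t,Y_t)$, which lies in $\{0,1,\dots,H\}$, and check that it satisfies the hypotheses of the Martingale Lemma (Lemma~\ref{MartingaleLemma}) with stopping value $q=0$. The drift hypothesis $\p_\succ(X)-\p_\prec(X)\geq\kappa\geq 0$ must be translated into a statement about $\E[\phi(t+1)-\phi(t)]$: because $X_t\pleq Y_t$ throughout, any move that pushes $X_t$ up in the order while leaving $Y_t$ fixed (or pushes $Y_t$ down while leaving $X_t$ fixed) decreases $\phi$, and by monotonicity the coupling never lets $X_t$ and $Y_t$ cross, so moves cannot simultaneously increase $\phi$ on ``both sides'' in an uncontrolled way. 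The key calculation is to argue that the expected one-step change in $\phi$ is at most $-\kappa$ (using that, conditioned on the differing coordinates, the biased-up drift on the $X$-copy and biased-down drift on the $Y$-copy each contribute, and the ``bad'' moves that increase $h$ are compensated because they correspond to the two copies agreeing afterward on a coordinate they disagreed on before — this is exactly where $\p_\succ-\p_\prec\ge\kappa$ enters). For part (2) with $\kappa=0$ we additionally use the laziness-type bound $\p[X'\ne X]\ge Q$ to get $\E[(\phi(t+1)-\phi(t))^2]\ge Q$, since any step that changes the state changes $h$ by at least $1$.

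With these two facts in hand, the rest is bookkeeping. For part (1), since the drift is strictly negative ($\E[\phi(t+1)-\phi(t)]\le-\kappa<0$) and $\phi$ takes integer values in $[0,H]$, a standard supermartingale / Wald-type argument (or directly Lemma~\ref{MartingaleLemma} applied to a truncated version, or the multiplicative path-coupling bound) gives $\E[T^{F',F}]=O\big(\tfrac{H\ln H}{\kappa}\big)$; feeding this into Theorem~\ref{couplinglemma} yields $\tau(\epsilon)\le\lceil \tfrac{eH\ln H}{\kappa}\rceil\lceil\ln\epsilon^{-1}\rceil$. For part (2), apply Lemma~\ref{MartingaleLemma} with $d=0$, $D=H$, $q=0$, $Q$ as above to get $\E[T^{F',F}]\le \tfrac{2H^2}{Q}$, and again invoke Theorem~\ref{couplinglemma}. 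The main obstacle I expect is the drift translation in the second paragraph: one must verify carefully that under the monotone coupling the Hamming distance cannot increase in expectation faster than the ``up minus down'' imbalance $\kappa$ allows — i.e.\ that every distance-increasing move on one copy is matched by a distance-decreasing tendency forced by the coupling — rather than that being a black box. Once that accounting is pinned down, both parts follow immediately from the Martingale Lemma and the coupling lemmas already established.
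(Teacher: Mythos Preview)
Your overall architecture is right and matches the paper: reduce the coupling time to a hitting-time problem via monotonicity, then control that hitting time by a drift/martingale argument and feed the result into Theorem~\ref{couplinglemma}. The difference is in \emph{which} distance you track, and this is exactly where your self-identified ``main obstacle'' lives.

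You track $\phi(t)=h(X_t,Y_t)$ for the coupled pair started at $(F',F)$. Translating the hypothesis $\p_\succ(X)-\p_\prec(X)\ge\kappa$ into a drift bound on $h(X_t,Y_t)$ is genuinely delicate: both copies move, the hypothesis pushes \emph{each} of them upward, and under a generic monotone coupling a step can move both copies identically (so $h(X_t,Y_t)$ is unchanged even though $X_t$ changed), which also undermines your variance bound in part~(2). You flag this as something to be ``pinned down,'' but you have not actually done so, and it is not a one-line verification.

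The paper sidesteps this entirely. It introduces a \emph{single} third copy $F_t$ with $F_0=F'$, coupled monotonically below both $X_t$ and $Y_t$, and tracks $\phi_t=h(F_t,F)$, the Hamming distance from $F_t$ to the \emph{fixed} maximal element $F$. Once $F_t=F$, monotonicity forces $X_t=Y_t=F$, so the coupling time is bounded by the hitting time of $F_t$ to $F$. Now the drift is immediate: whenever $F_t\neq F$, a move up decreases $\phi_t$ by $1$ and a move down increases it by $1$, so $\E[\phi_{t+1}-\phi_t\mid F_t]=\p_\prec(F_t)-\p_\succ(F_t)\le-\kappa$ directly from the hypothesis, with no coupling bookkeeping. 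Likewise for part~(2), any transition of $F_t$ changes $\phi_t$ by exactly $1$, so $\E[(\phi_{t+1}-\phi_t)^2]\ge \p[F_{t+1}\neq F_t]\ge Q$, and Lemma~\ref{MartingaleLemma} with $q=0$, $D=H$ gives $\E[T]\le 2H^2/Q$. In short: replace $h(X_t,Y_t)$ by $h(F_t,F)$ and your ``main obstacle'' disappears.
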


\begin{proof}
First we notice that the coupling time $T^{x,y} = \E[\min\{t:  X_t=Y_t | X_0=x, Y_0=y\}]$ is bounded by the hitting time to reach $F$ from $F'.$  Define $T^{x}_F=\E[\min\{t: X_t=F | X_0=x\}]$ to be the hitting time to reach $F$ from $x$.  Let $F_0=F'$, and couple the moves of $\{F_t\}$ together with the moves of $\{X_t\}$ and $\{Y_t\}$; that is, choose the same $(i,b,r)$ for $F_t$ as in $X_t$ and $Y_t$.  Since the Markov chain is monotone with respect to the given coupling, we have $F_t\pleq X_t$ and $F_t\pleq Y_t$ for all $t\geq 0$.  Thus if $F_t=F$, we also have $X_t=Y_t=F$.  So the coupling time for $X_t$ and $Y_t$ is bounded by the hitting time of $F_t$ to $F$.  See Figure~\ref{CeilFig}(b).

Let $\phi_t=h(X_t,F)$.  For part (1), we use $\E[\phi_{t}]\leq\phi_t-\kappa\phi_{t-1}/H\leq(1-\kappa/H)^{t}\phi_0$.  For part (2), note that $\E[\phi_{t+1}^2+\phi_t^2]\geq \phi_t^2\geq 1$, so using Lemma~\ref{MartingaleLemma} with $q=0$ and $D=H$, we obtain $\E[T]\leq \frac{2H^2}{Q}.$
Then the theorem follows from the Coupling Lemma, Theorem~\ref{couplinglemma}.
\end{proof}

Notice that since $\lambda>1$, we expect that for any downset $X_0\in \Omega$, the sequence $\{X_t\}$ should approach the unique maximal element of $\Omega$. We will show that it suffices to choose $\lambda\geq d$ to reach the full $d-$dimensional cube from an arbitrary position in polynomial time, thus achieving polynomial mixing time. In particular, we will show that $\mtil$ satisfies the conditions of Theorem~$\ref{CeilingThm}$.

We define a {\it peak} of $X$ to be a position where we can remove a hypercube from $X$ and a {\it valley} of $X$ to be a position where we can add a hypercube to $X$ (see Figure~\ref{CeilFig}(a)).  For the following lemma,  define $\cV(D)$(and, respectively, $\cP(D)$) to be the set of valleys (peaks) of a downset $D$.

\begin{lemma}
\label{LozCeilingLemma}
For any downset $\sigma \neq F \in \Omega$, 
\begin{equation}
|\cP(\sigma)|\leq (d-1) |\cV(\sigma)| +1. 
\end{equation}
\end{lemma}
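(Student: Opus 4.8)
The plan is to induct on the dimension $d$. The base cases are immediate: a one‑dimensional downset has at most one removable peak, and for $d=2$ the peaks and valleys of a staircase walk strictly alternate, so $|\cP(\sigma)|\le|\cV(\sigma)|+1$. For the inductive step I would slice $R$ by the hyperplanes perpendicular to a fixed axis $\ubar_d$ and write $\sigma_k$ for the $k$-th slice of $\sigma$; the downset property forces each $\sigma_k$ to be a $(d-1)$-dimensional downset and forces the slices to be nested, $\sigma_0\supseteq\sigma_1\supseteq\cdots$, with $\sigma_0$ the full slice when $R$ is hyper-rectangular. (Slices of a general \emph{nice} region need not be nice, so the cleanest route is to prove the lemma first for hyper-rectangular $R$ — which is all that Theorem~\ref{NewCeilThm} uses — and to treat general nice regions afterward; the argument below uses only the nested-downset structure of the slices.)

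Next I would translate peaks and valleys of $\sigma$ into data about the \emph{layers} $L_k:=\sigma_k\setminus\sigma_{k+1}$. Unwinding the definitions, $(u,k)$ is a removable peak of $\sigma$ exactly when $u$ is a maximal element of $L_k$ for $k$ below the top slice, while the top slice $\sigma_T$ contributes precisely its own $(d-1)$-dimensional peaks $\cP(\sigma_T)$; dually, $(u,k)$ is a valley of $\sigma$ exactly when $u$ is a minimal element of $L_{k-1}$, and the (full) bottom slice contributes none. Thus
\[
|\cP(\sigma)|\;=\;|\cP(\sigma_T)|+\sum_k|\max(L_k)|,\qquad |\cV(\sigma)|\;=\;\sum_k|\min(L_k)|,
\]
the sums running over the interior layers. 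The lemma would then follow from (a) a bound on $\sum_k|\max(L_k)|$ in terms of $\sum_k|\min(L_k)|=|\cV(\sigma)|$ — each $L_k$ is a convex subposet of $\Z^{d-1}$ disjoint from the lower envelope, but an individual layer can have many maxima and few minima, so this bound has to exploit both the nesting $\sigma_0\supseteq\sigma_1\supseteq\cdots$ and the fact that $\sigma_0$ is full — together with (b) the inductive hypothesis applied to the genuine $(d-1)$-dimensional downset $\sigma_T$, giving $|\cP(\sigma_T)|\le(d-2)|\cV(\sigma_T)|+1$, plus the observation that each valley of $\sigma_T$, followed downward to the highest slice of $\sigma$ that still contains it, yields a distinct valley of $\sigma$, so that $|\cV(\sigma_T)|$ is controlled by $|\cV(\sigma)|$. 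The single additive ``$+1$'' is the one produced by the inductive hypothesis on $\sigma_T$.

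The step I expect to be the real obstacle is reconciling (a) and (b) without losing or double-paying a constant: the minimal elements of the top interior layer are simultaneously valleys of $\sigma$ \emph{and} valleys of $\sigma_T$, so a careless combination over-counts, and the extremal configuration $\sigma=R\setminus\{\text{top corner}\}$, with $|\cP(\sigma)|=d$ and $|\cV(\sigma)|=1$, shows there is no slack to spare. The clean way is probably to recast the whole argument as a single charging scheme: give every valley a budget of $d-1$ peaks and show the peaks can be paid off with exactly one peak (the ``global'' one, attached to $\sigma_T$) left over, routing each peak of an interior layer to a minimal element of that same layer and routing the peaks counted by $\cP(\sigma_T)$ through the valleys supplied by the inductive hypothesis. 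A secondary technical hurdle is carrying the induction cleanly across the region boundary (rays appearing and disappearing at the edges of $R$), which is exactly what forces the ``disjoint from the lower envelope'' qualifications in the layer analysis.
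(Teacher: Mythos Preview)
Your dimension-slicing induction is a genuinely different route from the paper, and the obstacle you single out is real: bounding $\sum_{k<T}|\max(L_k)|$ against $|\cV(\sigma)|=\sum_k|\min(L_k)|$ cannot be done layer by layer (a single layer such as $\{(0,0),(1,0),(0,1)\}\subset\Z^2$ already has two maxima and one minimum), and pushing the bound through the nesting $\sigma_0\supseteq\sigma_1\supseteq\cdots$ seems to require the $(d{-}1)$-dimensional inequality applied to each $\sigma_k$, which makes the recursion delicate at best. You are also right that there is no slack --- the configuration $R\setminus\{\text{top corner}\}$ uses the full budget --- so any double-payment between (a) and (b) is fatal. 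A global charging scheme of the kind you sketch may well be completable, but you have not supplied one, and carrying it out looks at least as hard as the lemma itself.

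The paper's argument is much shorter and inducts on $|\cV(\sigma)|$ rather than on $d$. When $|\cV(\sigma)|=1$ it checks directly that $|\cP(\sigma)|\le d$. For $|\cV(\sigma)|\ge 2$ it picks a valley $v$ with maximal last coordinate $v_d$ and passes to the downset $\sigma'$ obtained by filling in the entire $\ubar_d$-column through $v$ up to the top of $R$; the paper argues that this operation removes one valley and at most $d-1$ peaks, so the quantity $|\cP|-(d-1)|\cV|$ does not increase, and iterating down to the one-valley base case gives the bound. This sidesteps your step (a) entirely: there is never any need to compare maxima and minima of an intermediate region, because the whole argument lives in dimension $d$ and only tracks the two global counts $|\cP|$ and $|\cV|$.
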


\begin{Proofof}{Lemma~\ref{LozCeilingLemma}}
First we notice that if $\sigma\neq F$, then it has at least one valley.  Furthermore, if $\sigma$ has a single valley, then the number of peaks is at most $d=(d-1)\cV(\sigma) +1$.  Now, assume the number of valleys is more than one and proceed as follows.  Choose a valley $v=(v_1,v_2,\ldots, v_d)$ that maximizes $v_d$.  Construct $\sigma'$ from $\sigma$ by adding every hypercube lying above $v$ in the $d$ dimension; that is, $\sigma' = \sigma \cup \{v+e_d, v+2e_d, \ldots, v+(h-v_d)e_d\}$.  Then $|\cP(\sigma')|\geq |\cP(\sigma)|-d+1$, and $|\cV(\sigma')|=|\cV(\sigma)|-1$.  Hence
$$|\cP(\sigma)|-(d-1) |\cV(\sigma)|\leq  |\cP(\sigma')| - (d-1)(|\cV(\sigma')|.$$
Obtain $\widehat{\sigma}$ by iterating this operation until only a single valley remains; then we have
$$|\cP(\sigma)|- (d-1)|\cV(\sigma)|\leq |\cP(\widehat{\sigma})|-(d-1) |\cV(\widehat{\sigma})|\leq 1,$$
as desired.
\end{Proofof}

Lemma~\ref{LozCeilingLemma} implies that if $\lambda\geq d$ then for all $ \sigma\not=F\in\Omega$, $|\cP(\sigma)|\leq (d-1) |\cV(\sigma)| +1\leq d|\cV(\sigma)| $, and so
\begin{eqnarray*}
\p_\succ(\sigma)-\p_\prec(\sigma)&\ = \ & \frac{1}{2\alpha}\left( |\cV(\sigma)|-\frac{1}{\lambda}|\cP(\sigma)| \right)\\
& \ \geq \ &  \frac{|\cV(\sigma)|}{2\alpha}\left(1-\frac{d}{\lambda}\right)\geq 0.
\end{eqnarray*}
Hence, the Markov chain $\mtil$ has nonnegative drift towards the maximal configuration.  Using Theorem~\ref{CeilingThm} we can show that for $\lambda=d+\delta$ for some $\delta\geq 0$,
$\tau(\varepsilon)= O(\delta^{-1} \alpha n\ln n\ln\varepsilon^{-1})$ for $\delta>0$, and otherwise $\tau(\varepsilon)= O( \alpha n^2 \ln\varepsilon^{-1})$.  However, we are able to get a better bound in Theorem~\ref{NewCeilThm} by introducing another metric that gives strict additive contraction for every configuration, even when $\lambda=d$.

\vspace{.2in}
\noindent {\em Proof of Theorem \ref{NewCeilThm}.}

Recall from the proof of Theorem~\ref{CeilingThm} that the coupling time is bounded by the hitting time to reach $F$ from $F'.$  We will show that the expected time to hit $F$ is small, using Lemma~\ref{LozCeilingLemma}.   Let  $S_1 = \{\sigma\in \Omega : \cP(\sigma)<d \cV(\sigma)\}\cup F,$ and $S_2 = \{\sigma\in \Omega : \cP(\sigma)=d \cV(\sigma)\}.$  Define the function $\phi(\sigma)= H(F,\sigma) + I_{S_2}/({2d})$, where $H(X,Y)$ is the Hamming distance between the downsets $X$ and $Y$, and $I_{S_2}$ is the indicator function for the set $S_2$.  We will show that $\phi$ has negative additive drift towards $0$.

Notice that Lemma~\ref{LozCeilingLemma} implies that if $\sigma\in S_2$ then it has exactly one valley.  Moreover, if $\sigma\in S_1$ can move to $S_2$ in a single step, then the number of possible moves to take it to $S_2$ is at most $3$.  If $\sigma$ has a single valley then it must have a valley $v$ such that for some dimension $i$, $v_i=0$; adding a cube at this valley could move $\sigma$ into $S_2$.  If $\sigma$ has three valleys then it has a single hypercube $c$ it can remove to enter $S_2$, and if $\sigma$ has two valleys then it can enter $S_2$ by adding cubes at either of those valleys or removing a cube between them are the only ways to potentially move into $S_2$.  Hence, if $\sigma_t\in S_1$, then the probability that $\sigma_{t+1}\in S_2$ is at most ${3}/({2\alpha})$.  Moreover, if $\sigma_t \in S_2$, then the probability that $\sigma_{t+1}\in S_1$ is at least ${1}/({2\alpha})$.  

Now, conditioning on whether $\sigma_t $ is in $ S_1$ or in $S_2$, we have $\E[\phi(\sigma_{t+1})-\phi(\sigma_t) | \sigma_t \in S_2] \leq 0 -{2d})^{-1}({2\alpha})^{-1}$
and
\begin{eqnarray*}
\E[\phi(\sigma_{t+1})-\phi(\sigma_t) | \sigma_t \in S_1] &\leq& \frac{-|\cV(\sigma_t)| + \lambda^{-1} |\cP(\sigma_t)|}{2 \alpha} +\frac{3}{2\alpha} \left(\frac{ 1}{2d} \right)\\
&\leq& \frac{1}{2\alpha}\left(-|\cV(\sigma_t)| + \lambda^{-1}(d-1)|\cV(\sigma_t)| \right)+\frac{3}{2\alpha} \left(\frac{ 1}{2d} \right)\\
&\leq& \frac{|\cV(\sigma_t)| }{2\alpha}\left(-\frac{1}{d}  \right)+\frac{3}{2\alpha} \left(\frac{ 1}{2d} \right)\\
&\leq& \frac{-1}{4\alpha d}.
\end{eqnarray*}
Thus 
$$\E[\phi(\sigma_{t})]\leq \sigma_0 - \frac{t}{4\alpha d}\leq \left(n+\frac{1}{2d}\right)- \frac{t}{4\alpha d}\leq \epsilon$$
whenever $t\geq 4d \alpha (n+\frac{1}{2d} - \epsilon)$.  Thus the mixing time satisfies $\tau(\varepsilon)O(\alpha n\ln(\varepsilon^{-1}) )=O(h^{2d-1}\ln(\varepsilon^{-1}) )= O(n^2\ln(\varepsilon^{-1}))$ for arbitrary dimension $d$, but for $d=2$, this implies $\tau(\varepsilon)=O(h^3\ln(\varepsilon^{-1}))=O(n\sqrt{n}\ln(\varepsilon^{-1})).$

\qed

\comment{

Unfortunately, this easy result for the Markov chain defined on all downsets $D$ contained in the region $\hat{R_d}$ does not extend to all regions. In other words, we cannot define any constant $\lambda$ such that the Markov chain $\mmon$ mixes rapidly on any arbitrary region.

\begin{theorem}
\label{LozCeilingNegThm}
Let $\Omega$ be the set of all downsets within region $R$. Then for any constant $\lambda,$ there exist regions $R$ and $X\subsetneq R$ such that $\p_\prec(X) \geq \p_\succ(X)$.
\end{theorem}

\begin{proof}
Let $R=\{\xbar: \sum_i x_i \leq l\}$, and $X=V\backslash \{\xbar*\}$, for some $\xbar^*$ s.t. $\sum_i x^*_i=l$.
Then $\cV(X)$ consists of a single vector: $\xbar^*$. However, $\cP(X)$ contains all $\xbar\neq \xbar^*$ such that $ \sum_i x_i = l$, and thus is exponentially large.
\end{proof}
}

\section{The fluctuating bias Markov chain}\label{FluctSection}
It turns out to be quite interesting to consider the case of {\it fluctuating bias} where the bias of a tile depends on its position.
This situation is more realistic, particularly in the context of self-assembly.  For instance, the probability of a tile lined with DNA attaching to the substrate depends on the strength of the bonds along the edges of the tile as well as the relative densities of each tile.  Recall the {\it bias at $\bar{x}$} is defined as follows: if $\tau$ is formed by adding a cube at position $\bar{x}$ to $\sigma$, then $\lambda_{\bar{x}} = P(\sigma, \tau)/P(\tau, \sigma)$ is called the {\it bias at $\bar{x}$}.  The stationary probability of a configuration $\sigma$ will be proportional to $\prod_{\bar{x}\in \sigma}\lambda_{\bar{x}}$.

Most of the results from Section~\ref{UnifSection} generalize to this setting, as long as we satisfy certain bounds on the amount the bias can fluctuate.  When the minimum bias $\lambda_L$ is large enough, we do not need any upper bound.  While the upper bound might seem unnecessary even for small values of $\lambda_L> 1$, the chain can actually take exponential time to reach equilibrium if the biases vary too much (see Section~\ref{SlowSection}).

\subsection{Fast mixing with large enough minimum bias}\label{bounded}
In Section~\ref{ceiling}, we showed that: (i) the coupling time for the Markov chain $\mtil$ was bounded by the hitting time to the maximal configuration, and (ii) the hitting time is polynomial, assuming that the bias is large enough.  Clearly, (i) still holds for the fluctuating bias Markov chain $\mfluct$, and the hitting time to the maximal configuration should only decrease if some of the biases are increased.  Therefore, the following theorem is a simple consequence of Theorem~\ref{NewCeilThm}:

\begin{theorem}\label{HigherDFluctSimple}
Let $R$ be the d-dimensional $h\times h\times\cdots\times h$ hypercube with volume $h^d=n$ and fluctuating bias.  Assume the minimum bias $\lambda_L$ satisfies $\lambda_L\geq d$.   Then 
the mixing time of $\mmon$ satisfies $\tau(\varepsilon)=O\left(h^{2d-1}\ln \varepsilon^{-1}\right)=o(n^2\ln \varepsilon^{-1}).$
\end{theorem}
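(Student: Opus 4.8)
\noindent\textit{Proof proposal.} The plan is to follow the proof of Theorem~\ref{NewCeilThm} essentially verbatim, noting that the uniform bias $\lambda$ enters that argument only through the bound $\lambda^{-1}\le d^{-1}$, which remains valid at every location once $\lambda_L\ge d$. First I would check that $\mfluct$ is still \emph{monotone} with respect to the trivial coupling (feeding the same diagonal index, bit $b$, and threshold $p$ to both coordinates): the boundary/diagonal structure of a downset depends only on the region $R$, not on the biases, and the removal-acceptance probability $1/\lambda_v$ depends only on the location $v$. Hence the usual case analysis showing that both an add and a remove preserve $X_t\pleq Y_t$ goes through unchanged; in particular, for two copies running $\mfluct$ with the same bias profile, a removal proposed at a vertex common to both boundaries succeeds simultaneously or not at all. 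Consequently, exactly as in the proof of Theorem~\ref{CeilingThm}, the coupling time of $\mfluct$ is bounded by the expected hitting time $\E[T]$ of the full cube $F$ starting from the empty configuration $F'$.

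Next I would bound $\E[T]$. Lemma~\ref{LozCeilingLemma} is purely combinatorial, so it still yields $|\cP(\sigma)|\le (d-1)|\cV(\sigma)|+1\le d\,|\cV(\sigma)|$ for every $\sigma\ne F$. Define the same potential $\phi(\sigma)=H(F,\sigma)+I_{S_2}/(2d)$ as in the proof of Theorem~\ref{NewCeilThm}, with $S_2=\{\sigma:|\cP(\sigma)|=d\,|\cV(\sigma)|\}$ and $S_1$ its complement together with $F$. The probability that one step carries $\sigma_t$ from $S_1$ into $S_2$ (at most $3/(2\alpha)$) and the probability of leaving $S_2$ in one step (at least $1/(2\alpha)$, since an add at the unique valley always succeeds) are governed by which adds and removes are \emph{available}, not by their rates, so these estimates are unaffected by the fluctuation. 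In the drift computation the removal term $\lambda^{-1}|\cP(\sigma_t)|$ is replaced by $\sum_{v\in\cP(\sigma_t)}\lambda_v^{-1}\le |\cP(\sigma_t)|/d$, so the conclusion $\E[\phi(\sigma_{t+1})-\phi(\sigma_t)\mid\sigma_t]\le -1/(4\alpha d)$ still holds, giving $\E[T]=O(\alpha n)$. The Coupling Lemma, Theorem~\ref{couplinglemma}, then gives $\tau(\varepsilon)=O(\alpha n\ln\varepsilon^{-1})$, and with $\alpha=O(h^{d-1})$ and $n=h^d$ this is $O(h^{2d-1}\ln\varepsilon^{-1})=o(n^2\ln\varepsilon^{-1})$.

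An equivalent and perhaps slicker route avoids re-running the drift argument altogether: couple $\mfluct$ started at $F'$ with the uniform-bias chain of bias exactly $d$, also started at $F'$, again using the trivial coupling. Since $\lambda_v\ge d$ for every $v$, a removal that fires in the fluctuating copy also passes the (weaker) threshold test of the uniform-$d$ copy, so the monotone coupling keeps the uniform-$d$ copy $\pleq$ the fluctuating copy at all times; the fluctuating copy therefore reaches $F$ no later than the uniform-$d$ copy, whose expected hitting time is $O(\alpha n)$ by (the proof of) Theorem~\ref{NewCeilThm}. Either way, I expect the only points needing care are (i) the verification of monotonicity of $\mfluct$ under the trivial coupling, especially the boundary cases where the chosen diagonal meets the two coupled downsets at different vertices, and (ii) confirming that nothing in the proof of Theorem~\ref{NewCeilThm} used the constancy of $\lambda$ beyond the pointwise bound $\lambda^{-1}\le d^{-1}$; both checks are routine.
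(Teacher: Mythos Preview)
Your proposal is correct, and your ``slicker route'' is exactly the paper's argument: the paper observes that (i) the coupling time is still bounded by the hitting time of $F$ from $F'$, and (ii) this hitting time can only decrease when biases are increased, which one formalizes precisely by the domination coupling you describe between $\mfluct$ and the uniform-$d$ chain (the paper spells out this coupling explicitly in the proof of the companion Theorem~\ref{2DFluctSimple}). Your first route, re-running the drift computation with the pointwise bound $\lambda_v^{-1}\le d^{-1}$, is an equally valid minor variant that the paper does not bother to write out.
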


Moreover, we can use a similar argument, together with the uniform bias results above, to obtain the following stronger result for fluctuating bias in 2 dimensions:
\begin{theorem}\label{2DFluctSimple}
Let $R$ be a rectangular $h\times w$ region in $\Z^2$ with fluctuating bias.  Suppose the minimum bias  $\lambda_L$ is a constant larger than 1.   Suppose without loss of generality that $h\leq w$. Then the mixing time of $\mmon$ satisfies $\tau(\varepsilon)=O\left((w+h) (h + \ln w) \ln \varepsilon^{-1}\right).$
\end{theorem}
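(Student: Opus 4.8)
The plan is to mimic the two-part strategy that gave Theorem~\ref{2DSimple}, but since the bias now fluctuates we cannot directly use the single exponential metric $\phi(\sigma,\rho)=\mu^{w+h}\sum_{x\in\sigma\oplus\rho}\mu^{-\|x\|_1}$ tuned to a single $\lambda$. Instead I would set $\mu=\sqrt{\lambda_L}>1$, where $\lambda_L$ is the (constant) minimum bias, and use exactly the metric $\phi$ from Equation~\ref{distmetric} with this $\mu$. The point is that in the two ``bad'' configurations analyzed before Theorem~\ref{2DSimple}, the move that hurts us the most — removing $x-\ubar_i$ to create a valid downset on the larger side — succeeds with probability $1/\lambda_{x-\ubar_i}\le 1$, and the weight blow-up from the metric is only the factor $\mu=\sqrt{\lambda_L}$; since $\lambda_{x-\ubar_i}\ge\lambda_L$, the contribution of that bad move is at most $\phi_t\sqrt{\lambda_L}\cdot(1/\lambda_L)=\phi_t/\sqrt{\lambda_L}$, exactly as in the uniform case with $\lambda$ replaced by $\lambda_L$. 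The supported-addition bad move ($v=x+\ubar_i$) contributes $\phi_t\mu^{-1}=\phi_t/\sqrt{\lambda_L}$ regardless of the local bias, because additions always succeed with probability $1$. Meanwhile the two good moves — adding $x$ (probability $1$) and removing $x$ (probability $1/\lambda_x\ge 1/\lambda_U$, which could be small) — each change $\phi_t$ by $\phi_t$. So the worrying feature is that in the fluctuating setting the ``remove $x$'' good move may be very weak, and we can only rely on the ``add $x$'' good move.

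Here is where I would invoke the idea behind Section~\ref{ceiling}. Add $x$ always succeeds, so for a pair in $U$ with $\sigma_t=\rho_t\cup\{x\}$ there is always at least one unit-probability good move (choosing the diagonal through $v$ with $b=+1$ coalesces to $(\rho_t,\rho_t)$). Counting: with $d=2$ there are at most $2$ bad moves, each contributing expected increase $\le\phi_t/\sqrt{\lambda_L}$, and at least one good move of full strength contributing $-\phi_t$ (plus the remove-$x$ good move contributing an extra non-positive amount $-\phi_t/\lambda_x$). Over the $\alpha=h+w$ diagonals and the factor $1/(2\alpha)$, this gives
\[
\E[\phi_{t+1}-\phi_t]\ \le\ \frac{1}{2\alpha}\Bigl(\frac{2}{\sqrt{\lambda_L}}-1\Bigr)\phi_t.
\]
If $\lambda_L>4$ this is already strictly negative with a constant multiplier and part~(1) of Theorem~\ref{GeomThm} finishes the job with $B\le n\lambda_U^{h/2}$ (so $\ln B=O(h\ln\lambda_U+\ln w)=O((h+\ln w))$ since $\lambda_U$ is a constant), yielding $\tau(\varepsilon)=O(\alpha(h+\ln w)\ln\varepsilon^{-1})=O((w+h)(h+\ln w)\ln\varepsilon^{-1})$. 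For $1<\lambda_L\le 4$ the expected drift via $\phi$ alone need not be negative, so I would instead run the hitting-time / monotonicity argument: the trivial coupling is monotone (adding/removing the same cube preserves $\subseteq$), so the coupling time is bounded by the time to hit the full region $F$ from the empty region $F'$, and one shows this hitting time is small by combining the valley/peak count in the spirit of Lemma~\ref{LozCeilingLemma} (in $d=2$, $|\cP(\sigma)|\le|\cV(\sigma)|+1$) with $\lambda_L>1$ to get nonnegative additive drift of the Hamming distance to $F$, then upgrading to strict drift via a small indicator correction exactly as in the proof of Theorem~\ref{NewCeilThm}. Since $\lambda_L>1$ is a constant, $1-1/\lambda_L$ is a constant, so Theorem~\ref{CeilingThm}(1) gives coupling time $O(\alpha H\ln H)$ with $H=n$; a slightly more careful metric (weighting by diagonal, as in Theorem~\ref{NewCeilThm}) trims this to $O(\alpha n)=O((w+h)\cdot wh)$, which is worse than claimed, so to hit the stated $O((w+h)(h+\ln w))$ bound I would instead stick with the $\phi$-metric argument but replace the full-strength good move's competitor: use that in $d=2$ whenever the remove-$x$ move is weak, the configuration is ``thin'' enough that the diagonal count effectively drops, or simply absorb the constant loss into $\lambda_L>1$ by noting $(2/\sqrt{\lambda_L}-1)$ can be made $<0$ after replacing the naive bound on the bad moves with the sharper observation that the two bad moves cannot both be of the costly ``remove $x-\ubar_i$'' type when $\sigma_t\oplus\rho_t=\{x\}$ — an analogue of Lemma~\ref{BadBoundLemma}.

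The main obstacle is precisely this last point: in the fluctuating regime with $\lambda_L$ only slightly above $1$, the ``remove $x$'' good move is nearly useless, so the $\phi$-metric drift computation is tight and one must argue carefully that the two bad moves cannot conspire. I expect the right fix is a combined potential $\Phi=\phi+c\cdot(\text{Hamming distance to }F)$ or, cleaner, to first reduce via monotonicity to the hitting-time problem for a single chain (no pair), then bound that hitting time with a diagonal-weighted metric tuned to $\lambda_L$; there the only force is the global bias toward $F$, which is genuinely present since every $\lambda_{\bar x}\ge\lambda_L>1$, and the peak/valley inequality for $d=2$ is strong enough ($d-1=1$) that nonnegative drift holds for free and strict drift follows from the $S_1/S_2$ indicator trick of Theorem~\ref{NewCeilThm}. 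Tracking constants through that argument to land exactly at $O((w+h)(h+\ln w)\ln\varepsilon^{-1})$ — rather than an extra factor of $\gamma$ or $n$ — is the delicate bookkeeping, and I would handle it by choosing the diagonal weights to be $\mu^{k}$ with $\mu=\sqrt{\lambda_L}$ so that the additive drift per step is $\Omega(\phi_t/\alpha)$ and $B=O(n\lambda_U^{h/2})$, giving $\ln B=O(h+\ln w)$ as needed.
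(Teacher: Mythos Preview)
Your approach has a genuine gap in the regime $1<\lambda_L\le 4$, and the fixes you sketch do not close it.

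For $\lambda_L>4$ your exponential-metric computation is fine and in fact reproduces the paper's Theorem~\ref{FluctTilThm} specialized to $d=2$ with $\lambda_U=\infty$. (A minor slip: your bound on $B$ should involve $\lambda_L$, not $\lambda_U$, since the metric is built from $\mu=\sqrt{\lambda_L}$; the theorem places no upper bound on the biases.) But for $\lambda_L$ just above $1$, the only full-strength good move is ``add $x$,'' and your drift bound $(2/\sqrt{\lambda_L}-1)\phi_t/(2\alpha)$ is genuinely nonnegative. Your fallback to the peak/valley drift of Theorem~\ref{NewCeilThm} correctly yields only $O(\alpha n)$, and the further suggestions --- a combined potential, or a diagonal-weighted single-chain hitting-time metric --- are not worked out; in particular, for a single chain the quantity $\phi(\sigma_t,F)$ changes only at peaks and valleys of $\sigma_t$, and there is no reason those contributions should be proportional to $\phi_t$ itself, so multiplicative contraction is not available that way.

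The paper's proof avoids this entirely by a different mechanism. It does reduce, via monotonicity, to the hitting time of $F$, and it does dominate the fluctuating chain by a uniform-bias-$\lambda_L$ chain via a monotone coupling --- both ideas you have. The step you are missing is how to bound the \emph{uniform} chain's hitting time to $F$ without a direct drift argument: the paper shows, via the Gaussian binomial generating function $\sum_t p(h,w;t)\lambda^t=\binom{h+w}{w}_\lambda$, that the maximal configuration $F$ carries a \emph{constant} fraction of the stationary mass (namely $\pi(F)\ge e^{-x/(1-x)^2}$ with $x=1/\lambda_L$). Since Theorem~\ref{2DSimple} already gives the uniform chain mixing time $O\bigl(w(h+\ln w)\ln\varepsilon^{-1}\bigr)$, the chain hits $F$ within a constant multiple of that time in expectation. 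This converts the mixing-time bound for the uniform chain directly into the desired hitting-time bound, with no new drift analysis at all. That reuse of Theorem~\ref{2DSimple}, enabled by the $\pi(F)=\Omega(1)$ calculation, is the missing idea.
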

\noindent Again, this yields the optimal mixing time of $O(n)$ if $R$ is a square.

\vspace{.2in}
\noindent {\em Proof of Theorem \ref{2DFluctSimple}.}
Recall from Theorem~\ref{CeilingThm} that the hitting time to the
maximal configuration is an upper bound on the coupling time.  Thus,
it suffices to show that for uniform bias $\lambda$, where $\lambda>1$
is a constant, the hitting time to the maximal configuration is
$O\left((w+h) (h + \ln w)\ln \varepsilon^{-1}\right).$  For square
regions with area $n$, a bound of $O(n)$ was proved by Benjamini et
al.~\cite{BBHM05}.  We wish to extend this to rectangular regions. 

It is well-known that in 2 dimensions for uniform bias $\lambda$,
where $\lambda>1$ is a constant, the maximal configuration has
constant probability in the stationary distribution.  For
completeness, we prove this next in the fluctuating bias setting. 

Let $p(h,w;t)$ denote the number of
integer partitions of $t$ into at most $h$ parts, each of size at most
$w$.  This is precisely the number of staircase walks in an $h\times
w$ region with $t$ squares below the curve.  Consider the generating
function for $p(h,w;t)$:
$$F(q) = \sum_{t=0}^{hw}p(h,w;t)q^t = \binom{h+w}{w}_q,$$
where $\binom{m}{r}_q= \prod_{i=0}^{r} \frac{1-q^{m-i}}{1-q^i}$ is the
Gaussian binomial coefficient.  Then the normalizing constant $Z$ is
equal to $F(\lambda)$.
We wish to show that the weight of the highest configuration
$\lambda^{hw}/Z$ is at least a constant, independent of $h$ and $w$.

By rearranging terms, we have 
$$F(q) =  \prod_{i=1}^{w}\frac{q^{h+i}-1}{q^i-1}\leq \prod_{i=1}^{w} \frac{q^{h+i}}{q^i-1}= q^{hw}\prod_{i=1}^{w} \frac{1}{1-q^{-i}}.$$
Let $x=1/\lambda.$  Then $F(\lambda) =
q^{hw}\prod_{i=1}^w\frac{1}{1-x^i},$ and so
\begin{align}
\ln(Z\lambda^{-hw})\leq \ln(F(\lambda)\lambda^{-hw})&= -\sum_{t=1}^{w} \ln(1-x^{t})\nonumber\\
&= \sum_{t=1}^{hw} x^t+\frac{x^{2t}}{2}+\frac{x^{3t}}{3}+\ldots \nonumber\\
&\leq  x + 2x^2+ 3x^3+4x^4+\ldots\label{rearrange}\\
&= \frac{x}{(1-x)^2}.\nonumber
\end{align}
Inequality~\ref{rearrange} follows because $x^i$ appears in at most
$i$ terms of the sum, each with a coefficient at most 1.
Therefore the maximum configuration has weight $\lambda^{hw}/Z\geq
e^{-x/(1-x)^2}$, for any $h$ and $w$.

By Theorem~\ref{2DSimple}, we know that the mixing time of the uniform
bias chain with bias $\lambda_L$ is $O\left(w (h + \ln w) \ln
\varepsilon^{-1}\right)$, so we expect the uniform bias chain to hit
the maximal configuration in $O\left(w (h + \ln w) \ln
\varepsilon^{-1}\right)$ steps.   

The hitting time of the fluctuating bias Markov chain is at most the hitting time of the uniform bias chain whose bias is equal to $\lambda_L$.  To see this, we will couple the two chains.  We start with the uniform bias chain below the fluctuating chain, for example by setting it equal to the minimal configuration.  At each step, we choose the same square to add, and whenever the fluctuating bias chain decides to remove a square, the uniform chain does too; this is possible because the bias of the fluctuating chain is at least the bias of the uniform bias chain at every square.  Therefore, monotonicity is preserved during the coupling, so when the uniform bias chain hits the maximal configuration, so must the fluctuating bias chain.
\qed

\subsection{Fast mixing when the fluctuations are bounded}\label{close}
We can also extend the exponential metric technique of Section~\ref{UnifSection} to handle fluctuating bias, provided the biases $\lambda_x$ for $x\in R$ do not vary too much.
\begin{theorem}
\label{FluctTilThm}
Let $R$ be any nice $d$-dimensional region with volume $n$, span $\alpha$, stretch $\gamma$, and suppose the bias at any point $x$ satisfies $1<\lambda_L\leq \lambda_x\leq \lambda_U$.  If the maximum and minimum biases are such that 
\begin{equation}\label{ContCond}
\frac{d}{\sqrt{\lambda_L}}-1-\frac{1}{\lambda_U}\leq -\chi
\end{equation}
for some $\chi\geq 0$, then 

\begin{enumerate}
\item If $\chi> 0$, then the mixing time of $\mfluct$ satisfies $\tau(\varepsilon)=O\left(\chi^{-1}\alpha (\gamma\ln \lambda + \ln n) \ln \varepsilon^{-1}\right).$
\item If $\chi\geq 0$, then $\tau(\varepsilon)=O \left( \alpha n^2(\gamma\ln \lambda + \ln n)^2)\ln(\varepsilon^{-1})\right).$
\end{enumerate}
\end{theorem}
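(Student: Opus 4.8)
The plan is to reuse the argument of Theorem~\ref{FixedTilThm} essentially verbatim, changing only the base of the exponential metric to accommodate the two-sided bound on the bias. Fix a vector $x_0\in R$ of maximal $L_1$ norm, set $\mu=\sqrt{\lambda_L}$, and use
$$\phi(\sigma,\rho)=\sum_{x\in\sigma\oplus\rho}\mu^{\|x_0\|_1-\|x\|_1}.$$
As before, couple $(\sigma_t,\rho_t)$ by feeding the same move (ray, bit $b$, auxiliary value $p$) to both chains, and let $U$ be the set of pairs of downsets differing on a single cube. A valid path realizing $\phi$ additively is obtained by deleting the cubes of $\sigma\setminus\rho$ in order of decreasing $L_1$ norm and then inserting those of $\rho\setminus\sigma$ in order of increasing $L_1$ norm (this passes through $\sigma\cap\rho$, which is itself a valid downset since downsets are closed under intersection). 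We will then verify the hypotheses of Theorem~\ref{GeomThm}.

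The core computation is the drift bound for $(\sigma_t,\rho_t)\in U$ with $\sigma_t=\rho_t\cup\{x\}$. Lemma~\ref{BadBoundLemma} is a statement about downset geometry and does not reference the bias, so at most $d$ of the $2\alpha$ choices of $(v^*,b)$ increase $\phi_t$; they are of the form ``add $x+\ubar_i$'' (probability $1$, increase $\phi_t/\mu$) or ``remove $x-\ubar_i$'' (probability $1/\lambda_{x-\ubar_i}\le 1/\lambda_L$, increase $\phi_t\mu$). With $\mu=\sqrt{\lambda_L}$ both kinds have expected contribution at most $\phi_t/\sqrt{\lambda_L}$, so the total increase is at most $d\,\phi_t/\sqrt{\lambda_L}$. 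The two good moves, adding $x$ (probability $1$) and removing $x$ (probability $1/\lambda_x\ge 1/\lambda_U$), each coalesce the pair and decrease $\phi_t$ by $\phi_t$; here the upper bound $\lambda_U$ enters, giving a contribution of at most $-(1+1/\lambda_U)\phi_t$. Combining,
$$\E[\phi_{t+1}-\phi_t]\le\frac{\phi_t}{2\alpha}\left(\frac{d}{\sqrt{\lambda_L}}-1-\frac{1}{\lambda_U}\right)\le -\frac{\phi_t\,\chi}{2\alpha}$$
by hypothesis~\eqref{ContCond}, so $\E[\phi_{t+1}]\le\beta\phi_t$ with $\beta=1-\chi/(2\alpha)\le 1$.

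It then remains to feed this into Theorem~\ref{GeomThm}. Every nonzero value of $\phi$ is $\ge\mu^{\|x_0\|_1-\|x\|_1}\ge 1$, so $\phi$ takes values in $\{0\}\cup[1,B]$ with $B\le n\lambda_L^{\gamma/2}$, hence $\ln B=O(\gamma\ln\lambda+\ln n)$. If $\chi>0$, part~(1) gives $\tau(\varepsilon)\le\ln(B\varepsilon^{-1})/(1-\beta)=O(\chi^{-1}\alpha(\gamma\ln\lambda+\ln n)\ln\varepsilon^{-1})$. If $\chi\ge 0$, we instead check the variance condition of part~(2): whenever $\sigma_t\ne\rho_t$, adding the lowest cube of $\sigma_t\oplus\rho_t$ (the one of minimum $L_1$ norm, which is always a legal addition to whichever set lacks it) changes $\phi_t$ by its heaviest term, hence by at least $\phi_t/n$, and this move has probability at least $1/(2\alpha)$; so $\p[|\phi_{t+1}-\phi_t|\ge\phi_t/n]\ge 1/(2\alpha)$, and part~(2) with $\eta=1/n$, $\kappa=1/(2\alpha)$, using $\ln^2(1+1/n)=\Theta(1/n^2)$, yields $\tau(\varepsilon)=O(\alpha n^2(\gamma\ln\lambda+\ln n)^2\ln\varepsilon^{-1})$.

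There is no genuinely hard step here — it is a generalization — but the one point that must be gotten exactly right is the choice $\mu=\sqrt{\lambda_L}$: it is forced, since the ``add-above'' bad move requires $1/\mu\le 1/\sqrt{\lambda_L}$ while the ``remove-below'' bad move requires $\mu/\lambda_L\le 1/\sqrt{\lambda_L}$, and only with this value does the analysis collapse onto that of the uniform case, with $\lambda_U$ playing the sole role of keeping the favorable removal move from becoming too rare. I would also be careful to note explicitly that Lemma~\ref{BadBoundLemma} transfers unchanged, since its proof is purely combinatorial; everything else is bookkeeping identical to the proof of Theorem~\ref{FixedTilThm}.
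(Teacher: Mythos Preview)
Your proposal is correct and follows the paper's proof essentially line for line: the same coupling, the same exponential metric with base $\sqrt{\lambda_L}$, the same invocation of Lemma~\ref{BadBoundLemma} (correctly noted to be bias-independent), the same drift computation producing $\E[\phi_{t+1}-\phi_t]\le -\phi_t\chi/(2\alpha)$, and the same appeal to both parts of Theorem~\ref{GeomThm}. The paper in fact simply writes ``the rest of the proof is identical to the proof of Theorem~\ref{FixedTilThm}'' after the drift computation, so your more explicit treatment of the path decomposition and the variance condition only adds detail; the one minor wrinkle is that the coupled ``add'' move on the ray of the minimum-$L_1$ cube may also alter the other chain, so the change in $\phi_t$ is $w(x)$ up to a factor $(1-\mu^{-d})$, but since $\lambda_L>1$ this is a positive constant absorbed into the big-$O$.
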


\begin{proof}
This theorem is proved nearly identically to Theorem~\ref{FixedTilThm}.  In this case we define the distance metric as
$$\phi(\sigma,\rho)=\sum_{x\in\sigma\oplus\rho}(\sqrt{\lambda_L})^{\|x_0\|_1-\|x\|_1}.$$
Given that $\mfluct$ chose $v^*$ and $b$ such that $\phi_{t}$ can increase, the expected increase is at most ${\phi_t}{\lambda_L}^{-1/2}$.  Indeed, if the move is of the form $v=x+\ubar_i$ for some $i$, then the increase in distance is $(\sqrt{\lambda_L})^{\|x_0\|_1 - \|v\|_1}={\phi_t}{\lambda_L}^{-1/2}$.
If the move is of the form $v=x-\ubar_i$ for some $i$, then the increase in distance is $\lambda_L^{-\|v\|_1/2}=\phi_t \sqrt{\lambda_L}$, but the chance of choosing an appropriate $p$ is ${1}/{\lambda_v}\leq {1}/{\lambda_L}$. Therefore the expected increase is again at most ${\phi_t}{\lambda_L}^{-1/2}$.

This implies that the expected change in distance is negative.  
As before, there are at most $d$ bad moves, but now the two good moves happen with probability $1$ and ${1}/{\lambda_x}\geq {1}/{\lambda_U}$, respectively. Therefore the expected change in distance satisfies
$$\E_t[\phi_{t+1}-\phi_t] \leq \frac{1}{2\alpha} \left(d\cdot\frac{\phi_t}{\sqrt{\lambda_L}}-\left(1+\frac{1}{\lambda_U} \right)\phi_t\right) \ \leq \  - \frac{\phi_t\chi}{2\alpha}.$$
The rest of the proof is identical to the proof of Theorem~\ref{FixedTilThm}.
\end{proof}
\comment{
\begin{figure}[ht]
\centering
\setlength{\unitlength}{.25in}
\begin{picture}(4,4)
\thinlines
\put(0,0){\line(0,1){4}}
\put(0,0){\line(1,0){4}}
\put(0,4){\line(1,0){4}}
\put(4,0){\line(0,1){4}}
\comment{
\put(1,0){\line(0,1){4}}
\put(2,0){\line(0,1){4}}
\put(3,0){\line(0,1){4}}

\put(0,1){\line(1,0){4}}
\put(0,2){\line(1,0){4}}
\put(0,3){\line(1,0){4}}
}
\put(1,1){1}
\put(2.35,2.35){1000}
\thicklines
\put(0,4){\line(0,-1){1}}
\put(0,3){\line(1,0){1}}
\put(1,3){\line(0,-1){1}}
\put(1,2){\line(1,0){1}}
\put(2,2){\line(0,-1){1}}
\put(2,1){\line(1,0){1}}
\put(3,1){\line(0,-1){1}}
\put(3,0){\line(1,0){1}}


\comment{
\put(0.35,0.35){1}
\put(1.35,0.35){1}
\put(0.35,1.35){1}
\put(1.35,1.35){1}

\put(0.35,2.35){1}
\put(1.05,2.35){1000}

\put(2.35,0.35){1}
\put(3.05,0.35){1000}
\put(2.05,1.35){1000}
\put(3.05,1.35){1000}
\put(2.05,2.35){1000}
\put(3.05,2.35){1000}
\put(0.05,3.35){1000}
\put(1.05,3.35){1000}
\put(2.05,3.35){1000}
\put(3.05,3.35){1000}
}
\put(0.35,-0.5){1}
\put(1.35,-0.5){2}
\put(2.35,-0.5){3}
\put(3.35,-0.5){4}
\put(-0.5,0.35){1}
\put(-0.5,1.35){2}
\put(-0.5,2.35){3}
\put(-0.5,3.35){4}

\end{picture}
\caption{\footnotesize The bias above the diagonal is 1000, whereas below the diagonal the bias is 1.}
\label{wrongFig}
\end{figure}
}

\comment{
It may be tempting to assume that a distance metric can be defined to handle any set of biases, provided the minimum bias $\lambda_L$ satisfies the conditions of Theorem~\ref{FixedTilThm}.  For example, can we handle any set of biases in 2 dimensions given that $\lambda_L\geq 1$?  To understand what goes wrong, consider the square region in Figure \ref{wrongFig}.  Assume there were such an assignment of weights $\alpha_{x,y}$ for each tile $(x,y)$.  In this case, $\lambda_{x,y}=1000$ whenever $x+y>4$, and $\lambda_{x,y}=1$ otherwise.  We will obtain a system of equations for each tile on which $\sigma $ and $\rho$ could differ.  If $\sigma$ and $\rho$ differ on $(2,2)$, for example, we must have that 
$$2\alpha_{2,2} \geq \alpha_{3,2}+ \alpha_{2,3}.$$
On the other hand, they could differ on tile $(3,2)$, so we must have 
$$\left(1+\frac{1}{1000}\right) \alpha_{3,2}\geq \alpha_{2,2}+\alpha_{3,1}$$
and if they differ on tile $(3,1)$ we need
$$2 \alpha_{3,1}\geq \alpha_{3,2}+\alpha_{4,1}.$$
Combining these inequalities it must be true that 
$$\left(1+\frac{1}{1000}\right) \alpha_{3,2}\geq \frac{1}{2}(\alpha_{3,2}+ \alpha_{2,3})+\frac{1}{2}(\alpha_{3,2} + \alpha_{4,1})>\alpha_{3,2}+ \frac{\alpha_{2,3}}{2},$$
and so $\alpha_{3,2}\geq 500\alpha_{2,3}$, but by symmetry we also have $\alpha_{2,3}\geq 500\alpha_{3,2}$, which is a contradiction.  Thus 
}


\subsection{Slow mixing when the fluctuations are unbounded}\label{SlowSection}
Note that we have restricted throughout this chapter to the case when $\lambda_{x,y} \geq 1$ for all $(x,y)$.  This restriction is necessary, or the chain might not be rapidly mixing.  For example, it can be shown that if $\lambda_{x,y} <1$ when $x+y \leq n$ and $\lambda_{x,y} > 1$ when $x+y > n$, then $\mfluct$ requires exponential time to converge.  Indeed it will be difficult to 
move from a tiling that is nearly empty to one that is nearly full, even though these each occupy 
a constant fraction of the stationary probability.  In fact, we will see presently that even if the minimum bias satisfies $\lambda_L> 1$, the Markov chain may have exponential mixing time.

Let $R$ be an $n\times n$ square region in 2 dimensions.  We consider the following question.  Suppose that for all $(x,y)\in R$, $\lambda_{x,y}\geq \lambda_L>1.$  We know by Theorem~\ref{FixedTilThm} that for uniform bias, the Markov chain is rapidly mixing for all $\lambda>1$ polynomially bounded away from $1$ and the result is in general easier to show for larger values of $\lambda$.  This leads us to expect rapid mixing in the fluctuating bias case, as long as the minimum bias $\lambda_L$ is polynomially bounded away from $1$.  However, this is not true in general:
\begin{theorem}\label{SlowSimple}
There exists a set of biases $\{\lambda_x\}$ on a square region of $\Z^2$ where $\lambda_x>1$ for all $x$ and yet the mixing time of $\mmon$ is $ \Omega(e^{n/24}).$
\end{theorem}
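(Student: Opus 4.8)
The plan is to prove slow mixing via a conductance (bottleneck) obstruction. Recall the standard lower bound: for a reversible chain with stationary distribution $\pi$, if $S\subseteq\Omega$ has $\pi(S)\le 1/2$, then $\tau(1/4)=\Omega\!\left(\pi(S)\big/Q(S,S^c)\right)$, where $Q(S,S^c)=\sum_{\sigma\in S,\,\tau\notin S}\pi(\sigma)P(\sigma,\tau)$ is the ergodic flow across the cut. Since $P(\sigma,\tau)\le 1$ and each downset has at most $2\alpha=O(n)$ neighbours in the transition graph, $Q(S,S^c)\le 2\alpha\,\pi(\partial S)$, where $\partial S$ is the set of configurations in $S$ having a neighbour outside $S$. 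So it is enough to design a bias field $\{\lambda_x\}$ with $1<\lambda_x\le n^{O(1)}$ (which automatically keeps every move at inverse-polynomial probability, as required) together with a set $S$ for which $\pi(S)=\Omega(1)$ while $\pi(\partial S)\le e^{-n/24}$ up to polynomial factors; this immediately gives $\tau(\varepsilon)=\Omega(e^{n/24})$.

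The conceptual point, and the reason the restriction $\lambda_x>1$ forces more work than the remark about $\lambda_{x,y}<1$ below the anti-diagonal, is that the bottleneck here must be \emph{entropic} rather than energetic. Because every bias exceeds $1$, the log-weight $\log\pi(\sigma)=\sum_{x\in\sigma}\log\lambda_x$ is strictly increasing in the downset order, and for any two downsets $\sigma,\rho$ the route $\sigma\leadsto\sigma\cup\rho\leadsto\rho$ consists of legal single-cube moves (add the cubes of $\rho\setminus\sigma$ one at a time, then delete those of $\sigma\setminus\rho$; each such move keeps a valid downset) along which $\pi$ never drops below $\min(\pi(\sigma),\pi(\rho))$. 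Hence no cut can be ``high energy'' on both sides, and a thin cut must instead be one whose boundary contains exponentially \emph{few} configurations per unit of $\pi$-weight. I would arrange this by taking the biases large, of order $n^{c}$, on a carefully shaped sub-collection of cells --- which therefore gets filled with overwhelming probability --- and barely above $1$, of order $1+2^{-n}$ (which is permitted), on the remaining cells, with the shape chosen so that, conditioned on the heavily-biased cells being full, the reachable configurations split into two macroscopic families whose common boundary (the downsets ``committed to neither side'') has exponentially small cardinality, while a symmetry of the construction gives each family a constant share of $\pi$. The set $S$ is one of the two families.

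The main obstacle is the pair of quantitative estimates feeding the bottleneck bound: that each family carries $\pi$-mass bounded away from $0$, and that $\pi(\partial S)\le e^{-n/24}$. Both reduce to partition-function computations for the weighted downset model, i.e.\ to estimating (bias-twisted) Gaussian-binomial generating functions for staircase walks inside the relevant sub-regions; the unweighted staircase-walk counts used in the proof of Theorem~\ref{2DFluctSimple} are the template. The second estimate is the crux, and is cleanest via a Peierls/contour argument: attach to each configuration in $\partial S$ a contour recording exactly where it fails to be committed to either side, verify that distinct boundary configurations carry distinct contours, and bound the number of contours of each weight --- the exponent $1/24$ is the resulting entropy deficit, coming from a count of the possible contour profiles along an $\Theta(n)$-long front. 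If the explicit construction proves unwieldy, the fallback is a direct hitting-time argument: define a statistic $\Phi(\sigma)$, say a signed count of cubes on one side of the separating front, that equals $0$ near one family, must exceed a threshold of order $n$ to reach the other, and whose one-step increments are bounded in magnitude and have negative drift throughout the intermediate regime; then $e^{\theta\Phi_t}$ is a supermartingale for a suitable $\theta>0$, so an optional-stopping estimate of the type used for Lemma~\ref{MartingaleLemma} shows the crossing time is $e^{\Omega(n)}$ from any start inside either family, which again yields $\tau(\varepsilon)=\Omega(e^{n/24})$ because both families have constant stationary mass.
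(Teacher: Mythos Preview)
Your high-level framework is the same as the paper's: a conductance (bottleneck) argument, with the cut being entropic rather than energetic. Your monotonicity observation---that the path $\sigma\leadsto\sigma\cup\rho\leadsto\rho$ never loses $\pi$-weight when all $\lambda_x>1$---is correct and is a good explanation of \emph{why} the bottleneck must be entropic. However, the proposal stops short of a proof: you never actually produce the bias field, the set $S$, or the estimates, and the hints you give (a ``carefully shaped'' heavy region, two families separated by a ``front,'' a symmetry balancing the two sides, a Peierls contour count) are not concrete enough to verify. In particular, the symmetry idea is suspect: the only structure-preserving involution of downsets that swaps two macroscopic families is complementation-plus-rotation, and that sends $\lambda_x$ to $1/\lambda_{x'}$, which is incompatible with $\lambda_x>1$ everywhere. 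A left--right reflection preserves downsets but does not by itself create two well-separated families once the heavy region is filled; you would still need to explain what statistic distinguishes the two sides and why its level sets are exponentially thin.

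The paper's construction is much more direct and avoids these issues. Take the $n\times n$ square, set $M=n-\sqrt{n}$, and let $\lambda_{x,y}=1+\tfrac{1}{4n}$ for $x+y\le n+M$ and $\lambda_{x,y}=\xi$ (a constant to be chosen) for $x+y>n+M$. The cut is simply by the \emph{maximum height} of the staircase walk: $S_1=\{\max<M\}$, $S_2=\{\max=M\}$, $S_3=\{\max>M\}$. Because the biases below level $M$ are $1+O(1/n)$, every configuration in $S_1\cup S_2$ has weight between $1$ and $(1+\tfrac{1}{4n})^{n^2}\le e^{n/4}$, so $\pi(S_2)/\pi(S_1)$ is governed by the \emph{ratio of cardinalities}, and a standard Chernoff bound on the tethered simple random walk gives $|S_2\cup S_3|/\binom{2n}{n}\le 6n^{3/2}e^{-n/3}$, yielding $\pi(S_2)/\pi(S_1)\le e^{-n/24}$. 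No Peierls argument is needed. Finally, the two sides are balanced not by symmetry but by the intermediate value theorem: $Z\pi(S_1)$ is independent of $\xi$, while $Z\pi(S_3)$ is a polynomial in $\xi$ that is below $eZ\pi(S_1)$ at $\xi=1+\tfrac{1}{4n}$ and above it at $\xi=4e^{2.5}$, so some constant $\xi$ gives $\pi(S_3)=e\pi(S_1)$. Then $\pi(S_1)\le 1/2$, $\Phi\le\pi(S_2)/\pi(S_1)\le e^{-n/24}$, and the conductance lower bound on mixing finishes.

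So: your diagnosis of the mechanism is right, but the paper's realization of it is both simpler (one diagonal threshold, one Chernoff bound, one continuity argument) and complete, whereas your plan still has its central construction and its main estimate unspecified.
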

We show that if the biases below the line $x+y=n+M$ (where $M=n-\sqrt{n}$) are all close to 1 and all other biases are some very large constant $\xi$, then the mixing time of $\mmon$ is exponentially large in $n$ (see Figure~\ref{fluctuatingbias}).  We identify sets $S_1,S_2,S_3$ such that $\pi(S_2)$ is
exponentially smaller than both $\pi(S_1)$ and $\pi(S_3)$, which have equal weight, but to get between
$S_1$ and $S_3$, $\mnn$ and $\mt$ must pass through $S_2$, the cut.   This bad cut prevents the Markov chain from mixing rapidly, regardless of the initial configuration.

  To formalize these ideas, we will bound the conductance of the Markov chain.  The
 \emph{conductance} of an ergodic Markov chain~$\m$ with stationary distribution $\pi$ is
$$\Phi_\m = \min_{{S\subseteq\Omega}\atop{ \pi(S) \leq 1/2}} \phi_S,$$
where $\phi_S=\phi_S^{(\m)}$ is the {\em conductance of a set} $S\subset\Omega$, defined by
$$\phi_S = \frac{1}{\pi(S)}\sum_{s_1\in S, s_2\in \bar{S}}\pi(s_1)P(s_1,s_2).$$

\noindent Essentially, $S$ is a bad cut if $\phi_S$ is exponentially small.  The existence of such a set $S$ prevents the Markov chain from mixing rapidly~\cite{JS89}:
 \begin{theorem}
   \label{conductance}
   For any Markov chain with conductance $\Phi$, 
   \ $\tau \geq ({4\Phi})^{-1} - 1/2.$
 \end{theorem}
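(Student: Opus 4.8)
The plan is to exhibit a starting distribution that, because of the small cut, stays far from $\pi$ for many steps. Take a set $S\subseteq\Omega$ with $\pi(S)\le 1/2$ and $\phi_S=\Phi$ (on a finite state space the minimum is attained; in general one takes $\phi_S\le\Phi+\delta$ and lets $\delta\to 0$ at the end). Start the chain from $\mu_0$, the stationary distribution conditioned on $S$: $\mu_0(x)=\pi(x)/\pi(S)$ for $x\in S$ and $\mu_0(x)=0$ otherwise. I will show that after $t$ steps the chain has left $S$ with probability at most $t\Phi$, so $\mu_t$ still concentrates on $S$ while $\pi$ places at least half its mass on $\bar S$; the stated bound then follows from the definition of $\tau$.

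The crux is a domination estimate for the \emph{killed} chain. Let $P_S$ be the substochastic kernel obtained from $P$ by restricting to $S$ (killing the walk the first time it leaves $S$), and let $\nu_s$ be the sub-probability law at time $s$ of this killed chain started from $\mu_0$. I would prove by induction on $s$ that $\nu_s(x)\le\mu_0(x)$ for all $x\in S$: the base case is immediate, and for the inductive step, for $y\in S$,
$$\nu_{s+1}(y)=\sum_{x\in S}\nu_s(x)P(x,y)\le\sum_{x\in S}\mu_0(x)P(x,y)=\frac{1}{\pi(S)}\sum_{x\in S}\pi(x)P(x,y)\le\frac{1}{\pi(S)}\sum_{x\in\Omega}\pi(x)P(x,y)=\frac{\pi(y)}{\pi(S)}=\mu_0(y),$$
the final equality being stationarity of $\pi$. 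This is the only place the hypothesis is genuinely used, and getting this step stated and applied cleanly is where I expect the real work to be; everything afterward is bookkeeping.

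Given the domination estimate, the probability that the chain started from $\mu_0$ first leaves $S$ precisely at step $s$ is $\sum_{x\in S}\nu_{s-1}(x)P(x,\bar S)\le\sum_{x\in S}\mu_0(x)P(x,\bar S)=\frac{1}{\pi(S)}\sum_{x\in S}\pi(x)P(x,\bar S)=\phi_S=\Phi$, so a union bound over $s=1,\dots,t$ gives $\p_{\mu_0}[X_t\in\bar S]\le t\Phi$. Using $d_{tv}(\nu,\pi)=\max_A|\nu(A)-\pi(A)|$ with $A=\bar S$ and $\pi(\bar S)\ge 1/2$,
$$d_{tv}(\mu_t,\pi)\ \ge\ \pi(\bar S)-\p_{\mu_0}[X_t\in\bar S]\ \ge\ \tfrac12-t\Phi.$$
Since $\mu_0$ is a convex combination of point masses on states of $S$, convexity of $\nu\mapsto d_{tv}(\nu,\pi)$ produces some $x_0\in S$ with $d_{tv}(P^t(x_0,\cdot),\pi)\ge d_{tv}(\mu_t,\pi)$, hence $\max_x d_{tv}(P^t(x,\cdot),\pi)\ge\tfrac12-t\Phi$ for every $t$. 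Evaluating at $t=\tau$ (the first time this maximum drops to $1/4$) gives $\tfrac14\ge\tfrac12-\tau\Phi$, i.e.\ $\tau\ge\tfrac{1}{4\Phi}\ge\tfrac{1}{4\Phi}-\tfrac12$, which is the claimed inequality; the additive $-\tfrac12$ is harmless slack (and can be recovered exactly by keeping track of the integrality of $\tau$).
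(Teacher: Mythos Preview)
The paper does not actually prove Theorem~\ref{conductance}; it is quoted as a known result with a citation to Jerrum and Sinclair~\cite{JS89} and then applied as a black box in the proof of Theorem~\ref{SlowSimple}. So there is no ``paper's own proof'' to compare against.

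Your argument is correct and is essentially the standard proof of this lower bound. The key step---the pointwise domination $\nu_s(x)\le\mu_0(x)$ for the killed chain---is stated and proved cleanly, using only stationarity of $\pi$ (no reversibility is needed for this direction, and you do not invoke it). The escape-probability bound, the union bound over exit times, and the passage from the averaged initial law $\mu_0$ to a worst-case start via convexity of total variation are all sound. As you note, your argument in fact yields the slightly stronger $\tau\ge 1/(4\Phi)$; the $-1/2$ in the paper's statement is slack.
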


\begin{figure}[htb]
\centering
\includegraphics[scale=.08]{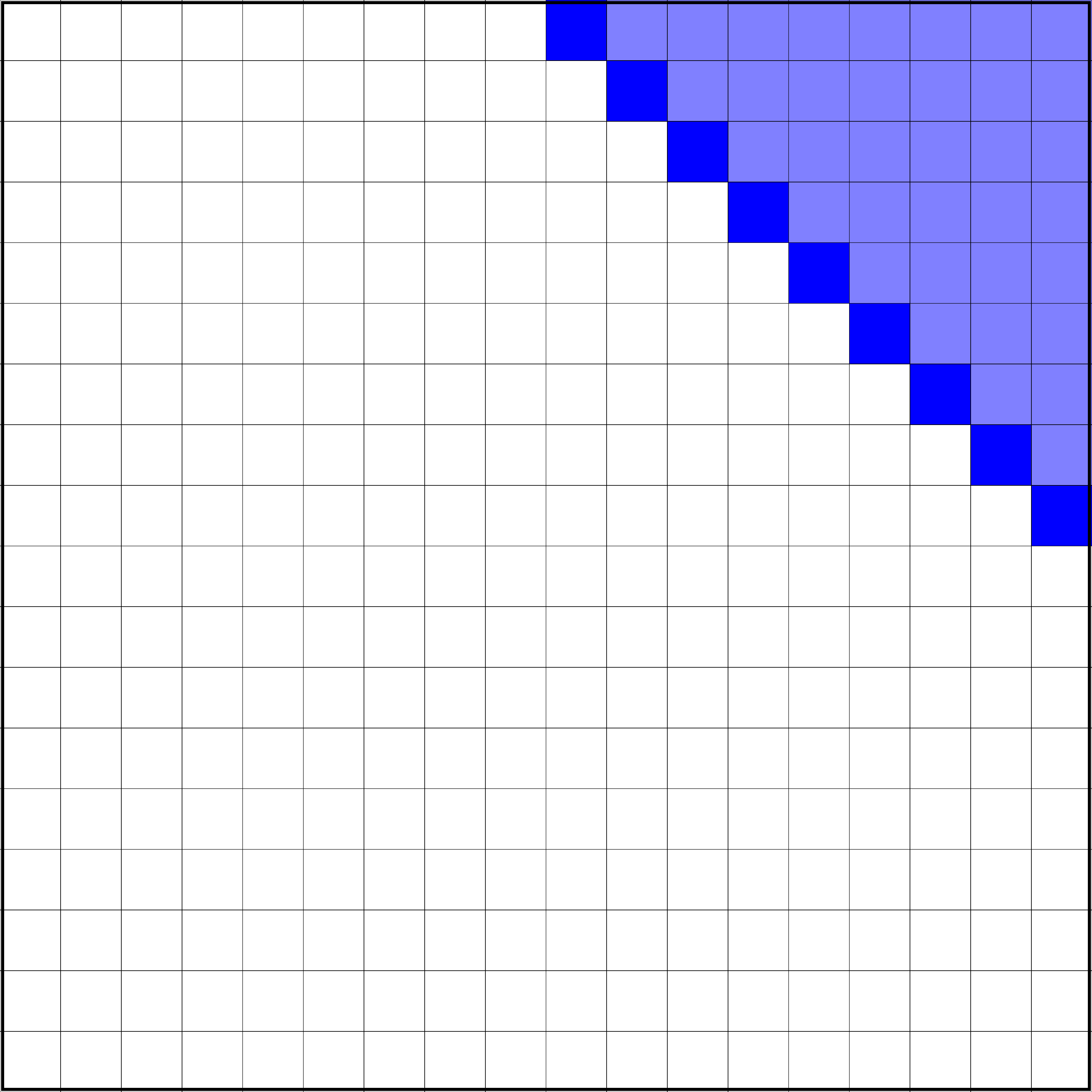}
\put(-80,50){$1 + \epsilon$}
\put(-20,95){$\xi$}
\tiny
\put(-90,118){$M$}
\put(-40,118){$n-M$}
  \hspace{1in}
   \includegraphics[scale=.08]{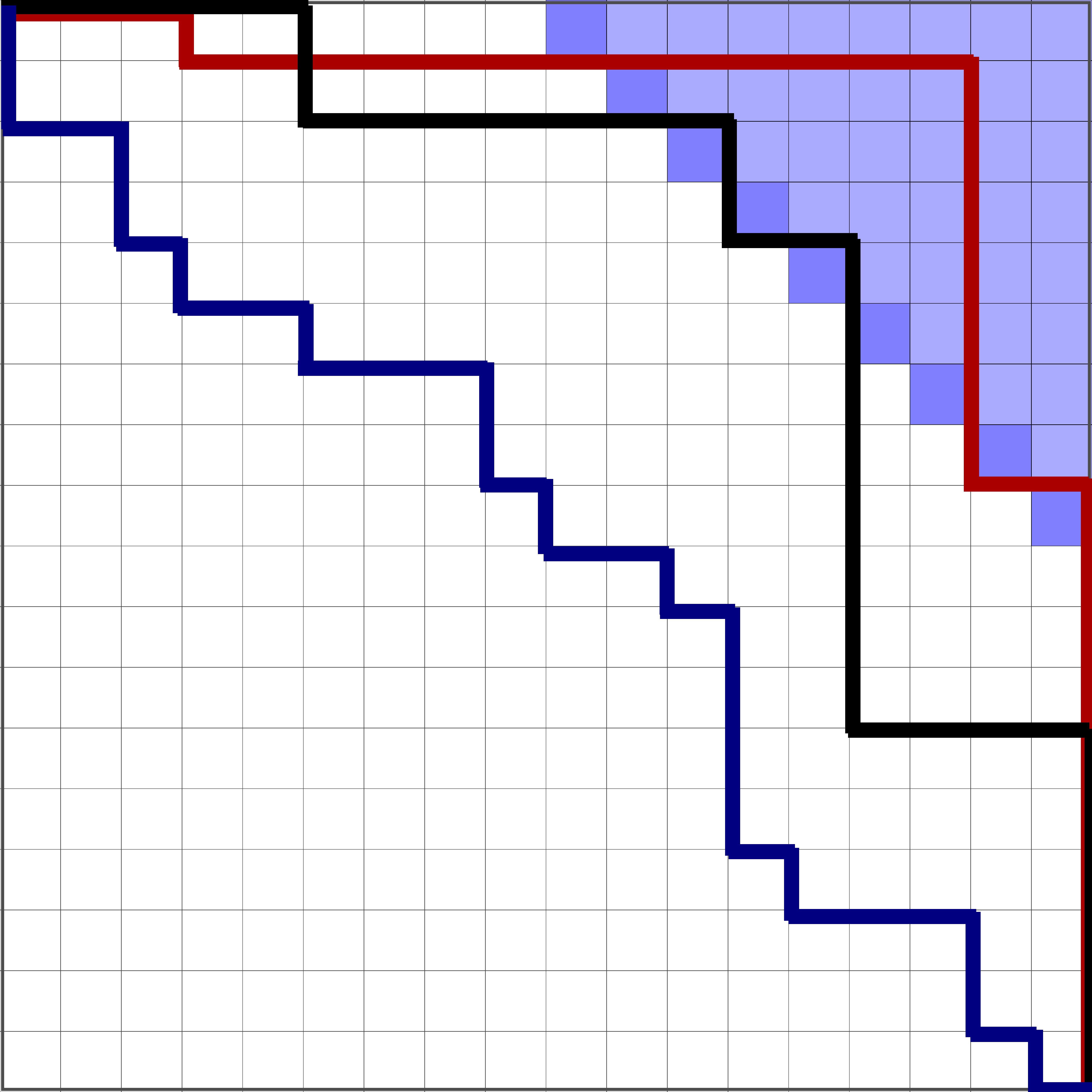}
\caption[ Fluctuating bias with exponential mixing time]{(a) Fluctuating bias with exponential mixing time. (b) Staircase walks in $S_1, S_2,$ and $S_3$.}
\label{fluctuatingbias}
\end{figure}

\vspace{.2in}
\noindent {\em Proof of Theorem \ref{SlowSimple}.} \ 
We will define values $\{\lambda_{x,y}\}_{(x,y)\in R}$ such that the conductance of the Markov chain $\mfluct$ is small.  Let $M= n-\sqrt{n}.$  For all $(x,y)$ such that $x+y\leq n+M$, define $\lambda_{x,y}=1+\epsilon$, where $\epsilon = \frac{1}{4n}$.  For all remaining $(x,y),$ let $\lambda_{x,y}=\xi,$ where $\xi>1$ will be defined later.
  For a staircase walk $\sigma$ consisting of a sequence of steps $\sigma_{i}\in\{\pm 1\}$, define the \emph{height of $\sigma_i$} as $\sum_{j\leq i} \sigma_{j}$, and let $\max(\sigma) $ be the maximum height of $\sigma_i$ over all $1\leq i\leq 2n$.
Let $S_1$ be the set of configurations $\sigma$ such that $\max(\sigma)< n+M$, $S_2$ the set of configurations such that $\max(\sigma)= n+M$, and $S_3$ the set of configurations such that $\max(\sigma)> n+M$.

First we will show that $\pi(S_2)$ is exponentially smaller than $\pi(S_1)$ for all values of $\xi$.  Let $A(\sigma)$ be the number of tiles contained in $\sigma$.  Then

$$\pi(S_2)=\frac{1}{Z}\sum_{\sigma\in S_2}(1+\epsilon)^{A(\sigma)}\leq \frac{(1+\epsilon)^{n^2-n/2}|S_2|}{Z},$$
since there are at most $n^2-(n-M)^2/2=n^2-n/2$ tiles with weight $1+\epsilon$ in $\sigma$.  By the definition of $\epsilon$, we have $(1+\epsilon)^{n^2-n/2} = (1+  \frac{1}{4n})^{n^2-n/2}  \leq  e^{n/4 - 1/8}. $  Hence $\pi(S_2)\leq e^{n/4-1/8}|S_2|/Z$.

Next we will bound $|S_2\cup S_3|$, which in turn provides an upper bound on $|S_2|$.  The unbiased Markov chain is equivalent to a simple random walk $W_{2n}=X_1+X_2+\cdots +X_{2n}=0$, where $X_i\in \{+1,-1\}$ and where a $+1$ represents a step to the right and a $-1$ represents a step down.  We call this random walk {\it tethered} since it is required to end at $0$ after $2n$ steps.  Compare walk $W_{2n}$ with the untethered simple random walk $W_{2n}'=X_1'+X'_2+\ldots + X'_{2n}$.
\begin{eqnarray}
P\left(\max_{1\leq t\leq 2n} W_t\geq M\right)&=&P\left(\max_{1\leq t\leq 2n} W'_t\geq M ~|~ W'_{2n}=0\right)\nonumber\\
&=&\frac{ P\left(\max_{1\leq t\leq 2n} W'_t\geq M\right)}{ P(W'_{2n}=0) }\nonumber\\
&=&  \frac{2^{2n}}{\binom{2n}{n}}P\left(\max_{1\leq t\leq 2n} W'_t\geq M\right)\nonumber\\
&\leq&\frac{e^2\sqrt{n}}{2\sqrt{ \pi}}~~P\left(\max_{1\leq t\leq 2n} W'_t\geq M\right)\label{eq:sterling}\\
&<&3\sqrt{n}~~P\left(\max_{1\leq t\leq 2n} W'_t\geq M\right),\nonumber
\end{eqnarray}
where~\ref{eq:sterling} follows from Sterling's formula. Since the $\{X'_i\}$ are independent, we can use Chernoff bounds to see that

$$P\left(\max_{1\leq t\leq 2n}W'_t\geq M\right) \ \leq \  2nP(W'_{2n}\geq M) \ \leq \  2n e^{\frac{-M^2}{2n}}.$$
Notice that $M^2/(2n) = (n-\sqrt{n})^2/(2n)= (\sqrt{n}-1)^2/2\geq n/3 $ for $n\geq 4$.
Together these show that $P\left(\max_{1\leq t\leq 2n} W_t\geq M\right)\leq 6 n^{3/2}e^{-n/3}$.  In particular, 
$$|S_2\cup S_3|\leq 6 \binom{2n}{n}n^{3/2} e^{-n/3}.$$
Therefore we have
\begin{eqnarray*}
\frac{\pi(S_2)}{\pi(S_1)}&\leq &\frac{\frac{1}{Z} e^{n/4-1/8}|S_2\cup S_3|}{\frac{1}{Z} \left(\binom{2n}{n} - |S_2\cup S_3|\right)}\\
&\leq&  e^{n/4-1/8} \left(\frac{\binom{2n}{n}}{|S_2\cup S_3|} - 1\right)^{-1}\\
&\leq&  e^{n/4-1/8}\left(\frac{\binom{2n}{n}}{ 6\binom{2n}{n}n^{3/2} e^{-n/3}} - 1\right)^{-1}\\
&\leq& \frac{6e^{n/4-1/8}n^{3/2}}{e^{n/3} - 6n^{3/2}} \ < \ e^{-n/24},
\end{eqnarray*}
for large enough $n$.  Therefore, $\pi(S_2)$ is exponentially smaller than $\pi(S_1)$ for every value of $\xi$.

Our next goal is to show that there exists a value of $\xi$ for which $\pi(S_3)=e\pi(S_1)$, which will imply that $\pi(S_2)$ is also exponentially smaller than $\pi(S_3)$, and hence the set $S_2$ forms a bad cut, regardless of which state the Markov chain begins in.  To find this value of $\xi$, we will rely on the continuity of the function $g(\xi) = Z\pi(S_3) - Z\pi(S_1)$ with respect to $\xi$.  Notice that $Z\pi(S_1)$ is constant with respect to $\xi$ and 
$$Z\pi(S_3) = \sum_{\sigma\in S_3 } (1+\epsilon)^{A_b(\sigma)} \xi^{A_a(\sigma)},$$
(where $A_b(\sigma)$ is the number of tiles below the diagonal $M$ in $\sigma$ and $A_a(\sigma)$ is the number of tiles above the diagonal $M$ in $\sigma$)
 is just a polynomial in $\xi$.  Therefore $Z\pi(S_3)$ is continuous in $\xi$ and hence $g(\xi)$
is also continuous with respect to $\xi$.  
We will show that $g(1+\epsilon) <0$ and  $g(4e^{2.5})>0$, so by continuity we will conclude that there exists a value
  of $\xi$ satisfying $1+\epsilon < \xi < 4e^{2.5}$ for which $g(\xi)=0$ and
  $Z\pi(S_3)=eZ\pi(S_1)$.  Clearly this implies that for this choice of $\xi$,
  $\pi(S_3)=e\pi(S_1)$, as desired. 

First, we will show that when $\xi = (1+\epsilon)$, $Z\pi(S_3)< eZ\pi(S_1)$, so $g(1+\epsilon)<0$.  
This is easy to see, as
\begin{equation*}
\pi(S_3)~\leq~ \frac{1}{Z}|S_3| \gamma^{n^2} ~\leq ~\frac{1}{Z}|S_3| e~\leq ~\frac{1}{Z}|S_1| e~
\leq~ e\pi(S_1).
\end{equation*}

Next we will show that $g(4e^{2.5})>0$.
First we notice that since the maximal tiling is in $S_3$, 
$$\Pi(S_3)\geq {Z}^{-1} (1+\epsilon)^{n^2-\frac{(n-M)^2}{2}}\xi^{\frac{(n-M)^2}{2}}.$$
Also, 
$$\Pi(S_1)={Z}^{-1}\sum_{\sigma\in S_1}(1+\epsilon)^{A_a(\sigma)} \ < \ Z^{-1}\binom{2n}{n}(1+\epsilon)^{n^2-\frac{(n-M)^2}{2}}.$$
  Therefore

$$e\pi(S_1)/\pi(S_3)< \frac{e\binom{2n}{n}}{\xi^{\frac{(n-M)^2}{2}}}
\leq {e(2e)^n}{\xi^{-n/2}} \leq 1 $$
for $n\geq 4$, since $\xi=4e^{2.5}$.  Hence $g(4e^{2.5}) = Z\pi(S_3) - eZ\pi(S_1)> 0,$ as desired.

 Finally, we may analyze the conductance.  
 $$\Phi \leq \phi_{S_1} \leq
\frac{1}{\pi(S_1)}  \sum_{x\in S_1}\pi(x)\sum_{y\in S_2} P(x,y)\leq  
\frac{1}{\pi(S_1)} \sum_{x\in
  S_1}{\pi(x)}\pi(S_2)\leq 
  e^{-n/24}.$$

  \noindent Hence, by Theorem~\ref{conductance}, the mixing time of
  $\mmon$ satisfies
  $$\tau\geq ({4e^{-n/24}})^{-1} - 1/2 \geq e^{n/24}/4-1/2.$$
\hfill \square
\vskip.2in

\comment{

\begin{proof}
We will define values $\{\lambda_{x,y}\}_{(x,y)\in R}$ such that the conductance of the Markov chain $\mfluct$ is small.  Let $M= 2n^{2/3}.$  For all $(x,y)$ such that $x+y\leq n+M$, define $\lambda_{x,y}=\epsilon = 1+\frac{c_1}{n^2}$, where $c_1\in \mathbb{R}$ is a constant.  For all remaining $(x,y),$ let $\lambda_{x,y}=\xi,$ where $\xi>1$ is a constant.
  For a staircase walk $\sigma$, define the \emph{height of $\sigma_i$} as $\sum_{j\leq i} \sigma_{j}$, and let $\max(\sigma) $ be the maximum height of $\sigma_i$ over all $1\leq i\leq 2n$.
Let $S_1$ be the set of configurations $\sigma$ such that $\max(\sigma)< n+M$, $S_2$ the set of configurations such that $\max(\sigma)= n+M$, and $S_3$ the set of configurations such that $\max(\sigma)> n+M$.

First we notice that since the maximal tiling is in $S_3$, 
$$\Pi(S_3)\geq \frac{1}{Z} (1+\epsilon)^{n^2-\frac{(n-M)^2}{2}}(\xi)^{\frac{(n-M)^2}{2}}.$$
Also, $\Pi(S_1)=\frac{1}{Z}\sum_{\sigma\in S_1}(1+\epsilon)^{A(\sigma)}$, where $A(\sigma)$ is the number of tiles contained in $\sigma$.  Therefore
\begin{eqnarray*}
\Pi(S_1)&\leq &\frac{1}{Z}\sum_{\sigma\in S_1}(1+\epsilon)^{n^2-\frac{(n-M)^2}{2}}\\
&\leq &\frac{1}{Z}\binom{2n}{n}(1+\epsilon)^{n^2-\frac{(n-M)^2}{2}}\\
&\leq&\frac{1}{Z}(2e)^{n}(1+\epsilon)^{n^2-\frac{(n-M)^2}{2}}\\
&\leq&\frac{1}{Z} (1+\epsilon)^{n^2-\frac{(n-M)^2}{2}}(\xi)^{\frac{(n-{M})^2}{2}} \ \leq\ \Pi(S_3)
\end{eqnarray*}
since $\xi$ is a constant larger than $1$.  Hence $\Pi(S_1)\leq \Pi(S_3)$.  We will show that $\Pi(S_2)$ is exponentially small in comparison to $\Pi(S_1)$ (and hence also to $\Pi(S_3)$).  
$$\Pi(S_2)=\frac{1}{Z}\sum_{\sigma\in S_2}(1+\epsilon)^{A(\sigma)}\leq \frac{(1+\epsilon)^{n^2}|S_2|}{Z}.$$

We bound $|S_2|$ as follows.  The unbiased Markov chain is equivalent to a simple random walk $W_{2n}=X_1+X_2+\cdots +X_{2n}=0$, where $X_i\in \{+1,-1\}$ and where a $+1$ represents a step to the right and a $-1$ represents a step down.  We call this random walk {\it tethered} since it is required to end at $0$ after $2n$ steps.  Compare walk $W_{2n}$ with the untethered simple random walk $W_{2n}'=X_1'+X'_2+\ldots + X'_{2n}$.
\begin{eqnarray*}
P\left(\max_{1\leq t\leq 2n} W_t\geq M\right)&=&P\left(\max_{1\leq t\leq 2n} W'_t\geq M ~|~ W'_{2n}=0\right)\\
&=&\frac{ P\left(\max_{1\leq t\leq 2n} W'_t\geq M\right)}{ P(W'_{2n}=0) }\\
&=&  \frac{2^{2n}}{\binom{2n}{n}}P\left(\max_{1\leq t\leq 2n} W'_t\geq M\right)\\
&\approx&\sqrt{\pi n}~~P\left(\max_{1\leq t\leq 2n} W'_t\geq M\right).
\end{eqnarray*}
Since the $\{X'_i\}$ are independent, we can use Chernoff bounds to see that

$$P\left(\max_{1\leq t\leq 2n}W'_t\geq M\right) \ \leq \  2nP(W'_{2n}\geq M) \ \leq \  2n e^{\frac{-M^2}{2n}}.$$
Together these show that $P\left(\max_{1\leq t\leq 2n} W_t\geq M\right)< e^{-n^{1/3}}$, by definition of $M$.  Therefore we have
\begin{eqnarray*}
\Pi(S_2)&\leq &\frac{1}{Z}(1+\epsilon)^{n^2}|S_2\cup S_3|\\
&\leq& \frac{1}{Z}\binom{2n}{n} e^{-n^{1/3}}\\
&\leq&\frac{1}{Z}\binom{2n}{n}e^{-n^{1/3}+1}(1- e^{-n^{1/3}}) \\
&\leq&\frac{1}{Z}|S_1|e^{-n^{1/3}+1} \\
&\leq&e^{-n^{1/3}+1}\Pi(S_1),
\end{eqnarray*}
as desired.  Thus, the conductance satisfies $$\phi \leq
  \sum_{x\in S_1}\frac{\pi(x)}{\pi(S_1)} \sum_{y\in S_2} P(x,y)\leq  \sum_{x\in
  S_1}\frac{\pi(x)}{\pi(S_1)}\pi(S_2)\leq e^{-n^{1/3}+1}\pi(S_1)\leq
  e^{-n^{1/3}+1}/2.$$ Hence, by Theorem~\ref{conductance}, the mixing time of
  $\mmon$ is at least $ e^{n^{1/3}-1}/4 - 1/2.$
\end{proof}

}

\section{Conclusions}
\label{Conclusion}

In this paper, we showed that the DNA-based self-assembly is efficient for any uniform bias in two dimensions and for large enough bias in higher dimensions.  We also gave the first analysis of self-assembly with fluctuating bias, showing that as long as the biases do not differ by too much, the assembly is still rapid.
We have made several contributions to the study of biased card shuffling as well.  The bound in Theorem~\ref{2DSimple} on the mixing time of $\mmon$ when $d=2$ yields a simpler proof that the nearest-neighbor transposition chain on biased permutations is rapidly mixing, using the bijection from Benjamini et al.~\cite{BBHM05}.  In fact, we achieved the same optimal bounds on the mixing time.  Recently our improved bounds on the mixing time of $\mmon$ for rectangular regions were used to show a tighter bound on the mixing time of the nearest-neighbor transposition chain for a generalization of biased permutations arising in the context of self-organizing lists~\cite{bmrs}.

The techniques from Section~\ref{generalization} can be extended to other applications, such as biased 3-colorings. There is a well-known bijection between $3$-colorings of $\Z^2$ and sets of monotonic, {\it edge}-disjoint paths (see, e.g. \cite{lrs}).  The construction generalizes to arbitrary dimension as well, forming ($d-1$)-dimensional monotonic surfaces that are {\it face}-disjoint.  There is a Markov chain $\mcol$ arising in the context of asynchronous cellular automata which samples biased 3-colorings.  Our technique can be used to show that as long as the bias satisfies $\lambda\geq 4d^2$, the mixing time of $\cM_{Col}$ satisfies $\tau(\varepsilon)=O\left(n^2 \ln(\varepsilon^{-1})\right).$  We expect that there may be other situations in which this technique could be useful as well.

\bibliographystyle{plain}
\bibliography{bib}
\end{document}